\documentclass[11pt]{article}

\usepackage[margin=1in]{geometry}
\usepackage{amsmath}
\usepackage{appendix}
\usepackage{amsthm}
\usepackage{bbm}
\usepackage{amssymb}
\usepackage{setspace}
\usepackage{color}
\usepackage{float}
\usepackage{dsfont}
\usepackage{enumerate}
\usepackage{hyperref}
\usepackage[capitalize]{cleveref}
\usepackage{graphicx}
\usepackage{algpseudocode,algorithm,algorithmicx}
\usepackage{tikz}
\usepackage{subcaption} 
\usepackage{wrapfig}
\usepackage{makecell}
\usepackage{mathdots} 

\newtheorem{theorem}{Theorem}
\newtheorem{proposition}[theorem]{Proposition}
\newtheorem{lemma}[theorem]{Lemma}

\newtheorem{observation}[theorem]{Observation}

\newtheorem{assumption}[theorem]{Assumption}
\theoremstyle{definition}

\newtheorem{remark}[theorem]{Remark}
\numberwithin{theorem}{section} 

\numberwithin{equation}{section}

\usepackage{setspace}

\def\({\Big(}
\def\){\Big)}
\def\C{\mathbb{C}}
\def\D{\mathbb{D}}
\def\E{\mathbb{E}}

\def\R{\mathbb{R}}

\def\T{\mathbb{T}}

\def\Z{\mathbb{Z}}

\renewcommand{\subset}{\subseteq}

\renewcommand{\tilde}{\widetilde}
\renewcommand{\epsilon}{\varepsilon}

\def\tr{{\rm tr}}

\newcommand{\commentout}[1]{}
\newcommand{\bfa}{{\boldsymbol a}}

\newcommand{\bfu}{{\boldsymbol u}}
\newcommand{\bfv}{{\boldsymbol v}}

\newcommand{\bfy}{{\boldsymbol y}}

\newcommand{\bfA}{{\boldsymbol A}}

\newcommand{\bfI}{{\boldsymbol I}}

\newcommand{\bfN}{{\boldsymbol N}}

\newcommand{\bfU}{{\boldsymbol U}}

\newcommand{\bfW}{{\boldsymbol W}}
\newcommand{\bfY}{{\boldsymbol Y}}

\newcommand{\bfzero}{{\boldsymbol 0}}

\newcommand{\bfeta}{{\boldsymbol \eta}}

\newcommand{\bfphi}{{\boldsymbol \phi}}
\newcommand{\bfPhi}{{\boldsymbol \Phi}}
\newcommand{\bfpsi}{{\boldsymbol \psi}}

\newcommand{\bfSigma}{{\boldsymbol \Sigma}}

\newcommand{\calN}{\mathcal{N}}

\def\forspace{\quad\text{for}\quad}

\def\forallspace{\quad\text{for all}\quad}

\def\andspace{\quad\text{and}\quad}

\numberwithin{equation}{section}

\title{A sharp analysis of Root-MUSIC: \\ locations of correct and extraneous roots}
\author{Hana Huber\footnote{CUNY Graduate Center. Email: hana.huber60@gc.cuny.edu} \and Weilin Li\footnote{CUNY Graduate Center and City College. Email: wli6@ccny.cuny.edu}}

\begin{document}
	
\maketitle 

\begin{abstract}

    Root-MUSIC is a spectral estimation algorithm that approximates the unknown signal frequencies by constructing a high-degree polynomial and finding a subset of roots which are closest to the complex unit circle. Previous works found asymptotic expectation formulas for the performance of Root-MUSIC under the implicit assumption that the aforementioned root selection criterion does not select extraneous roots -- those which are unrelated to the correct parameters. This paper removes the need for this assumption by showing all extraneous roots lie outside an annulus of a certain thickness and therefore are not selected by the algorithm. This paper also provides sharp, non-asymptotic, and explicit error bounds for the correct roots in terms of fundamental model parameters. All results hold under a natural separation condition on the correct signal frequencies and are applicable in both the single- and multi-snapshot models. More specifically, in the multi-snapshot model, we prove that Root-MUSIC estimates the frequencies with error at most $O(\sigma /(m \sqrt n))$, where $\sigma^2$ is the noise variance, $m$ is the number of sensors, and $n$ is the number of snapshots. A novelty of this non-asymptotic bound is the explicit $1/m$ decay, which indicates that there is a significant advantage in utilizing additional sensors. Numerical simulations confirm our theory. The main mathematical insight of this paper is a geometric property of the Root-MUSIC polynomial: its correct roots are highly stable to noise while its extraneous roots must lie outside of an annulus. 
    
\end{abstract}

\medskip
\noindent
{\bf 2020 Math Subject Classification:} 94A12, 30C15, 42A10, 42A82, 68W40


        

\medskip
\noindent
{\bf Keywords:} Root-MUSIC, MUSIC, spectral estimation, perturbation, roots, polynomial, polynomial factorization

\section{Introduction} 

\subsection{Motivation} 

A typical spectral estimation problem is to recover the frequencies $\{\theta_k\}_{k=1}^s\subset [0,2\pi)$ of a Fourier sum from noisy temporal samples. A collection of highly celebrated ``subspace methods" \cite{schmidt1986multiple,barabell1983improving,kailath1989esprit,hua1990matrixpencil} were created for this task. They start by approximating the correct ``signal subspace" $\bfU$ that is associated with the correct frequencies $\{\theta_k\}_{k=1}^s$. The correct signal subspace $\bfU$ has dimension $s$ in $\C^m$, where $m$ is usually proportional to the number of collected samples. The analytical and statistical theories of subspace estimation are fairly well-understood, see \cite{stewart1990matrix,chen2021spectral} for overviews. Once an approximate or empirical signal subspace $\tilde \bfU$ is computed using the observations, each subspace method carries out a different routine to process $\tilde\bfU$ to produce estimated signal frequencies, which hopefully approximate the correct signal frequencies. 

The Root-MUSIC algorithm \cite{barabell1983improving} is a variation of the classical MUSIC algorithm \cite{schmidt1981signal,stoica1989music}. Both are foundational subspace methods which are readily available to a general audience -- for example, both are included in MATLAB's Signal Processing Toolbox. While the classical MUSIC algorithm finds the $s$ smallest local minima of a particular ``MUSIC function" through a brute force search, Root-MUSIC was developed as a computationally faster alternative, which we will explain momentarily. 

The Root-MUSIC algorithm uses an approximate subspace $\tilde\bfU$ to define a ``Root-MUSIC polynomial" $\tilde P$ of degree at most $2m-2$. Its roots away from zero come in pairs of the form $(w,1/\overline w)$, with $w\not=0$ inside the closed unit disk $\overline\D\subset\C$ and its reflection $1/\overline w\not\in \D$. Root-MUSIC selects the $s$ roots of $\tilde P$ (mod reflection) closest to the unit circle, denoted $\{\tilde w_k\}_{k=1}^s$, and $\{\tilde\theta_k:=\arg(\tilde w_k)\}_{k=1}^s$ serves as the estimated signal frequencies. Since $w$ and its reflection $1/\overline w$ have the same argument, we ignore all reflected roots to simplify subsequent discussion. Root-MUSIC is significantly faster than classical MUSIC due to the former's ability to use fast root-finding procedures such as Newton's method instead of doing an expensive brute force search.

\begin{figure}[ht]
    \centering 
    \begin{subfigure}{0.44\textwidth}
    \begin{tikzpicture} [scale=0.7]
    \draw[very thick] (0,0) circle (3); 
    \draw[gray, thick, <->] (0,-5) -- (0,5) coordinate (im axis); 
    \draw[gray, thick, <->] (-5,0) -- (5,0) coordinate (re axis); 
    \fill[red] (0,0) ++(60:3) circle[radius=1pt] coordinate (a);
    \fill[red] (0,0) ++(100:3) circle[radius=1pt] coordinate (b);
    \fill[red] (0,0) ++(160:3) circle[radius=1pt] coordinate (c); 
    \fill[red] (0,0) ++(300:3) circle[radius=1pt] coordinate (e);
    \draw[red, very thick] (a)+(-0.2,-0.2) -- (a);
    \draw[red, very thick] (a)+(0.2,0.2) node [above] {$e^{i\theta_1}$} -- (a);
    \draw[red, very thick] (a)+(-0.2,0.2) -- (a);
    \draw[red, very thick] (a)+(0.2,-0.2) -- (a);
    \draw[red, very thick] (b)+(-0.2,-0.2) -- (b);
    \draw[red, very thick] (b)+(0.2,0.2) -- (b);
    \draw[red, very thick] (b)+(-0.2,0.2) node [above] {$e^{i\theta_2}$} -- (b);
    \draw[red, very thick] (b)+(0.2,-0.2) -- (b);
    \draw[red, very thick] (c)+(-0.2,-0.2) -- (c);
    \draw[red, very thick] (c)+(0.2,0.2) -- (c);
    \draw[red, very thick] (c)+(-0.2,0.2) node [left] {$e^{i\theta_3}$} -- (c);
    \draw[red, very thick] (c)+(0.2,-0.2) -- (c);
    \draw[red, very thick] (e)+(-0.2,-0.2) -- (e);
    \draw[red, very thick] (e)+(0.2,0.2) -- (e);
    \draw[red, very thick] (e)+(-0.2,0.2) -- (e);
    \draw[red, very thick] (e)+(0.2,-0.2) node [below] {$e^{i\theta_4}$} -- (e);
    \fill[blue] (0,0) ++(20:3.7) circle[radius=3pt] node [above=2pt] {$1/\overline {w_5}$};
    \fill[blue] (0,0) ++(20:2.6) circle[radius=3pt] node [below left] {$w_5$};
    \fill[blue] (0,0) ++(135:4) circle[radius=3pt] node [above left] {$1/\overline {w_6}$};
    \fill[blue] (0,0) ++(135:2.3) circle[radius=3pt] node [below right] {$w_6$};
    \fill[blue] (0,0) ++(200:4.2) circle[radius=3pt] node [below=2pt] {$1/\overline {w_7}$};
    \fill[blue] (0,0) ++(200:2.1) circle[radius=3pt] node [right=2pt] {$w_7$};
    \fill[blue] (0,0) ++(230:3.45) circle[radius=3pt] node [below=2pt] {$1/\overline {w_8}$};
    \fill[blue] (0,0) ++(230:2.85) circle[radius=3pt] node [above right] {$w_8$};
    \fill[blue] (0,0) ++(340:4.3) circle[radius=3pt] node [below=2pt] {$1/\overline {w_9}$};
    \fill[blue] (0,0) ++(340:2) circle[radius=3pt] node [left=2pt] {$w_9$};
\end{tikzpicture} 
\caption{Roots of $P$.}    
    \end{subfigure}
\begin{subfigure}{0.44\textwidth}
    \begin{tikzpicture} [scale=0.7]
    \draw[very thick] (0,0) circle (3); 
    \draw[gray, thick, <->] (0,-5) -- (0,5) coordinate (im axis); 
    \draw[gray, thick, <->] (-5,0) -- (5,0) coordinate (re axis); 
    \fill[blue] (0,0) ++(55:3.7) circle[radius=3pt];
    \fill[blue] (0,0) ++(55:2.6) circle[radius=3pt] node [below left] {$\tilde w_1$};
    \fill[red] (0,0) ++(103:3.2) circle[radius=1pt] coordinate (a);
    \draw[red, very thick] (a)+(-0.2,-0.2) -- (a);
    \draw[red, very thick] (a)+(0.2,0.2) -- (a);
    \draw[red, very thick] (a)+(-0.2,0.2) -- (a);
    \draw[red, very thick] (a)+(0.2,-0.2) -- (a);
    \fill[red] (0,0) ++(103:2.95) circle[radius=1pt] coordinate (b);
    \draw[red, very thick] (b)+(-0.2,-0.2) -- (b);
    \draw[red, very thick] (b)+(0.2,0.2) -- (b);
    \draw[red, very thick] (b)+(-0.2,0.2) -- (b);
    \draw[red, very thick] (b)+(0.2,-0.2) node [below] {$\tilde w_2$} -- (b);
    \fill[red] (0,0) ++(158:3.3) circle[radius=1pt] coordinate (c);
    \draw[red, very thick] (c)+(-0.2,-0.2) -- (c);
    \draw[red, very thick] (c)+(0.2,0.2) -- (c);
    \draw[red, very thick] (c)+(-0.2,0.2) -- (c);
    \draw[red, very thick] (c)+(0.2,-0.2) -- (c);
    \fill[red] (0,0) ++(158:2.9) circle[radius=1pt] coordinate (d);
    \draw[red, very thick] (d)+(-0.2,-0.2) -- (d);
    \draw[red, very thick] (d)+(0.2,0.2) -- (d);
    \draw[red, very thick] (d)+(-0.2,0.2) -- (d);
    \draw[red, very thick] (d)+(0.2,-0.2) -- (d) node [below right] {$\tilde w_3$};
    \fill[red] (0,0) ++(308:3.4) circle[radius=1pt] coordinate (e);
    \draw[red, very thick] (e)+(-0.2,-0.2) -- (e);
    \draw[red, very thick] (e)+(0.2,0.2) -- (e);
    \draw[red, very thick] (e)+(-0.2,0.2) -- (e);
    \draw[red, very thick] (e)+(0.2,-0.2) -- (e);
    \fill[red] (0,0) ++(308:2.8) circle[radius=1pt] coordinate (f);
    \draw[red, very thick] (f)+(-0.2,-0.2) -- (f);
    \draw[red, very thick] (f)+(0.2,0.2) -- (f);
    \draw[red, very thick] (f)+(-0.2,0.2) -- (f) node [above left] {$\tilde w_4$};
    \draw[red, very thick] (f)+(0.2,-0.2) -- (f);
    \fill[blue] (0,0) ++(18:4.2) circle[radius=3pt];
    \fill[blue] (0,0) ++(18:2.1) circle[radius=3pt] node [left] {$\tilde w_5$};
    \fill[blue] (0,0) ++(130:4.4) circle[radius=3pt];
    \fill[blue] (0,0) ++(130:1.9) circle[radius=3pt] node [below right] {$\tilde w_6$};
    \fill[blue] (0,0) ++(200:4.7) circle[radius=3pt];
    \fill[blue] (0,0) ++(200:1.6) circle[radius=3pt] node [right] {$\tilde w_7$};
    \fill[red] (0,0) ++(230:3.4) circle[radius=1pt] coordinate (g);
    \draw[red, very thick] (g)+(-0.2,-0.2) -- (g);
    \draw[red, very thick] (g)+(0.2,0.2) -- (g);
    \draw[red, very thick] (g)+(-0.2,0.2) -- (g);
    \draw[red, very thick] (g)+(0.2,-0.2) -- (g);
    \fill[red] (0,0) ++(230:2.8) circle[radius=1pt] coordinate (h);
    \draw[red, very thick] (h)+(-0.2,-0.2) -- (h);
    \draw[red, very thick] (h)+(0.2,0.2) node [right] {$\tilde w_8$} -- (h);
    \draw[red, very thick] (h)+(-0.2,0.2) -- (h);
    \draw[red, very thick] (h)+(0.2,-0.2) -- (h);
    \fill[blue] (0,0) ++(330:4.3) circle[radius=3pt];
    \fill[blue] (0,0) ++(330:2) circle[radius=3pt] node [above left] {$\tilde w_9$};
\end{tikzpicture}
\caption{Possible roots of $\tilde P$.}
\end{subfigure}
\caption{A visual representation of a plausible outcome of Root-MUSIC. (a) $P$ has four double roots on the unit circle $\{e^{i\theta_k}\}_{k=1}^4$ (in red marked by $\times$) and additional extraneous roots $\{(w_k,1/\overline w_k)\}_{k=5}^9$ (in blue marked by $\boldsymbol{\cdot}$). (b) A possible corresponding $\tilde P$, where the selected 4 roots in $\overline \D$ (in red marked by $\times$) are $\{\tilde w_k\}_{k=2,3,4,8}$. Notice how $\tilde w_8$ is selected since it is among the four roots closest to the unit circle even though it does not well-approximate any of the correct signal frequencies, while $\tilde w_1$ is not selected even though it serves as a better approximation.}
\label{fig:wrongroots}
\end{figure}

In the noiseless setting, the correct signal subspace $\bfU$ can be computed and the noiseless Root-MUSIC polynomial $P$ has exactly $s$ ``correct" roots on the unit circle, which are precisely $\{e^{i\theta_k}\}_{k=1}^s$. We call the remaining roots of $P$ ``extraneous", for which there could be up to $m-1-s$ many (mod reflection). Previous papers \cite{rao1989performance} and \cite{krim1992operator} provided asymptotic expectation formulas for the perturbation of correct roots under statistical assumptions. However, they are insufficient to guarantee the success of Root-MUSIC in the presence of noise. To see why,  \cref{fig:wrongroots} depicts a potential issue where a perturbed extraneous root of $P$ is closer to the unit circle than a perturbed correct root. Consequently, the Root-MUSIC algorithm would select a perturbed extraneous root whose argument (or phase, angle) does not approximate any of the correct signal frequencies, resulting in a failure of the estimation. This pathological example could result from $P$ having an extraneous root that is too close to the unit circle. 

One motivation of this paper is to study whether this behavior can actually occur under natural assumptions. Another motivation is to derive explicit and non-asymptotic expressions for the frequency error in terms of fundamental model parameters, such as the number of samples and noise level, with minimal assumptions. Letting $d(t,u)$ be the arc-length between $e^{it}$ and $e^{iu}$, we naturally define the 
$$
\text{frequency error}
:=\max_{k=1,\dots,s} \ \ \ \min_{\text{permutations } \sigma} \ \ \ d(\theta_k, \tilde \theta_{\sigma(k)}),
$$

The main obstacles lie in getting delicate control over perturbations of both the correct and extraneous roots of $P$. The most important technical assumption of this paper is a separation condition on the correct frequencies $\{\theta_k\}_{k=1}^s$. For a large enough absolute constant $\beta\geq 4$, we consider the situation where
\begin{equation}
    \label{eq:delta}
    \Delta := \min_{j\not=k} d(\theta_j,\theta_k)
    \geq \frac {2\pi \beta} m.
\end{equation}
This is a common and natural separation assumption found in other papers on spectral estimation such as \cite{candes2013super,duval2015exact,moitra2015matrixpencil,fannjiang2025optimality}. While certain algorithms can still succeed in the ``super-resolution regime" where $\Delta$ is significantly smaller than $\pi/m$, this setting is considerably different from the one considered here and is beyond the scope of this paper -- we refer the reader to \cite{li2021stable,batenkov2021super} and references therein for an overview of the mathematical theory of super-resolution. A second main assumption of this paper is that the noise is sufficiently small. This is captured in an abstract requirement that for an absolute $\epsilon>0$, 
$$
\text{subspace error} := \Big\|\bfU\bfU^*-\tilde \bfU\tilde \bfU^*\Big\|_2 \leq \epsilon.
$$
Note that the subspace error is the largest canonical angle between the subspaces $\bfU$ and $\tilde \bfU$. 

\subsection{Contributions}

\begin{table}[ht]
    \centering
    \begin{tabular}{|c|c|c|c|c|} \hline
         {\bf Section} &{\bf Signal model} &{\bf Noise class} &{\bf Subspace error} &{\bf Frequency error} \\\hline 
         \cref{sec:subspace1} &Single-snapshot &Deterministic &$\lesssim \|\bfeta\|_p/m^{1/p}$ &$\lesssim \|\bfeta\|_p/m^{1+1/p}$ \\ \hline 
         \cref{sec:subspace1} &Single-snapshot &i.i.d. $\calN(0,\sigma^2)$ &$\lesssim \sigma \sqrt{\log(m)/ m}$ &$\lesssim \sigma \sqrt{\log(m)}/ m^{3/2}$ \\ \hline 
         \cref{sec:subspace2} &Multi-snapshot &i.i.d. $\calN(0,\sigma^2)$ &$\lesssim \sigma/\sqrt{n\lambda_s(\bfSigma_\bfa)}$ &$\lesssim \sigma / (m \sqrt{n\lambda_s(\bfSigma_\bfa)})$ \\ \hline 
    \end{tabular}
    \caption{Summary of the main result applied to three noise models that are commonly considered in the literature where $\bfeta$ represents additive noise. For the first row, $p\in [1,\infty]$. For the second and third rows, the conclusions hold with probability $1-o(m)$ as $m\to\infty$. For the third row, $n$ is the number of independent snapshots and $\bfSigma_\bfa$ is the empirical covariance matrix of the amplitudes.}
    \label{table:mainresults}
\end{table}

The main results assume that $\Delta\geq 2\pi \beta/m$ for a large enough absolute $\beta\geq 4$ and that the subspace error is at most a small enough absolute constant. There are two distinct but equally important contributions of this paper (summarized in \cref{thm:step2}):
\begin{enumerate}
    \item 
    The first contribution of this paper shows that the situation depicted in \cref{fig:wrongroots} cannot occur by confirming that the $s$ roots (mod reflection) closest to the unit circle are perturbations of the correct roots. 
    This is accomplished by showing a stronger and quantitative statement that $\tilde P$ has exactly $s$ roots (mod reflection) in an annulus of thickness on the order of $1/m$ around the unit circle. 
    \item 
    The second contribution of this paper is a unified, sharp, and non-asymptotic perturbation bound of the correct roots (which are also the ones selected by Root-MUSIC in view of the first contribution). There is an absolute constant $R>0$ such that Root-MUSIC finds roots $\{\tilde w_k\}_{k=1}^s$ with error
    \begin{equation}
        \label{eq:maininformal}
        \sup_{k=1,\dots,s} \left|e^{i\theta_k}-\tilde w_k \right|
        \leq \frac{R}{m} \left\|\bfU\bfU^*-\tilde \bfU\tilde \bfU^* \right\|_2.
    \end{equation}
    The general and abstract result \eqref{eq:maininformal} can be specialized to various specific spectral estimation settings and noise models which are summarized in \cref{table:mainresults}. 
\end{enumerate}

Even though this paper shows that the hypothetical situation depicted in \cref{fig:wrongroots} cannot occur under the main theorem's assumptions, the reader may wonder, ``why not avoid this issue altogether?". Indeed, one could discard any additional samples to reduce to the setting of $m=s+1$, in which case, $\tilde P$ (generally) has exactly $s$ roots (mod reflection) and no extraneous roots. While this would circumvent the issue depicted in \cref{fig:wrongroots}, assumption \eqref{eq:delta} relaxes as $m$ increases and the right side of \eqref{eq:maininformal} decays in $m$, so there is a clear advantage of utilizing additional measurements rather than discarding them. In other words, if one wants highly accurate estimates of the signal frequencies and utilize all available information, then extraneous roots must exist. However, the first contribution of the paper tells us that the extraneous roots can be easily avoided.

The additional $1/m$ factor in \eqref{eq:maininformal} has important practical implications. For example, under the setup of the first row of \cref{table:mainresults} and for $p=2$, the noise-to-signal ratio $\gamma^2$ satisfies $\gamma^2 \lesssim \|\bfeta\|_2^2/m$, see \cref{rem:NSR} for further details. Consequently, the subspace error is $\lesssim \gamma$ and the frequency error is $\lesssim \gamma/m$. Of course, the frequency error goes to zero as the noise-to-signal ratio tends to zero; but perhaps more interestingly, the frequency error goes to zero as $m$ increases {\it even if the noise-to-signal ratio is fixed}. In other words, one can get increasingly better frequency estimation by utilizing additional samples without acquiring more accurate samples. The performance guarantees for Root-MUSIC in the first and second rows of \cref{table:mainresults}
match the minimax optimal rates (up to log factors), which were derived in \cite{stoica1989music,fannjiang2025optimality}. Hence, Root-MUSIC is an optimal spectral estimation algorithm over the class of signals with sufficiently separated frequencies and noise models considered in the first and second rows of \cref{table:mainresults}. 

The theory presented in this paper is verified by numerical simulations in \cref{sec:experiments}. Simulations agree with the prediction in the second row of \cref{table:mainresults}. Additionally, they show that the closest extraneous root is on the order of $1/m$ away from the annulus, which confirms the first contribution of this paper.

\subsection{Related work}

The first contribution of this paper, showing that Root-MUSIC does not select any perturbed extraneous root, is completely new, to our best knowledge. 

The second contribution of this paper, inequality \eqref{eq:maininformal} for the perturbation of correct roots, is also a new result. Past papers  \cite{rao1989performance} and \cite{krim1992operator} considered the multi-snapshot setting described in \cref{sec:subspace2} and produced asymptotic error formulas that depend on complicated and/or implicit quantities such as the noise subspace and the eigenvectors of the spatial covariance matrix. Consequently, it is not immediately clear how these bounds scale with the number of samples, sources, and different types of noise. In contrast, we obtain sharp and transparent inequalities which explicitly highlight how the error behaves in important model parameters as seen in \cref{table:mainresults}. A more technical and detailed comparison can be found in \cref{rem:techcomparison}. 

There exist modified Root-MUSIC algorithms, such as the one in \cite{wagner2021gridless}. This paper develops a modified Root-MUSIC algorithm to extend its applicability to non-uniform linear arrays. An open direction is to carry out perturbation analysis on this ``irregular Root-MUSIC" algorithm. We also refer the reader to \cite{wagner2021gridless} for numerical comparisons between Root-MUSIC, irregular Root-MUSIC, and other popular spectral estimation methods. The present paper is of a theoretical nature and does not advocate for one particular method over another.

The strategy of bounding the frequency error in terms of the subspace error is a well-known template that appears in numerous other papers which analyze the performance of subspace methods \cite{rao1989performance,stoica1989music,krim1992operator,li2020super,li2022stability,yang2022nonasymptotic,gabet2025global,fannjiang2025optimality}. The steps required to prove such a result and the form in which it appears varies depending on which subspace method is being considered. As explained earlier, an important feature in \eqref{eq:maininformal} for Root-MUSIC is that the frequency error is at most $R/m$ times the subspace error, which has practical implications and implies optimality. This $1/m$ factor is not present in all spectral estimation theory. For instance, it is unknown if ESPRIT also achieves the general bound \eqref{eq:maininformal}. At the present moment, the only known algorithms that enjoy \eqref{eq:maininformal} are three MUSIC algorithms: classical MUSIC, Gradient-MUSIC (both proved in \cite{fannjiang2025optimality}) and Root-MUSIC. 

We refer the reader to \cite{li2021stable,batenkov2021super} for a detailed overview of further theoretical results, especially for the complementary setting of clustered frequencies.

\subsection{Outline}

\cref{sec:review} reviews the set up of Root-MUSIC, noting properties of the algorithm's steps as observations. \cref{sec:outline} reiterates the main theorem of the paper, its relevance to practical use, and the steps taken in the proof and their motivations (\cref{sec:outline1}). \cref{sec:subspace} describes necessary facts and assumptions regarding the noise, sine-theta distance, and noise-to-signal ratio for subspace estimation in practice for both the single-snapshot model (\cref{sec:subspace1}) and the multi-snapshot model (\cref{sec:subspace2}). It also includes the set up for the spectral estimation problem and a technical comparison to prior work. \cref{sec:experiments} details the numerical experiments that show the rates in simulation compared to the results of this paper. Finally, \cref{sec:proofs} contains all proofs for the main theorem and its steps as in Section 3. 

\subsection{Notation}

Conventional notation for inequality symbols are used. We write $x \lesssim_{a,b} y$, if there exists $c\in \R$ that depends only on $a,b$ such that  $x \leq cy$. In particular, $x\lesssim y$ means that the implicit $c>0$ is an absolute constant. The notation $x\asymp_{a,b} y$ means that $x\lesssim_{a,b} y$ and $y\lesssim_{a,b} x$. 


The open unit disk in the complex plane is $\D$. More generally, the open disk with center $w\in\C$ and radius $r>0$ is denoted $D(w,r)$. We use conventional set notation: for $A\subset\C$, let $\overline A$ be its closure and $\partial A$ its boundary. An (algebraic) polynomial of degree at most $n$ is a complex function $z\mapsto \sum_{k=0}^n c_k z^k$. A Laurent polynomial of degree at most $n$ is a function $z\mapsto \sum_{k=-n}^n c_k z^k$. We let $\T=\R/2\pi \Z$ be the periodic interval $[0,2\pi)$ so that $t\mapsto e^{it}$ is a bijection between $\T$ and $\partial\D$. Given a complex number $a \in \C$, let $\overline{a} \in \C$ denote its complex conjugate. 

An $m$ by $s$ matrix $\bfA$ with complex entries is denoted by 
$$\bfA = \left[ a_{j,k} \right]_{\substack{j=0,1,\dots,m-1 \\ k=1,2,\dots,s}} \in \C^{m\times s},$$ 
where $a_{j,k}$ is it's entry in row $j+1$ and column $k$. A vector $\bfv$ in $\C^m$ is given by $\bfv = \left[ v_1, v_2, \dots , v_m \right]^\top$ where $\top$ denotes the transpose of the vector, and is equivalent to $\left[ v_j \right]_{j=1,\dots,m } \in \C^{m\times 1}$. $\bfA^*$ is the complex conjugate transpose of a complex matrix $\bfA$ (or a vector $\bfv$) as before, so that 
$$\bfA^* = \left[ \overline{a_{k,j}} \right]_{\substack{k=1,2,\dots,s \\ j=0,1,\dots,m-1}}.$$ 

Concerning the norms of vectors and matrices, $\| \bfv \|_2$ denotes the $\ell^2$ norm of a vector $\bfv$ and $\| \bfA \|_2$ for a matrix $\bfA$ denotes the spectral norm, given by 
$$\| \bfA \|_2 = \sqrt{\lambda_{\max}(\bfA^*\bfA)} = \sigma_{\max}(\bfA),$$
where $\lambda_{\max}(\bfA^*\bfA)$ and $\sigma_{\max}(\bfA)$ are the maximum eigenvalue of $\bfA^*\bfA$ and the maximum singular value of $\bfA$ respectively. Additionally, the $L^{\infty}$ norm of a function $f$ defined on $\T$ is denoted 
$$\|f\|_{L^{\infty}(\T)} := \sup_{t\in\T} |f(t)|.$$

The range of a matrix $\bfA$ is the span of the column vectors which make up $\bfA$: 
$$
\text{range}(\bfA) = \text{span}(\{\bfa_k\}_{k=1,2,\dots,s})\subset \C^m 
$$
where $\bfa = \left[ a_{0,k}, a_{1,k}, \dots , a_{m-1,k} \right]^\top$ for all $k=1,2,\dots,s$. The trace of a matrix is denoted $\tr(\bfA)$. 


We write $X \sim Y$ if $X$ and $Y$ are random variables with the same distribution. We say a mean-zero random vector $\bfu$ in $\R^m$ is $K$-subgaussian if 
$$
\| \langle \bfu,\bfv\rangle\|_{\psi_2}
\leq K \sqrt{\E |\langle \bfu,\bfv\rangle|^2} \forallspace \bfv\in \R^m,
$$
where the subgaussian $\psi_2$-norm of a random variable $U$ is
$$
\|U\|_{\psi_2}
:= \inf \left\{ t>0 \colon \E \exp(|U|^2/t) \leq 2 \right\}.
$$
A mean-zero random vector $\bfu$ in $\C^m$ is $K$-subgaussian if its real and imaginary parts are independent and are both $K$-subgaussian.

\section{Review of Root-MUSIC}
\label{sec:review} 

This section reviews the Root-MUSIC algorithm, which was first developed in \cite{barabell1983improving} and later analyzed in \cite{rao1989performance,krim1992operator}. Several known facts related to the classical and Root-MUSIC algorithms are stated as {\bf Observation}, which are not reproved unless  mentioned specifically. Compared to \cite{rao1989performance}, our exposition is more streamlined because \cite{rao1989performance} derives a special function (which will be denoted $Q$ in our notation) from first principles, whereas we take a shortcut of starting with the correct definition of $Q$.

{\bf First step. Subspace estimation.} The first step of MUSIC algorithms and other subspace methods is to estimate the subspace on which the correct signal parameters $\{\theta_k\}_{k=1}^s$ lie on. To this end, define the Fourier (or Vandermonde) matrix with $m$ rows associated to $\{\theta_k\}_{k=1}^s$ as
$$
\bfPhi:=\bfPhi(m,\{\theta_k\}_{k=1}^s) 
:= \left[ e^{i j \theta_k} \right]_{\substack{j=0,1,\dots,m-1 \\ k=1,2,\dots,s}}
\in \C^{m\times s}
$$
Here, the columns of $\bfPhi$ have $\ell^2$ norm $\sqrt m$. The correct signal subspace in $\C^m$ associated with $\{\theta_k\}_{k=1}^s$ is precisely
$$
\bfU:=\bfU(m,\{\theta_k\}_{k=1}^s)=\text{range}(\bfPhi(m,\{\theta_k\}_{k=1}^s))\subset \C^m. 
$$
Although we will not need the following fact, we mention that there is a bijection between all subspaces in $\C^m$ defined this way and all possible subsets of cardinality $s$ contained in $\T$, see \cite[Lemma 5.3]{fannjiang2025optimality}. We will frequently drop dependence on $m$ and $\{\theta_k\}_{k=1}^s$ which will be understood from context, and the symbol $\bfU$ is reserved for denoting the subspace associated to $\{\theta_k\}_{k=1}^s$. Let $\bfU_\perp:=\bfU_\perp(m,\{\theta_k\}_{k=1}^s)$ be the orthogonal complement of $\bfU$. 

Throughout, we slightly abuse notation and let $\bfU$ (resp., $\bfU_\perp$) also denote a matrix whose columns form an orthonormal basis for the subspace $\bfU$ (resp., $\bfU_\perp$). Then $\bfU\bfU^*$ is the projection onto its range. As we will see, choice of bases for the ranges of $\bfU$ and $\bfU_\perp$ are completely inconsequential for the MUSIC-type algorithms.

Subspace estimation is about computing an approximate subspace $\tilde \bfU$ of $\bfU$ from noisy observations. The theory developed in this paper will be completely agnostic to the method in which an $\tilde \bfU$ is computed because the main theorem (\cref{thm:step2}) will be applicable to any $\tilde \bfU$ for which the subspace error is small enough. Since the subspace error will appear frequently in this paper, we denote it as
\begin{equation}
    \label{eq:subspaceerror}
    \rho := \left\|\bfU\bfU^*-\tilde \bfU\tilde \bfU^* \right\|_2.
\end{equation}
This quantity is the sine-theta distance between $\bfU$ and $\tilde \bfU$, which makes it a natural metric. The theory of subspace estimation is fairy well-understood at this point and is not the main focus of this paper. For completeness, we provide two concrete spectral estimation models in \cref{sec:subspace}, the single- and multi- snapshot cases. They rely on computing $\tilde\bfU$ through a truncated singular value decomposition of a Hankel or covariance matrix. 

{\bf Second step. Root-MUSIC polynomial and root-finding.} Now we are in position to explain the general mechanisms of the Root-MUSIC algorithm. Define the ``steering vector" $\bfphi\colon \C\to \C^m$ by
$$
\bfphi(z)
=\frac 1 {\sqrt m} \begin{bmatrix} 1, &z, &z^2,  &\cdots, &z^{m-1} \end{bmatrix}^\top.
$$
Here we have chosen to normalize $\bfphi$ by $1/\sqrt m$ so that $\|\bfphi(e^{it})\|_2=1$ for all $t\in \R$. Since $\bfphi$ is a vector valued differentiable function of $z$, we have
$$
\bfphi'(z)=\frac 1 {\sqrt m} \begin{bmatrix} 0, &1, &2z,  &\cdots, &(m-1)z^{m-2} \end{bmatrix}^\top.
$$
Consider the complex function $Q\colon \C\setminus \{0\}\to \C$ defined as 
\begin{equation}
    Q(z)
    :=\bfphi(1/\overline z)^* \bfU_\perp \bfU_\perp^* \bfphi(z). \label{eq:qdef1}
\end{equation}
Using that $\bfU_\perp \bfU_\perp^*+\bfU \bfU^*=\bfI$ and that $\bfphi(1/\overline z)^*\bfphi(z)=1$ for all $z\in \C$, we obtain the alternative formula
\begin{equation}
    Q(z)
    =1-\bfphi(1/\overline z)^* \, \bfU \bfU^* \bfphi(z). \label{eq:qdef2}
\end{equation}

There is a close relationship between the function constructed in the classical MUSIC algorithm \cite{schmidt1986multiple} and $Q$. If $Q$ is restricted to the unit complex circle and viewed as a $2\pi$ periodic function, then we get the function $q\colon \T\to [0,1]$ defined as 
$$
q(t):= Q(e^{it}). 
$$
The classical MUSIC algorithm seeks to find the $s$ smallest local minima of $q$, typically done with a brute force grid search. The MUSIC algorithm relies on the following equivalence \cite{schmidt1986multiple} whose proof we omit.

\begin{observation}
	\label{obs:1}
	Suppose $m\geq s+1$. Fix any subset $\{\theta_k\}_{k=1}^s$ and let $\bfU$ be its associated subspace. For any $t\in \T$, it holds that $q(t)=0$ if and only if $\bfphi(e^{it})\in \bfU$ if and only if $t\in \{\theta_k\}_{k=1}^s$. Moreover, each $\theta_k$ is a double zero of $q$.
\end{observation}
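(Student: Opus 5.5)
On the unit circle we have $1/\overline{z}=z$, so for $z=e^{it}$ the definition \eqref{eq:qdef1} gives
$$
q(t)=\bfphi(e^{it})^*\bfU_\perp\bfU_\perp^*\bfphi(e^{it})=\big\|\bfU_\perp^*\bfphi(e^{it})\big\|_2^2 .
$$
In particular $q$ is real-valued and takes values in $[0,1]$, because $\|\bfphi(e^{it})\|_2=1$ and $\bfU_\perp\bfU_\perp^*$ is an orthogonal projection. The plan has three steps: prove the two equivalences using this formula, then use the formula again to get the double-zero statement.

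\textbf{First equivalence.} $q(t)=0$ holds if and only if the orthogonal projection of $\bfphi(e^{it})$ onto $\bfU_\perp$ vanishes, that is, if and only if $\bfphi(e^{it})\in\bfU$.

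\textbf{Second equivalence.} If $t=\theta_k$, then $\bfphi(e^{it})$ is $1/\sqrt m$ times the $k$-th column of $\bfPhi$, so it lies in $\bfU$. Conversely, suppose $\bfphi(e^{it})\in\bfU$ with $t\notin\{\theta_k\}_{k=1}^s$. Then $\bfphi(e^{it})$ lies in the span of the $s$ columns of $\bfPhi$, so the $m\times(s+1)$ matrix $\bfPhi(m,\{\theta_1,\dots,\theta_s,t\})$ has rank at most $s$. This matrix has $s+1$ distinct nodes $e^{i\theta_1},\dots,e^{i\theta_s},e^{it}$ on the circle and $m\ge s+1$ rows. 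Its top $(s+1)\times(s+1)$ block is a square Vandermonde matrix with distinct nodes, hence it is invertible and the full matrix has full column rank $s+1$. This contradiction shows $t\in\{\theta_k\}_{k=1}^s$. This is the only place the hypothesis $m\ge s+1$ is used, and the Vandermonde argument is the key step of the proof.

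\textbf{Double zeros.} Set $g(t):=\bfU_\perp^*\bfphi(e^{it})$. Then $g$ is a smooth, indeed trigonometric-polynomial-valued, function of $t$, and $q=\|g\|_2^2$. At $t=\theta_k$ we have $g(\theta_k)=0$, so
$$
q'(\theta_k)=2\,\mathrm{Re}\,\langle g'(\theta_k),g(\theta_k)\rangle=0
\andspace
q''(\theta_k)=2\|g'(\theta_k)\|_2^2 .
$$
Thus each $\theta_k$ is a zero of order at least two.

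To show the order is exactly two, I need $q''(\theta_k)>0$, i.e.\ $g'(\theta_k)\neq0$. Since $g'(\theta_k)=ie^{i\theta_k}\,\bfU_\perp^*\bfphi'(e^{i\theta_k})$, this amounts to showing $\bfphi'(e^{i\theta_k})\notin\bfU$. I would prove this with a confluent Vandermonde argument. The vectors $\bfphi(e^{i\theta_1}),\dots,\bfphi(e^{i\theta_s}),\bfphi'(e^{i\theta_k})$ are the columns of a confluent Vandermonde matrix with $s+1$ columns. Its top $(s+1)\times(s+1)$ block is invertible, because no nonzero polynomial of degree at most $s$ can vanish at all $s$ nodes and additionally have a double root at $e^{i\theta_k}$: that would require $s+1$ zeros counted with multiplicity. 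Hence these $s+1$ vectors are linearly independent, so $\bfphi'(e^{i\theta_k})\notin\bfU$.

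The order of a zero is only defined up to the convention used. If the observation is read as ``zero of multiplicity at least two'', the nonvanishing of $q''$ is unnecessary. I would include it for completeness, since it is the natural meaning of ``double zero''.
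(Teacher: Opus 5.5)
Your proof is correct and complete. The paper does not prove this observation; it states it and defers to \cite{schmidt1986multiple}, so there is no proof in the text to compare routes with.

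All three of your steps are sound:
\begin{enumerate}
\item the identity $q(t)=\|\bfU_\perp^*\bfphi(e^{it})\|_2^2$ on the circle;
\item the rank argument using a square Vandermonde block with distinct nodes;
\item the confluent Vandermonde argument showing $\bfphi'(e^{i\theta_k})\notin\bfU$.
\end{enumerate}
Your formula $q''(\theta_k)=2\|\bfU_\perp^*\bfphi'(e^{i\theta_k})\|_2^2$ is exactly the ``calculation'' the paper quotes in \cref{rem:techcomparison}.

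The useful comparison is with \cref{lem:q''}, which the paper imports from \cite{fannjiang2025optimality}. That lemma gives the quantitative bound $q''(\theta_k)\geq c_0 m^2$, which is what the later Rouch\'e arguments actually need. It holds only under the separation condition $\Delta\geq 8\pi/m$ and $m\geq 100$. Your confluent Vandermonde step gives only $q''(\theta_k)>0$, with no scaling in $m$. In exchange it holds for arbitrary distinct frequencies with merely $m\geq s+1$, which is precisely what the observation claims.

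One small correction to your commentary: $m\geq s+1$ is not used only in the second equivalence. The double-zero step also needs at least $s+1$ rows to extract the square confluent block. Indeed, for $m=s$ one has $\bfU_\perp=0$ and $q\equiv 0$, so both conclusions fail.
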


While the last part of this observation asserts that $q''(\theta_k)>0$ for each correct zero $\theta_k$, there is a quantitative lower bound in \cite{fannjiang2025optimality} which will be stated in \cref{lem:q''}. 

A computation shows that $Q$ is a Laurent polynomial of degree at most $m-1$ and can be written as
\begin{equation}
    \label{eq:defQ}
    Q(z):=\sum_{j=-m+1}^{m-1} q_j z^j, 
\end{equation}
for some coefficients $\{q_j\}_{j=-m+1}^{m-1}$, which are also the Fourier coefficients of $q$. 

Importantly, $Q$ satisfies a certain symmetry property that requires introducing some terminology. For each $z\not=0$, define {\it reflection} through the circle by
$$
z\to \frac 1{\overline z}.
$$ 
This fixes the unit circle and maps $\D\setminus\{0\}$ to the complement of $\overline \D$. It also does not change the argument of a complex number and only inverts its modulus. For this reason, we have called it reflection, which is consistent with other convention such as \cite[Chapter I.3, Exercise 2]{lang1999complex}. 

The zeros of $Q$ come in pairs of the form $(w,1/\overline w)$. This was first proved in \cite{rao1989performance}, but to the importance of this property for Root-MUSIC, we reprove it for the reader's convenience in \cref{proof:observations}. 

\begin{observation}
	\label{obs:3}
	Suppose $m\geq s+1$. Then $Q$ has exactly $2s$ zeros on $\partial\D$, which are precisely $\{e^{i\theta_k}\}_{k=1}^s$, each with multiplicity two. Additionally $Q$ has at most $2m-2$ complex zeros, and they come in pairs with the following reflection symmetry: $w$ is a zero of $Q$ if and only if $1/\overline w$ is a zero. Finally, there is an algebraic polynomial $B$ of degree at most $m-1$ whose roots are all in $\overline \D$ such that
	\begin{equation}
		\label{eq:FR}
		Q(z)=B(z)\overline{B(1/\overline z)} \forallspace z\in \C.
	\end{equation}
    Consequently, we have the identity 
    $$
    Q(z)=\overline{Q(1/\overline z)} \forallspace z\in \C\setminus \{0\}.
    $$
\end{observation}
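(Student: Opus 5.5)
Here is the plan. The reflection symmetry comes first, then the zero count, the zeros on the circle, and finally the factorization \eqref{eq:FR}, which I expect to be the main obstacle.

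\textbf{Reflection symmetry.} Since $\bfU_\perp\bfU_\perp^*$ is Hermitian, \eqref{eq:qdef1} gives
$$
\overline{Q(1/\overline z)}=\big(\bfphi(z)^*\bfU_\perp\bfU_\perp^*\bfphi(1/\overline z)\big)^*=\bfphi(1/\overline z)^*\bfU_\perp\bfU_\perp^*\bfphi(z)=Q(z).
$$
This holds for all $z\neq 0$. Consequently $Q(w)=0$ if and only if $Q(1/\overline w)=0$, and multiplicities are preserved because the map $z\mapsto \overline{Q(1/\overline z)}$ is a composition of conformal and anticonformal maps that preserves orders of vanishing.

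\textbf{Counting zeros.} Write $Q(z)=z^{-(m-1)}\,\tilde Q(z)$, where $\tilde Q(z)=\sum_{j}q_{j}z^{j+m-1}$ is an algebraic polynomial of degree at most $2m-2$. Then the zeros of $Q$ in $\C\setminus\{0\}$ are zeros of $\tilde Q$. Hence $Q$ has at most $2m-2$ of them, provided $Q\not\equiv 0$. Nonvanishing follows from \cref{obs:1}: $q(t)>0$ for $t\notin\{\theta_k\}$.

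\textbf{Zeros on the unit circle.} On $\partial\D$ we have $Q(e^{it})=q(t)=\|\bfU_\perp^*\bfphi(e^{it})\|_2^2\ge 0$. By \cref{obs:1}, its zeros on the circle are exactly the $e^{i\theta_k}$, each of multiplicity two as a zero of $q$. To pass to multiplicity as a zero of the analytic function $Q$, note that $\frac{d}{dt}Q(e^{it})=ie^{it}Q'(e^{it})$. So the order of vanishing of $Q$ at $e^{i\theta_k}$ in $z$ equals the order of vanishing of $q$ at $\theta_k$ in $t$, which is exactly two. Thus there are $2s$ zeros on $\partial\D$ counted with multiplicity.

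\textbf{Factorization (main obstacle).} I will prove \eqref{eq:FR} via the Fejér--Riesz theorem. I will give a direct argument rather than just citing it, since it is the only nontrivial step. The zeros of $\tilde Q$ off the circle pair up as $(w,1/\overline w)$ with $w\in\D\setminus\{0\}$, by the reflection symmetry. The zeros on the circle have even multiplicity. Any zero at the origin of $\tilde Q$ corresponds to a drop in the degree of $Q$ on the negative side, which by symmetry matches a drop on the positive side.
\begin{itemize}
\item Let $B_0(z)=\prod_k(z-e^{i\theta_k})\prod_{w}(z-w)$, taking one factor per pair with $w\in\D$ and one factor per double zero on the circle, together with the appropriate power of $z$ for zeros at $0$.
\item A direct computation gives $(z-w)\overline{(1/\overline z-w)}=-\overline w\,z^{-1}(z-w)(z-1/\overline w)$, and for $|w|=1$ the factor is $-\overline w z^{-1}(z-w)^2$.
\item Hence $B_0(z)\overline{B_0(1/\overline z)}$ has the same zeros with multiplicity as $Q$ in $\C\setminus\{0\}$ and the same Laurent degree range. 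Its ratio with $Q$ is therefore a constant $c$.
\item Evaluating on the circle, both $Q(e^{it})=q(t)\ge0$ and $|B_0(e^{it})|^2\ge0$, so $c>0$.
\item Set $B=\sqrt c\,B_0$. Then $B$ has degree at most $m-1$, all its roots lie in $\overline\D$, and \eqref{eq:FR} holds.
\end{itemize}
The bookkeeping I expect to be most delicate is the matching of degrees and the handling of the zero at the origin. I would handle it by working with $z^{m-1}Q(z)$ throughout.
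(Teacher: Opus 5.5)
Your proof is correct, but it takes a different route from the paper's. The paper invokes the Fej\'er--Riesz theorem for the nonnegative trigonometric polynomial $q$ to obtain some $B$ of degree at most $m-1$. It then says the roots can be chosen in $\overline\D$, and reads off the identity $Q(z)=\overline{Q(1/\overline z)}$, and hence the pairing of zeros, as a consequence of \eqref{eq:FR}. The zero count and the multiplicities are left essentially to \cref{obs:1}.

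You reverse the order. You get the reflection identity first, directly from the Hermitian structure of $\bfU_\perp\bfU_\perp^*$. You then combine it with the double zeros from \cref{obs:1} to build $B$ by hand, which amounts to reproving Fej\'er--Riesz in this special case. The paper's route is shorter and leans on a classical theorem. Yours is self-contained and makes explicit two things the paper only asserts: that the roots can be taken in $\overline\D$ (the paper leaves implicit the flip $(z-w)\mapsto \overline w z-1$, which preserves $B(z)\overline{B(1/\overline z)}$), and that $\deg B\le m-1$. It also justifies the zero count and the multiplicity two on the circle. The same Hermitian computation gives \cref{obs:4} for the perturbed $\tilde Q$ verbatim.

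Three small cleanups:
\begin{enumerate}
\item The power of $z$ in $B_0$ is unnecessary, since $z^a\cdot\overline{(1/\overline z)^{a}}=1$ cancels in $B_0(z)\overline{B_0(1/\overline z)}$.
\item The degree bound comes from the symmetry $q_{-j}=\overline{q_j}$. This shows that $Q$ spans exactly $z^{-d},\dots,z^{d}$ with nonzero extreme coefficients. So $z^dQ(z)$ has exactly $2d=2s+2n_0$ nonzero roots, where $n_0$ is the number of pairs of zeros off the circle, and therefore $\deg B_0=d\le m-1$.
\item Do not call $z^{m-1}Q(z)$ ``$\tilde Q$'', since in the paper $\tilde Q$ is the perturbed Laurent polynomial; your polynomial is $P$.
\end{enumerate}
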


As a consequence of Observation \ref{obs:3}, we enumerate the $n\leq m-1$ roots of the polynomial $B$ by 
\begin{equation}
	\label{eq:roots1}
	w_1=e^{i \theta_1}, \, w_2 = e^{i\theta_2}, \,  \dots, w_s=e^{i \theta_s}, \, w_{s+1}, \, ,\dots, \, w_n.
\end{equation}
Here, the first $s$ roots lie on $\partial\D$, while the remaining $n-s$ roots have to be in $\D$ due to \cref{obs:1}. We use this convention throughout the paper. 

In the presence of noise, suppose we obtain a $\tilde\bfU$ that approximates $\bfU$, and let $\tilde \bfU_\perp$ be a matrix whose columns form an orthonormal basis for the orthogonal complement of the range of $\tilde\bfU$. Again, see \cref{sec:subspace} for examples. Using $\tilde \bfU$ instead of $\bfU$ in formulas \eqref{eq:qdef1} and \eqref{eq:qdef2} yields a perturbed function,
\begin{equation}
    \tilde Q(z)
    =\bfphi(1/\overline z)^* \tilde \bfU_\perp \tilde \bfU_\perp^* \bfphi(z)
    =1 -\bfphi(1/\overline z)^* \tilde \bfU \tilde \bfU^* \bfphi(z). \label{eq:defQtilde1}
\end{equation}
This function is also a Laurent polynomial of degree at most $m-1$ and can be written as
\begin{equation}
    \label{eq:defQtilde2}
    \tilde Q(z):=\sum_{j=-m+1}^{m-1} \tilde q_j z^j,
\end{equation}
for some coefficients $\{\tilde q_k\}_{k=-m+1}^{m-1}$. The Laurent polynomial $\tilde Q$ also enjoys symmetry properties, which we reprove it for the reader's convenience in \cref{proof:observations}.

\begin{observation}
	\label{obs:4}
	Suppose $m\geq s+1$. Then $\tilde Q$ has an even number of complex zeros and they come in pairs with the following reflection symmetry: $w$ is a zero if and only if $1/\overline{w}$ is a zero. We have the identity 
    $$
    \tilde Q(z)=\overline{\tilde Q(1/\overline z)} \forallspace z\in \C\setminus \{0\}.
    $$
\end{observation}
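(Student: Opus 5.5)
The plan is to prove the functional identity first and then read off the zero symmetry from it. Write $\bfW:=\tilde\bfU_\perp\tilde\bfU_\perp^*$, which is Hermitian, so that $\tilde Q(z)=\bfphi(1/\overline z)^*\bfW\bfphi(z)$ for $z\neq 0$.

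\textbf{Step 1: the identity.} Replace $z$ by $1/\overline z$; the vector in the left slot becomes $\bfphi(z)$. This gives
$$
\tilde Q(1/\overline z)=\bfphi(z)^*\bfW\bfphi(1/\overline z).
$$
The right side is a scalar, so it equals its own transpose. Conjugating and using $\bfW^*=\bfW$ then yields
$$
\overline{\tilde Q(1/\overline z)}=\bfphi(1/\overline z)^*\bfW^*\bfphi(z)=\tilde Q(z) \quad\text{for all } z\in\C\setminus\{0\}.
$$
The only property used is Hermitian symmetry of the projection. The same argument proves the identity stated for $Q$ in \cref{obs:3}, with $\bfU_\perp\bfU_\perp^*$ in place of $\bfW$.

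\textbf{Step 2: reflection of zeros.} The identity shows that $\tilde Q(w)=0$ if and only if $\tilde Q(1/\overline w)=0$ for every $w\neq 0$. To make "pairs" meaningful with multiplicity, I also need reflection to preserve the order of vanishing.

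\begin{enumerate}
\item Let $w\notin\partial\D$ be a zero of multiplicity $k$, so $\tilde Q(z)=(z-w)^k g(z)$ with $g(w)\neq 0$.
\item Substituting into the identity gives $\tilde Q(z)=\overline{(1/\overline z-w)^k}\,\overline{g(1/\overline z)}$.
\item Since $1/z-\overline w=-\overline w\,(z-1/\overline w)/z$, the point $1/\overline w$ is a zero of the same multiplicity $k$.
\end{enumerate}

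Zeros off $\partial\D$ therefore pair off bijectively with equal multiplicities. Multiplying by $z^{m-1}$ turns $\tilde Q$ into an algebraic polynomial of degree at most $2m-2$. Its nonzero roots are exactly the zeros of $\tilde Q$, so there are finitely many and the zero count is well defined.

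\textbf{Main obstacle: zeros on $\partial\D$.} These are the only delicate point, since they are fixed by reflection and their multiplicities must be shown to be even. On $\partial\D$ we have $1/\overline z=z$, so the identity gives $\tilde Q(e^{it})=\overline{\tilde Q(e^{it})}$. Moreover $\tilde Q(e^{it})=\|\tilde\bfU_\perp^*\bfphi(e^{it})\|_2^2\geq 0$. Hence $\tilde q(t):=\tilde Q(e^{it})$ is a nonnegative real-analytic function, and any zero it has is a local minimum of even order. The order of a zero of $\tilde Q$ at $e^{it_0}$ equals the order of $\tilde q$ at $t_0$, because the map $t\mapsto e^{it}$ has nonzero derivative there.

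Finally, zeros at $\partial\D$ have even multiplicity and zeros off $\partial\D$ come in reflected pairs. Counting with multiplicity, $\tilde Q$ has an even number of zeros, all arranged in reflection pairs. This completes the proof.
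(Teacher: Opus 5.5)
Your proof is correct, but it takes a different route from the paper's. The paper handles the observation in one stroke with the Fej\'er--Riesz theorem. Since $\tilde Q\ge 0$ on $\partial\D$, it writes $\tilde Q(z)=\tilde B(z)\overline{\tilde B(1/\overline z)}$, where $\tilde B$ is an algebraic polynomial of degree at most $m-1$ with all roots in $\overline\D$. The identity $\tilde Q(z)=\overline{\tilde Q(1/\overline z)}$ can be read directly off this factorization. The zeros of $\tilde Q$ are then automatically the nonzero roots of $\tilde B$ together with their reflections, counted with multiplicity, so both the pairing and the even multiplicity on the circle come for free.

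You instead derive the identity directly from $\bfW^*=\bfW$. This shows the identity needs only Hermitian symmetry, not positivity. You then verify by a local computation that reflection preserves multiplicity off $\partial\D$. You use nonnegativity only where it is genuinely needed: to force even order at zeros on $\partial\D$, which is exactly the information Fej\'er--Riesz packages implicitly.

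Your argument is more elementary and self-contained, and it isolates the role of each hypothesis. The paper's approach buys the spectral factor $\tilde B$, whose roots in $\overline\D$ give a canonical representative of each reflection pair. This matches the later convention of selecting the $\tilde w_k$ in $\overline\D$.

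One small omission, which the paper's short proof also leaves implicit: you never use $m\ge s+1$. Yet your finiteness claim for the zero set and your finite-order argument on $\partial\D$ both require $\tilde Q\not\equiv 0$, and this is exactly where the hypothesis enters. Since $\frac1{2\pi}\int_\T \bfphi(e^{it})\bfphi(e^{it})^*\,dt=\frac1m\bfI$, averaging gives $\frac1{2\pi}\int_\T \tilde q(t)\,dt=\tr(\bfW)/m=(m-s)/m>0$. This holds when $\tilde\bfU$ has exactly $s$ orthonormal columns, as in the paper's standing convention.
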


Unlike $Q$, the Laurent polynomial $\tilde Q$ does not necessarily have any zeros on the unit circle and it is not immediate that $\tilde Q$ has any zeros whatsoever (e.g., it is not immediately clear that $\tilde Q$ is non-constant). 

Instead of working with the Laurent polynomials $Q$ and $\tilde Q$, it is more convenient to work with the {\it Root-MUSIC polynomial} $P$ and its noisy version $\tilde P$, which are defined as
\begin{equation}
    P(z)= z^{m-1} Q(z)\andspace 
    \tilde P(z)= z^{m-1} \tilde Q(z).
\end{equation}
Both are polynomials of degree at most $2m-2$ in view of equations \eqref{eq:defQ} and \eqref{eq:defQtilde2}. It is also clear that $Q$ and $P$ (resp. $\tilde Q$ and $\tilde P$) have the same zeros in $\C\setminus\{0\}$, though $P$ and $\tilde P$ may vanish at zero. We will eventually show in \cref{thm:step1} that $\tilde Q$ and $\tilde P$ have at least $2s$ many zeros and quantify their locations.  

Root-MUSIC computes the roots of $\tilde P$ through a root finding algorithm (such as Newton's method), selects the $s$ roots mod reflection that are the closest to the circle (if they exist), and their arguments are the estimated signal frequencies. 

We need to be careful about the definition of ``closest to the circle" since it should not be defined in the usual sense. The roots of $\tilde P$ necessarily come in pairs satisfying the reflection symmetry provided in \cref{obs:4}. Since this reflection does not change the argument of a complex number, it is not important if a root $w$ or its reflection $1/\overline w$ is found by the root finding procedure, despite them having different distances to the circle. For this reason, it is natural to measure the distance between $w\in \C$ to the circle by
\begin{equation}
    \min\left\{ \left||w|-1 \right|, \, \left| |w|^{-1}-1 \right| \right\},
    \label{eq:defdistance}
\end{equation}
which is clearly invariant under reflection of $w$. Hence, the Root-MUSIC algorithm selects the $s$ distinct roots modulo reflection symmetry that are closest to the circle in the sense of \eqref{eq:defdistance}. Throughout, the $s$ roots of $\tilde P$ in 
$\overline \D$ that are closest to the circle are denoted
$$
\tilde w_1,\dots,\tilde w_s\in\overline \D.
$$ 
This is purely a matter of conventional convenience, since in practice, $1/\overline {\tilde w_k}$ may be found by the root finding procedure instead of $\tilde w_k$, which has no impact on the final output of the algorithm since $\tilde w_k$ and $1/\overline{\tilde w_k}$ have the same argument. 

As mentioned in the introduction and depicted in \cref{fig:wrongroots}, there is a possibility that the Root-MUSIC algorithm selects a perturbation of an extraneous root of $P$, which would lead to failure. Although the notation is suggestive, we defined $\tilde w_k$ as a root selected by the algorithm and {\it it is not defined as a perturbation of the correct root $w_k=e^{i\theta_k}$}. Under the assumptions of \cref{thm:step2}, we will show that $\tilde w_k$ is in fact a perturbation of the correct root $w_k$.


		
		
		
		
		
		

\section{Main result and outline of proof}
\label{sec:outline}

The following theorem is the main result of this paper and is proved in \cref{proof:step2}.

\begin{theorem}[Main theorem for Root-MUSIC] \label{thm:step2}
    There are absolute constants $R,\epsilon,\tau,\beta>0$ and $m_0\geq 100$ such that the following hold. For any $m\geq m_0$, $\{\theta_k\}_{k=1}^s\subset\T$ such that $\Delta(\{\theta_k\}_{k=1}^s)\geq 2\pi\beta/m$, and subspace $\tilde\bfU$ such that $\|\bfU \bfU^* -\tilde\bfU \tilde \bfU^*\|_2 \leq \epsilon$, the Root-MUSIC polynomial $\tilde P$ enjoys the following properties. 
    \begin{enumerate}
        \item 
        It has exactly $2s$ roots, denoted $\{\tilde w_k, 1/\overline{\tilde w_k}\}_{k=1}^s$, in the annulus
        $$
        \left\{z\in\C\colon 1-\frac \tau m \leq |z|\leq \left( 1- \frac \tau m\right)^{-1} \right\}.
        $$
        \item 
        For each $k\in \{1,\dots,s\}$, 
        \begin{equation}
    		\max\left\{ \left| e^{i\theta_k} -\tilde w_k \right|, \left| e^{i\theta_k} - 1/\overline{\tilde w_k}\right| \right\}\leq \frac {R}{m}\left\|\bfU\bfU^*- \tilde \bfU \tilde\bfU^* \right\|_2.  \label{eq:mainformal}
    	\end{equation}
    \end{enumerate}
\end{theorem}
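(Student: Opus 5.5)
The plan is a Rouché-type argument applied to $Q$ and $\tilde Q$ (equivalently $P$ and $\tilde P$) on carefully chosen contours. The key quantitative input is the perturbation identity
$$
\tilde Q(z)-Q(z)=\bfphi(1/\overline z)^*\big(\bfU\bfU^*-\tilde\bfU\tilde\bfU^*\big)\bfphi(z),
$$
which gives $|\tilde Q(z)-Q(z)|\le \rho\,\|\bfphi(1/\overline z)\|_2\|\bfphi(z)\|_2$. On the annulus $A_\tau:=\{1-\tau/m\le |z|\le (1-\tau/m)^{-1}\}$ both norms are bounded by absolute constants depending on $\tau$, since $|z|^{2j}\le e^{2\tau}$ for $j\le m$. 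Hence $|\tilde Q-Q|\lesssim_\tau \rho$ uniformly on $A_\tau$. We then need matching lower bounds on $|Q|$ along two kinds of curves: the boundary circles $|z|=(1-\tau/m)^{\pm1}$, and small circles $|z-e^{i\theta_k}|=r$ around each correct root.

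\textbf{Step 1 (local behaviour near correct roots).} Write $z=e^{i\theta_k}(1+\zeta/m)$ with $|\zeta|\lesssim 1$. Using the factorization \eqref{eq:FR} and the double root of $Q$ at $e^{i\theta_k}$, Taylor expansion gives
$$
Q(z)=\tfrac12 Q''(e^{i\theta_k})(z-e^{i\theta_k})^2+O\big(m^3|z-e^{i\theta_k}|^3\big).
$$
The error term follows from Bernstein-type bounds: $Q^{(3)}=O(m^3)$ on $A_\tau$ because $Q$ is a Laurent polynomial of degree $m-1$ bounded by a constant there. The quantitative bound $q''(\theta_k)\gtrsim m^2$ from \cref{lem:q''} (under the separation $\beta$) translates to $|Q''(e^{i\theta_k})|\gtrsim m^2$. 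Therefore on the circle $|z-e^{i\theta_k}|=r$ with $r=c_0/m$ small we get $|Q|\gtrsim m^2r^2$, while $|\tilde Q-Q|\lesssim\rho$. Rouché applied on $D(e^{i\theta_k},r)$ with $r\asymp \sqrt{\rho}/m$ then yields exactly two zeros of $\tilde Q$ in that disk, and reflection symmetry (\cref{obs:4}) identifies them as a pair $\tilde w_k, 1/\overline{\tilde w_k}$.

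This only gives error $\sqrt\rho/m$. To reach the sharp $\rho/m$ I would exploit the structure: $\tilde Q\ge 0$ on the circle, and on $\partial\D$ the perturbation is $\bfphi^*(\bfU\bfU^*-\tilde\bfU\tilde\bfU^*)\bfphi$. Near $\theta_k$, decompose $\bfphi(z)=\bfU\bfU^*\bfphi+\bfU_\perp\bfU_\perp^*\bfphi$; the second part has size $O(m|z-e^{i\theta_k}|)$. Using $\bfU_\perp^*\tilde\bfU$ of size $\rho$, write
$$
\tilde Q(z)=\big\|\tilde \bfU_\perp^*\bfphi\big\|^2\text{-type bilinear form}=\big(\text{linear in }(z-e^{i\theta_k})\cdot m+O(\rho)\big)^2\text{-like},
$$
i.e.\ show $\tilde Q(z)=\tilde B_k(z)\overline{\tilde B_k(1/\overline z)}$ locally, with $\tilde B_k$ having a simple zero within $O(\rho/m)$ of $e^{i\theta_k}$. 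Concretely, Rouché is then applied to the ``square root'' functions $\bfphi(1/\overline{z})^*\tilde\bfU_\perp$ versus $\bfphi(1/\overline z)^*\bfU_\perp$ in vector form, giving first-order rather than second-order perturbation: the vanishing term is linear, $|\cdot|\gtrsim m|z-e^{i\theta_k}|$, against an $O(\rho)$ error. This sharpening is the main obstacle and I expect it to consume most of the work.

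\textbf{Step 2 (no other zeros in the annulus).} For points of $A_\tau$ at distance $\ge c_0/m$ from all $e^{i\theta_k}$, I show $|Q(z)|\ge c_1>0$. On the unit circle this follows from the known MUSIC landscape results of \cite{fannjiang2025optimality}: $q(t)\gtrsim \min(1,m^2 d(t,\theta_k)^2)$ under separation. For $|z|=e^{\pm u/m}$ with $u\le\tau$, the Laurent polynomial changes by $O(\tau)$ by Bernstein's inequality, so a small enough $\tau$ preserves the lower bound. Then the argument principle on the boundary of $A_\tau$ minus the small disks, combined with $\rho\le\epsilon\ll c_1$, shows $\tilde Q$ has the same number of zeros in $A_\tau$ as $Q$, namely $2s$ (because the extraneous roots of $Q$ lie in $|z|<1-\tau/m$ by this same lower bound), all located in the disks of Step 1.

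Combining Steps 1 and 2 gives part 1 of \cref{thm:step2} and \eqref{eq:mainformal}. The constants are chosen in the order $\beta$, then $\tau,c_0$, then $\epsilon$, with $m_0$ large enough to absorb the Taylor remainders.
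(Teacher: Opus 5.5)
Your Step 2 is essentially the paper's far-field argument: a lower bound on $q$ at angular distance $\gtrsim 1/m$ from every $\theta_k$, followed by a radial Bernstein estimate. The paper has to re-derive that lower bound by enlarging $\beta$, because the cited result only covers distances $\ge\pi/m$. Combined with your coarse Rouch\'e count on disks of radius $\asymp\sqrt\rho/m$, this already gives part (1), modulo bookkeeping, once $\epsilon$ is small relative to $\tau$.

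The genuine gap is part (2). Your proposed sharpening is Rouch\'e applied ``in vector form'' to $\bfphi(1/\overline z)^*\tilde\bfU_\perp$ versus $\bfphi(1/\overline z)^*\bfU_\perp$. This is not a valid zero-counting tool, for three reasons.
\begin{itemize}
\item The vector function $z\mapsto\tilde\bfU_\perp^*\bfphi(z)\in\C^{m-s}$ is an overdetermined system in one complex variable. Generically it has no zeros at all; for instance $\tilde q>0$ on $\partial\D$ in general.
\item The roots of $\tilde Q(z)=\bfphi(1/\overline z)^*\tilde\bfU_\perp\tilde\bfU_\perp^*\bfphi(z)$ are zeros of a pairing of two nonvanishing vectors, not zeros of either factor. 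So nothing here produces a local scalar factor $\tilde B_k$ with a simple zero near $e^{i\theta_k}$; the Fej\'er--Riesz factor $\tilde B$ is global and has no a priori local control.
\item Comparing $\tilde\bfU_\perp$ with $\bfU_\perp$ directly is not even basis-independent.
\end{itemize}
As written, your argument proves only $|\tilde w_k-e^{i\theta_k}|\lesssim\sqrt\rho/m$.

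The missing idea is that no square root is needed. The scalar perturbation $P-\tilde P$ is itself small to \emph{second} order at $w_k$ and grows only linearly away from it.
\begin{itemize}
\item The paper proves $|P(z)-\tilde P(z)|\le\rho^2+2m\rho|z-w_k|$ for $z\in\overline\D$, and via reflection the same bound up to a factor $(1+r)^{2m-2}$ on $D(w_k,r)$.
\item The inputs are the cancellation $|q(\theta_k)-\tilde q(\theta_k)|\le\rho^2$ and Bernstein's inequality for the degree-$(2m-2)$ polynomial $P-\tilde P$, whose sup on $\partial\D$ is at most $\rho$.
\end{itemize}

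Your own decomposition gives this estimate directly, so you can keep everything scalar.
\begin{itemize}
\item The $\bfU$--$\bfU$ block of $\bfU\bfU^*-\tilde\bfU\tilde\bfU^*$ equals $\bfU^*\tilde\bfU_\perp\tilde\bfU_\perp^*\bfU$, which has norm $\rho^2$. Since $\bfphi(w_k)\in\bfU$, this gives $\tilde Q(w_k)-Q(w_k)\le\rho^2$.
\item Split both $\bfphi(z)$ and $\bfphi(1/\overline z)$ into $\bfU$ and $\bfU_\perp$ components, and use that the $\bfU_\perp$ components have norm $\lesssim m|z-w_k|$ for $|z-w_k|\lesssim1/m$.
\item This yields $|\tilde Q(z)-Q(z)|\lesssim\rho^2+m\rho|z-w_k|+m^2\rho|z-w_k|^2$.
\end{itemize}
On $\partial D(w_k,R\rho/m)$ the right side is $\lesssim(1+R)\rho^2$, while $|P|\gtrsim c_0R^2\rho^2$. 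Quadratic beats linear in $R$ at the same $\rho^2$ scale, so ordinary scalar Rouch\'e gives exactly two roots in $D(w_k,R\rho/m)$ for an absolute $R$.

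One further detail you gloss over: two roots in a disk centred on $\partial\D$ need not form a reflection pair, because the reflection of an interior point can leave the disk. The paper fixes this with a second Rouch\'e count showing exactly two roots in the larger disk $D(w_k,\alpha/m)$, which contains the reflection.
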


Part (1) of this theorem gives us that $\tilde P$ has at least $s$ roots (mod reflection), and that the $s$ roots produced by the Root-MUSIC algorithm (which selects the $s$ roots closest to the unit circle) are the only roots in an explicit annulus around the unit circle. 
Part (2) confirms that those roots are close to $\{e^{i\theta_k}\}_{k=1}^s$ as desired. It also says that as the subspace error decreases or the number of samples increases, the distances of the selected roots of $\tilde P$ to $\{e^{i\theta_k}\}_{k=1}^s$ decrease as well. 

The bound in \eqref{eq:mainformal} cannot be predicted from heuristic arguments and standard wisdom. Since $\|\bfphi(w)\|_2=1$ whenever $w\in\partial\D$, 
$$
\sup_{w\in \partial\D} |P(w)-\tilde P(w)|
\leq \left\|\bfU\bfU^*- \tilde \bfU \tilde\bfU^* \right\|_2.
$$
That is, the perturbation of $P$ on the circle scales linearly with the subspace error with a constant that does not depend on $m$ (this is generally sharp and cannot be improved). To make matters worse, perturbation theory says that double roots are generally more sensitive to perturbations than simple roots \cite{simmonds1998first} . To give an example, the quadratic polynomial $z^2$ has a double root at $z=0$, yet a $\epsilon>0$ perturbation of $z^2$ may result in the perturbed polynomial $z^2-\epsilon$, which has a roots $\pm \sqrt \epsilon$. Each $e^{i\theta_k}$ is a double root of $P$, so general theory cannot produce anything better than
$$
\max\left\{ \left| e^{i\theta_k} -\tilde w_k \right|, \left| e^{i\theta_k} - 1/\overline{\tilde w_k}\right| \right\}\lesssim \left\|\bfU\bfU^*- \tilde \bfU \tilde\bfU^* \right\|_2^{1/2}. 
$$
Here, the implicit constant at best does not grow in $m$ and the square root comes from $e^{i\theta_k}$ being a double root. This bound is obviously much worse than \eqref{eq:mainformal} which scales linearly in the subspace error and has an additional $1/m$ factor. 

An outline of the proof of \cref{thm:step2} is given in \cref{sec:outline}. There, we explain how $P$ has special properties and $P-\tilde P$ has hidden cancellations in an appropriate neighborhood of each $e^{i\theta_k}$, which enables us to prove \cref{thm:step2} through Rouche's theorem and parameter tuning arguments. It is precisely these special properties which enables $P$ and $\tilde P$ to avoid bad behavior that are present for general polynomials with double roots.

\subsection{Outline of Proof of \cref{thm:step2}}
\label{sec:outline1} 

To relate the zeros of $P$ and $\tilde P$, we will use Rouche's theorem, \cite[page 153]{ahlfors1979complex}. 

\begin{theorem}[Rouche] \label{thm:rouche}
	Let $\gamma$ be homologous to zero in a set $U\subset \C$ such that for each $z\not\in \gamma$, the winding number $n(\gamma,z)$ of $\gamma$ with respect to $z$ is either zero or one. If $f$ and $g$ are analytic in $U$ and $|f(z)-g(z)|<|f(z)|$ for all $z\in \gamma$, then $f$ and $g$ have the same number of zeros in the set enclosed by $\gamma$. 
\end{theorem}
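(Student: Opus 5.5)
The statement is the classical theorem of Rouch\'e in the generality of Ahlfors, and I would prove it by reducing to the argument principle for cycles homologous to zero. Write $\Omega$ for the set enclosed by $\gamma$, meaning the points $z\notin\gamma$ with $n(\gamma,z)=1$; by hypothesis every other point off $\gamma$ has winding number $0$. Recall the general argument principle: if $h$ is analytic in $U$, not identically zero on any component meeting $\gamma$, and has no zeros on $\gamma$, then
\begin{equation*}
\frac{1}{2\pi i}\int_\gamma \frac{h'(z)}{h(z)}\,dz=\sum_{h(a)=0} n(\gamma,a),
\end{equation*}
where zeros are counted with multiplicity. Since $\gamma\sim 0$ in $U$, only finitely many zeros have nonzero winding number. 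Given the winding number hypothesis, the right side is exactly the number of zeros of $h$ in $\Omega$. I would take this formula as known, since it follows from the general Cauchy theorem for cycles homologous to zero.

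\emph{Step 1: neither function vanishes on $\gamma$.} The hypothesis $|f-g|<|f|$ on $\gamma$ forces $f(z)\neq 0$ there. It also forces $g(z)\neq 0$, since otherwise we would get $|f(z)|<|f(z)|$. Hence the argument principle applies to both $f$ and $g$. This needs $f$ and $g$ not identically zero near $\gamma$, which is automatic because neither vanishes on $\gamma$.

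\emph{Step 2: compare the two counts via the quotient $F=g/f$.} The function $F$ is meromorphic in $U$ and analytic and nonzero on a neighborhood of $\gamma$. Dividing the hypothesis by $|f|$ gives $|F(z)-1|<1$ on $\gamma$, so $F(\gamma)$ lies in the disk $D(1,1)$. That disk avoids $0$ and is contained in the right half-plane. Using $F'/F=g'/g-f'/f$, the difference of the zero counts is
\begin{equation*}
N_g-N_f=\frac{1}{2\pi i}\int_\gamma \frac{F'(z)}{F(z)}\,dz=\frac{1}{2\pi i}\int_{F\circ\gamma}\frac{dw}{w}=n(F\circ\gamma,0).
\end{equation*}
Since $F\circ\gamma$ is a cycle inside $D(1,1)$, and $0$ lies in the unbounded complement of that disk, this winding number is $0$. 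A concrete way to see this is that the principal branch $\operatorname{Log}F$ is a primitive of $F'/F$ near $\gamma$, and the integral of an exact differential over a cycle vanishes. Therefore $N_g=N_f$.

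\emph{Alternative route.} A homotopy argument with $h_t=f+t(g-f)$, $t\in[0,1]$, works equally well. On $\gamma$ one has $|h_t|\ge |f|-t|f-g|>0$, so the integral $\frac{1}{2\pi i}\int_\gamma h_t'/h_t$ is integer-valued and continuous in $t$, hence constant.

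\emph{Main obstacle.} The only delicate point is justifying the general argument principle for a cycle homologous to zero in an arbitrary open set $U$, rather than for a simple closed curve. In particular one must check that only finitely many zeros carry nonzero winding number. I would handle this by noting that $\{z\colon n(\gamma,z)\neq 0\}\cup\gamma$ is a compact subset of $U$, so the zeros there are finite in number. One then applies Cauchy's theorem to $h'/h-\sum_a 1/(z-a)$, which is analytic in $U$ after removing these finitely many singularities.
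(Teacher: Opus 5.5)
Your proof is correct and takes the same route as the standard proof in Ahlfors; the paper does not prove this statement and only cites Ahlfors. That proof applies the argument principle for cycles homologous to zero to $F=g/f$ and uses $F(\gamma)\subset D(1,1)$ to get $n(F\circ\gamma,0)=0$. One small precision in your last paragraph: $h'/h-\sum_a 1/(z-a)$ still has poles at the zeros of $h$ with winding number $0$ (and $h$ could vanish identically on a component of $U$ not meeting $\gamma$), so Cauchy's theorem should be applied in $U$ with those points and components removed, where $\gamma$ is still homologous to zero.
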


As mentioned in the introduction, it is convenient to think of $\tilde \bfU$ as fixed and phrase everything in terms of the subspace error,
$$
\rho
:= \left\|\bfU\bfU^*- \tilde \bfU \tilde\bfU^* \right\|_2.
$$

{\bf Step 1. Control over the correct roots.} Recall our notation that $w_1=e^{i\theta_1},\dots,w_s=e^{i\theta_s}$ are the roots of $P$ on $\partial \D$, where each one has multiplicity two. For appropriately chosen $R>0$ and an $\alpha > R\rho$, we will apply Rouche's theorem to $P$ and $\tilde P$, and use circles 
$$
C_k:=\partial D\left(w_k, \frac{R\rho} m\right), \quad \tilde C_k:=\partial D\left(w_k, \frac{\alpha} m\right). 
$$
Assuming this can be done,  Rouche's theorem implies that $\tilde P$ has 
\begin{itemize}
    \item 
    exactly two roots of the form $z_k$ and $1/\overline{z_k}$ in the disk enclosed by $C_k$.
    \item 
    exactly two roots in the disk enclosed by $\tilde C_k$. 
\end{itemize}
Together, they imply that $z_k$ and $1/\overline{z_k}$ are both inside the disk enclosed by $C_k$. In order to use Rouche's theorem, we need to to lower bound $|P(z)|$ (step 1a) and upper bound $|P(z)-\tilde P(z)|$ (step 1b), when $z\in C_k, \tilde C_k$. 

{\bf Step 1a. Lower bounds for $|P|$ near $w_k$.} Since $P$ has a double root at $w_k$, it is locally a quadratic function. Whenever $z$ is near $w_k$, we expect that  
$$
P(z)= \frac 12 P''(w_k)(z-w_k)^2 \left( 1+O(z-w_k)\right) \quad \text{as} \quad z\to w_k.
$$
To derive a quantitative bound for how small $|z-w_k|$ needs to be, the key step is to show that 
$
|P''(w_k)|\geq c_0 m^2
$
for an absolute constant $c_0>0$. This step leverages a result from prior analysis of MUSIC in \cite{fannjiang2025optimality}. This will allow us to establish that
\begin{equation}
    \label{eq:Plower0}
    |P(z)|
    \gtrsim m^2 |z-w_k|^2
    = \begin{cases}
        R^2 \rho^2 &\forspace z\in C_k, \\
         \alpha^2 &\forspace z\in \tilde C_k. 
    \end{cases}
\end{equation}
 
{\bf Step 1b. Upper bounds for $|P-\tilde P|$ near $w_k$.} Let us first examine the simpler matter of controlling $|P-\tilde P|$ on the unit circle. A naive estimate for this is 
$$
\|P-\tilde P\|_{L^\infty(\partial \D)}
\leq \rho.
$$
This  is far too weak to invoke Rouche's theorem for $C_k$ because in view of \eqref{eq:Plower0}, we will not be able to establish that $|P(z)-\tilde P(z)| < |P(z)|$ for $z\in C_k$ and all small $\rho$. So at this point, it seems impossible to use Rouche's theorem without imposing a stronger noise condition or making the radius of $C_k$ larger (hence would yield a worse perturbation bound). 

However, there is some additional structure in $P$ that leads to some hidden cancellations. We establish a substantially stronger local estimate,
\begin{equation}
    \label{eq:Pupper0}
    |P(z)-\tilde P(z)|\lesssim \begin{cases}
        R\rho^2 &\forspace z\in C_k,\\
        \alpha &\forspace z\in \tilde C_k.
    \end{cases}
\end{equation}
Importantly, the upper bound \eqref{eq:Pupper0} is linear in $R, \alpha$ while the lower bound \eqref{eq:Plower0} is quadratic in $R, \alpha$, and both have the same scaling in $\rho$. By making $R, \alpha$ large enough absolute constants, we can apply Rouche's theorem. The case for $\tilde C_k$ is proven as \cref{lem:coarseroots} to set up the following theorem, which is proved in \cref{proof:step1}.

\begin{theorem}
	\label{thm:step1}
	There are absolute constants $R,\epsilon_1>0$ such that the following hold. Let $m\geq 100$, $\{\theta_k\}_{k=1}^s\subset\T$ such that $\Delta(\{\theta_k\}_{k=1}^s)\geq 8\pi/m$, and $\tilde\bfU$ such that 
    $$
    \rho:=\|\bfU\bfU^*-\tilde\bfU\tilde\bfU^*\|_2\leq \epsilon_1.
    $$
    The Root-MUSIC polynomial $\tilde P$ has exactly two roots with reflection symmetry that are contained in the disk $D(w_k,R\rho/m)$ for each $k\in \{1,\dots,s\}$. 
\end{theorem}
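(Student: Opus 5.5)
The proof will follow the outline of Step 1: apply Rouch\'e's theorem (\cref{thm:rouche}) twice around each $w_k=e^{i\theta_k}$, with $f=P$ and $g=\tilde P$. We use the small circle $C_k=\partial D(w_k,R\rho/m)$ and the larger circle $\tilde C_k=\partial D(w_k,\alpha/m)$, with absolute constants $R<\alpha$ to be tuned. The coarse result for $\tilde C_k$ is \cref{lem:coarseroots}: $\tilde P$ has exactly two roots in $D(w_k,\alpha/m)$. Since $P$ has a double root at $w_k$ and no other roots there (by the separation $\Delta\ge 8\pi/m$ and the lower bound below), Rouch\'e on $C_k$ will show that $\tilde P$ has exactly two roots in $D(w_k,R\rho/m)$.

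The reflection statement follows from \cref{obs:4}. If $z$ is a root in the small disk, so is $1/\overline z$. For $|z-w_k|\le R\rho/m$ with $R\rho/m$ small, we have $|1/\overline z-w_k|\le |z-w_k|(1+O(R\rho/m))$, so $1/\overline z$ lies in the larger disk $D(w_k,\alpha/m)$. That disk contains only two roots, and they are already in $D(w_k,R\rho/m)$, so the pair is $\{z,1/\overline z\}$. This covers both the case $z\neq 1/\overline z$ and the case $|z|=1$, where the root is double. A slightly enlarged radius, absorbed into $R$, handles the boundary case.

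The two analytic ingredients are as follows.

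\emph{Lower bound (Step 1a).} We write $P(z)=z^{m-1}Q(z)$. On the circle, $q''(\theta_k)\ge c_0m^2$ by \cref{lem:q''}, which converts to $|P''(w_k)|\gtrsim m^2$. To control the remainder of the Taylor expansion, we use Bernstein-type bounds for the Laurent polynomial $Q$ of degree $m-1$ on the annulus $|\,|z|-1|\le c/m$, where $|z|^{\pm(m-1)}$ stays bounded. These give $|P'''|\lesssim m^3$ on $D(w_k,\alpha/m)$. Hence
$$|P(z)|\ge \tfrac12|P''(w_k)||z-w_k|^2-Cm^3|z-w_k|^3\gtrsim m^2|z-w_k|^2$$
once $|z-w_k|\le c_1/m$. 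This requires $\alpha\le c_1$, and then $R\rho\le\alpha$ forces $\rho\le\epsilon_1$.

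\emph{Upper bound (Step 1b), the main obstacle.} Write $\bfU\bfU^*-\tilde\bfU\tilde\bfU^*=\bfE$. Then
$$\tilde P(z)-P(z)=z^{m-1}\bfphi(1/\overline z)^*\bfE\bfphi(z).$$
The hidden cancellation comes from the identity $\tilde\bfU_\perp\tilde\bfU_\perp^*-\bfU_\perp\bfU_\perp^*=-\bfE$ together with $\bfU_\perp^*\bfphi(w_k)=0$. Set $\bfu(z)=\bfphi(z)-\bfphi(w_k)$ and $\bfv(z)=\bfphi(1/\overline z)-\bfphi(w_k)$; both have norm $\lesssim m|z-w_k|$ by the derivative bound $\|\bfphi'\|\lesssim m$ near the circle. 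The plan is to expand the bilinear form by splitting $\bfphi(z)=\bfphi(w_k)+\bfu(z)$ and likewise for $\bfphi(1/\overline z)$. The term $\bfphi(w_k)^*\bfE\bfphi(w_k)$ equals $\|\tilde\bfU_\perp^*\bfphi(w_k)\|^2$, and since $\bfphi(w_k)\in\bfU$ this is at most $\rho^2$. The cross terms $\bfphi(w_k)^*\bfE\bfu$ are the delicate ones. Since $\bfphi(w_k)\in\bfU$, we have $\bfE\bfphi(w_k)=\tilde\bfU_\perp\tilde\bfU_\perp^*\bfphi(w_k)$, whose norm is at most $\rho$, so each cross term is bounded by $\rho\,m|z-w_k|$. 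The quadratic term $\bfv^*\bfE\bfu$ is bounded by $\rho\, m^2|z-w_k|^2$. Altogether, on $|z-w_k|=r/m$,
$$|P-\tilde P|\lesssim \rho^2+\rho r+\rho r^2.$$
For $r=R\rho$ this is $\lesssim R\rho^2$, using $R\rho\le 1$. For $r=\alpha$ it is $\lesssim\alpha$, since $\rho\le1$.

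Comparing with the lower bounds $R^2\rho^2$ and $\alpha^2$ respectively, we choose $R$ a large absolute constant. Fixing $\alpha$ small enough for Step 1a while still large relative to the implicit constants is the delicate part of the tuning. If that fails, the crude $\rho\alpha$ cross term should be refined to make the bound $\lesssim\rho\alpha$, which suffices as long as $\rho\le\epsilon_1\ll\alpha$. Finally we take $\epsilon_1$ small so that $R\epsilon_1<\alpha$, which ensures $C_k$ lies inside $\tilde C_k$ and the disks around different $w_k$ are disjoint by separation.
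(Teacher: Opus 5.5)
Your proof is correct, and its architecture matches the paper's. You apply Rouch\'e on $\partial D(w_k,R\rho/m)$, with a lower bound quadratic in $R$ against an upper bound linear in $R$. Then \cref{lem:coarseroots} plus reflection symmetry pins down the pair. Your Step 1a is also essentially \cref{lem:Ppowerdisk}: you use a Taylor remainder with a bound on $P'''$ slightly off the circle, while the paper expands the full power series at $w_k$ and applies Bernstein only on $\partial\D$.

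The genuine difference is Step 1b. The paper sets $H=P-\tilde P$ and imports $|H(w_k)|\le\rho^2$ from \eqref{eq:qerror2}. It combines this with $\|H\|_{L^\infty(\partial\D)}\le\rho$ and Bernstein to get $|H(z)|\le 2m\rho|z-w_k|+\rho^2$ on $\overline\D$ (\cref{lem:Pradialdisk}). It then transfers the bound to $|z|>1$ via $|H(1/\overline z)|=|z|^{-2m+2}|H(z)|$. You instead expand $z^{m-1}\bfphi(1/\overline z)^*\boldsymbol{E}\bfphi(z)$, with $\boldsymbol{E}:=\bfU\bfU^*-\tilde\bfU\tilde\bfU^*$, around $\bfphi(w_k)$.

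What your route buys:
\begin{itemize}
\item It is self-contained: the $\rho^2$ bound at $w_k$ falls out of $\bfphi(w_k)\in\bfU$ rather than being cited.
\item It treats both sides of the circle at once, with no Bernstein step for $H$ and no reflection identity.
\item It shows that the only delicate term is the zeroth-order one. The cross terms need nothing beyond $\|\boldsymbol{E}\|_2=\rho$, so your appeal to $\boldsymbol{E}\bfphi(w_k)=\tilde\bfU_\perp\tilde\bfU_\perp^*\bfphi(w_k)$ is correct but unnecessary.
\end{itemize}
The cost is an extra $\rho r^2$ term, which is harmless for bounded $r$. The paper's route is shorter and gives explicit constants. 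Also, using the lower bound to rule out other zeros of $P$ in the small disk is more careful than the paper's appeal to separation alone, which says nothing about extraneous roots.

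Some small points to clean up:
\begin{itemize}
\item The identity should read $\tilde\bfU_\perp\tilde\bfU_\perp^*-\bfU_\perp\bfU_\perp^*=+\boldsymbol{E}$. Your computation of $\boldsymbol{E}\bfphi(w_k)$ already uses the right sign.
\item ``$R<\alpha$'' should be $R\epsilon_1<\alpha$, with some margin, as in the paper's $4R\epsilon_1\le\alpha$.
\item The hedge in your last paragraph is unneeded. Your expansion already gives $\lesssim\rho^2+\rho\alpha+\rho\alpha^2\lesssim\rho\alpha$ on $\partial D(w_k,\alpha/m)$ once $\rho\le\alpha\le1$. Only the step ``$\lesssim\alpha$ since $\rho\le1$'' discards the factor that makes Rouch\'e work there. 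The right order is to fix $\alpha$ small for Step 1a, then take $\epsilon_1\ll\alpha$. This is exactly how \cref{lem:coarseroots} is proved, and you cite that lemma anyway.
\item The claim that a root on $\partial\D$ must be double needs its one-line reason. Since $\tilde q(t)=\|\tilde\bfU_\perp^*\bfphi(e^{it})\|_2^2\ge0$, zeros of $\tilde Q$ on the circle have even multiplicity. The paper glosses over this case too.
\end{itemize}
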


{\bf Step 2. Control over the extraneous roots.} The first step is insufficient to complete the main objective of this paper because the remaining roots of $\tilde P$ could be closer to $\partial\D$ than the $s$ roots (mod reflection) described in the theorem, which could result in the Root-MUSIC algorithm selecting a perturbed extraneous root. To show that this does not happen, we will prove a stronger assertion that $\tilde P$ has no other roots in an appropriate annulus containing $\partial \D$.

It is helpful to work with wedges that respect the natural reflection symmetry of the Root-MUSIC polynomial. For a $w\in \partial\D$, $r \in (0,1)$, and $\phi\in [0,\pi]$, define the closed wedge
\begin{equation}
	W(w,r,\phi)
	:=\left\{z\in \C\colon r \leq |z|\leq r^{-1}, \, |\arg(z)-\arg(w)|\leq \phi \right\}. \label{eq:defwedge}
\end{equation}
Note that $W(w,r,\phi)$ is invariant under reflection across the circle. Also, $W(w,r,\pi)$ is an annulus with radial thickness asymptotically equal to $2(1-r)$ as $r\to 1$.

The second assertion of \cref{thm:step1} implies that $\tilde P$ has no other roots besides $z_k$ and $1/\overline{z_k}$ in a $\alpha/m$ neighborhood of each $w_k$. We will show that for some absolute constants $\tau_0,\alpha_0>0$, the polynomial $\tilde P$ has no other roots aside from $z_k$ and $1/\overline{z_k}$ in the wedge 
$$
W_k:=W\left(w_k, 1-\frac {\tau_0} m, \frac {\alpha_0} m\right). 
$$

Next, we will show that $\tilde P$ has no other zeros aside from $\{ z_k\}_{k=1}^s$ and their reflections in the annulus
\begin{equation}
    \label{eq:naturalannulus}
    \left\{z\in\C\colon 1-\frac {\tau_0} m \leq |z|\leq \left( 1- \frac {\tau_0} m\right)^{-1} \right\}. 
\end{equation}
This will be accomplished by first deriving a lower bound for $|P(z)|$ when $|\arg(z)-\arg(w_k)|\geq \alpha_0/m$ for each $w_k$. Through a radial perturbation argument, we will show that if $z$ lies in the annulus \eqref{eq:naturalannulus} and $|\arg(z)-\arg(w_k)|\geq \alpha_0/m$ for all $k$, then $|\tilde P(z)|>0$. This will show that $\tilde P$ has precisely $s$ roots (mod reflection) in an annulus of thickness $\tau_0/m$. 

Consequently, the Root-MUSIC algorithm will always select either $z_k$ or $1/\overline{z_k}$ for $k=1,\dots,s$. As a result, we can say that $z_k=\tilde w_k$.

\section{Subspace estimation for two models}
\label{sec:subspace}

\subsection{Single-snapshot model}
\label{sec:subspace1}

Consider a finite Fourier series $h\colon\R\to\C$ defined as
\begin{equation}
	\label{eq:hdef}
	h(t):= \sum_{k=1}^s a_k e^{i\theta_k t}. 
\end{equation}
Without the knowledge of its ``amplitudes" $\bfa=\{a_k\}_{k=1}^s\subset\C\setminus \{0\}$ and its ``frequencies" $\{\theta_k\}_{k=1}^s\subset \T$, the single-snapshot spectral estimation problem is to accurately approximate $\{\theta_k\}_{k=1}^s$ from perturbed samples of $h$. Suppose we collect $2m-1$ consecutive samples,
$$
\tilde \bfy = \begin{bmatrix} h(j) \end{bmatrix}_{j=0,1,\dots,2m-2}+\bfeta, 
$$
where $\bfeta\in \C^{2m-1}$ represents some unknown perturbation which could be deterministic/adversarial or stochastic. By rescaling, we can assume without loss of generality that $|a_k|\geq 1$ for each $k$. 

To compute a $\tilde\bfU$, we form a Hankel matrix
$$
H(\tilde \bfy)
=\begin{bmatrix}
    \tilde y_0 &\tilde y_1 &\cdots &\tilde y_{m-1} \\
    \tilde y_1 &\iddots &\iddots &\tilde y_{m} \\
    \vdots &\iddots &\iddots &\vdots \\
    \tilde y_{m-1} &\tilde y_m &\cdots &\tilde y_{2m-2} 
\end{bmatrix} \in \C^{m\times m}. 
$$
Under appropriate conditions,
\begin{equation}
    \label{eq:defUtilde1}
    \tilde \bfU:= \text{top $s$-left singular vectors of } H(\bfy).
\end{equation}
The following result was implicitly proved in the proof of \cite[Theorems 4.1 and 4.2]{fannjiang2025optimality}. It is essentially a combination of Wedin's theorem with quantitative control over the smallest singular value of Fourier matrix \cite{aubel2019vandermonde} when $m\Delta \gtrsim 1$. 

\begin{proposition}
    There are absolute constants $C,c>0$  such that the following hold. Let $m\geq 100$ and $\{\theta_k\}_{k=1}^s\subset\T$ such that $\Delta(\{\theta_k\}_{k=1}^s)\geq 8\pi/m$. 
    \begin{enumerate}[(a)]
        \item 
        Let $p\in [1,\infty]$ and $\bfeta\in \C^{2m-1}$ with $\|\bfeta\|_p\leq c m^{1/p}$. Then $\tilde \bfU$ in \eqref{eq:defUtilde1} is well-defined and 
        $$
        \left\|\bfU \bfU^* - \tilde \bfU\tilde \bfU^*\right\|_2
        \leq \frac{C \|\bfeta\|_p}{m^{1/p}}. 
        $$
        \item 
        Let $\bfeta\sim \calN(\bfzero, \bfSigma)$ where $\bfSigma$ is diagonal. For any $t>0$, with probability at least $1-2m^{1-t^2}-2m\exp(-c m^2/\tr(\bfSigma))$, the subspace $\tilde\bfU$ in \eqref{eq:defUtilde1} is well-defined and 
        $$
        \left\|\bfU \bfU^* - \tilde \bfU\tilde \bfU^* \right\|_2
        \leq \frac{C t \sqrt{\tr(\bfSigma) \log(m)}}{m}. 
        $$
    \end{enumerate}
    \label{prop:subspace1}
\end{proposition}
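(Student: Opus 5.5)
The approach is the standard Wedin-plus-Vandermonde template. Write the noiseless samples as $\bfy=[h(j)]_{j=0}^{2m-2}$. Then
$$
H(\bfy)=\bfPhi\,\mathrm{diag}(\bfa)\,\bfPhi^\top,
$$
so $H(\bfy)$ has rank $s$ and its column space is exactly $\bfU$. Since $\tilde\bfU$ is the top-$s$ left singular space of $H(\tilde\bfy)=H(\bfy)+H(\bfeta)$, Wedin's $\sin\Theta$ theorem (in the form valid for an exactly rank-$s$ unperturbed matrix) gives
$$
\|\bfU\bfU^*-\tilde\bfU\tilde\bfU^*\|_2\le \frac{2\|H(\bfeta)\|_2}{\sigma_s(H(\bfy))}
$$
whenever $\|H(\bfeta)\|_2\le \sigma_s(H(\bfy))/2$. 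This smallness condition also ensures that the $s$-th and $(s+1)$-st singular values of $H(\tilde\bfy)$ are separated, so $\tilde\bfU$ is well-defined. The proof then reduces to a lower bound on $\sigma_s(H(\bfy))$ and upper bounds on $\|H(\bfeta)\|_2$ in the two noise models.

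\textbf{Lower bound.} We have
$$
\sigma_s(H(\bfy))\ge \sigma_s(\bfPhi)^2\min_k|a_k|\ge \sigma_s(\bfPhi)^2,
$$
using $|a_k|\ge1$ and $\sigma_s(\bfPhi^\top)=\sigma_s(\bfPhi)$. The result of \cite{aubel2019vandermonde} (a Gershgorin- or Selberg-type bound) gives $\sigma_s(\bfPhi)^2\ge m-2\pi/\Delta$ for $\Delta>2\pi/m$. Under $\Delta\ge 8\pi/m$ this yields $\sigma_s(\bfPhi)^2\ge 3m/4$, hence $\sigma_s(H(\bfy))\gtrsim m$.

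\textbf{Part (a).} For a Hankel matrix, Young's/Schur's test gives $\|H(\bfeta)\|_2\le\|\bfeta\|_1$. Combined with the trivial bound by the Frobenius norm,
$$
\|H(\bfeta)\|_2\le \|H(\bfeta)\|_F\le\sqrt m\,\|\bfeta\|_2,
$$
and the $\ell^\infty$ bound $\|H(\bfeta)\|_2\le m\|\bfeta\|_\infty$ (maximum row sum), the cleanest route is to interpolate these as $\|H(\bfeta)\|_2\le m^{1-1/p}\|\bfeta\|_p$ for all $p\in[1,\infty]$, via Riesz--Thorin applied to the linear map $\bfeta\mapsto H(\bfeta)$ between the endpoints $p=1$ and $p=\infty$. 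Dividing by $\sigma_s\gtrsim m$ gives $C\|\bfeta\|_p/m^{1/p}$. The hypothesis $\|\bfeta\|_p\le cm^{1/p}$ with $c$ small guarantees the Wedin smallness condition.

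\textbf{Part (b).} Write
$$
H(\bfeta)=\sum_{j=0}^{2m-2}\eta_j\,\bfA_j,
$$
where $\bfA_j$ is the 0/1 anti-diagonal indicator matrix, so $\|\bfA_j\|_2\le1$. This is a Gaussian matrix series, so matrix Gaussian concentration (Tropp) applies with variance proxy
$$
v=\Big\|\sum_j \Sigma_{jj}\bfA_j\bfA_j^*\Big\|_2\le \max_j \Sigma_{jj}\cdot m .
$$
A more careful bound uses that each row of $H$ touches $m$ distinct entries, giving $v\le \tr(\bfSigma)$. With the dimension factor $2m$, this yields
$$
\|H(\bfeta)\|_2\le t\sqrt{2\,\tr(\bfSigma)\log m}
$$
with probability at least $1-2m^{1-t^2}$. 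On this event, dividing by $\sigma_s\gtrsim m$ gives the stated bound $Ct\sqrt{\tr(\bfSigma)\log m}/m$. The Wedin condition $\|H(\bfeta)\|_2\le cm$ fails with probability at most $2m\exp(-cm^2/\tr(\bfSigma))$, by the same tail bound taken at level $cm$; that accounts for the second probability term.

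\textbf{Main obstacle.} The step needing most care is the variance computation in (b). One must bound $\|\sum_j\Sigma_{jj}\bfA_j\bfA_j^*\|_2$, which is the diagonal matrix whose $i$-th entry is $\sum_{k}\Sigma_{i+k,i+k}$, by $\tr(\bfSigma)$ rather than by something larger, and then track the constants so that the two failure probabilities appear in exactly the stated form.
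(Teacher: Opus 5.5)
Your proposal is correct and is essentially the argument the paper relies on; the paper gives no proof of its own and defers to the proofs of Theorems 4.1--4.2 in \cite{fannjiang2025optimality}. That argument is the same combination you use: Wedin's $\sin\Theta$ theorem for the exactly rank-$s$ matrix $H(\bfy)=\bfPhi\,\mathrm{diag}(\bfa)\bfPhi^\top$, the lower bound $\sigma_s(\bfPhi)^2\gtrsim m$ from \cite{aubel2019vandermonde}, the bound $\|H(\bfeta)\|_2\le m^{1-1/p}\|\bfeta\|_p$, and a matrix Gaussian series with variance proxy at most $\tr(\bfSigma)$ and dimension factor $2m$, evaluated at the levels $t\sqrt{2\tr(\bfSigma)\log m}$ and $cm$ to produce exactly the two failure terms. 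As a minor simplification, you can replace the interpolation step in (a) by the Schur test: every row and column of $H(\bfeta)$ consists of $m$ consecutive entries of $\bfeta$, so H\"older bounds each row and column sum by $m^{1-1/p}\|\bfeta\|_p$ directly.
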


In the first case of deterministic noise, by assuming $\|\bfeta\|_p \leq c m^{1/p}$ for an absolute $c>0$, the subspace error can be made small by making $c$ small enough. Combining this with \cref{thm:step2} yields the conclusions in the first row of \cref{table:mainresults}. 

In the second case of stochastic noise, a standard example is $\bfSigma=\sigma^2 \bfI$, in which case $\tr(\bfSigma)=\sigma^2 (2m-1)$ and the subspace error is $\lesssim \sigma \sqrt{\log(m)/m}$. By assuming $m$ is large enough, the subspace error can be made arbitrarily small. Combining this with \cref{thm:step2} yields the conclusions in the second row of \cref{table:mainresults}. 

In both cases, if the subspace error is small enough, then the correct value of $s$ can be determined by examining the singular values of $H(\tilde \bfy)$ as shown in \cite[Lemma 5.6]{fannjiang2025optimality}. 

\begin{remark}[Statement in terms of noise-to-signal ratio]
\label{rem:NSR} 
    The standard definition of noise-to-signal ratio 
    $\gamma^2$ is the ratio of the noise energy $\|\bfeta\|_2^2$ versus signal energy $\|\bfy\|_2^2$, where $\bfy$ denotes the noiseless signal and is $\tilde \bfy$ when $\bfeta=\bfzero$. Recall that $\sigma_1(\bfPhi)\asymp \sigma_s(\bfPhi)\asymp\sqrt m$ when $m\Delta \geq 8\pi$ due to \cite{aubel2019vandermonde}. Also recall that $\min |a_k|\geq 1$ and that $\bfy=\bfPhi \bfa$. Putting these together, we see that
    $$
    \frac{\|\bfeta\|_2^2}{m \|\bfa\|_2^2}
    \lesssim \frac{\|\bfeta\|_2^2}{\|\bfy\|_2^2}
    \lesssim \frac{\|\bfeta\|_2^2}{m}.
    $$
    \cref{prop:subspace1} says that if $\gamma\lesssim 1$, then the subspace error is $\lesssim \gamma$ and \cref{thm:step2} says that Root-MUSIC estimates the signal frequencies with error $\lesssim \gamma/m$.
\end{remark}

\subsection{Multi-snapshot model}
\label{sec:subspace2}

For $\ell=1,\dots,n$, consider a finite Fourier series $h_\ell\colon\R \to\C$ defined as
\begin{equation}
	\label{eq:hdef2}
	h_\ell(t):= \sum_{k=1}^s a_{k,\ell} \, e^{i\theta_k t}. 
\end{equation}
For each $\ell$, the amplitude vector $\bfa_\ell:=\{ a_{\ell,k}\}_{k=1}^s$ varies whereas the frequencies $\{\theta_k\}_{k=1}^s$ do not. Without the knowledge of  $\bfa_1,\dots,\bfa_n$ and $\{\theta_k\}_{k=1}^s\subset \T$, the multi-snapshot spectral estimation problem is to accurately approximate $\{\theta_k\}_{k=1}^s$ from perturbed samples of $h_1,\dots,h_n$. This is sometimes also referred to as collecting multiple measurement vectors. 

More specifically, suppose for each $\ell$, we obtain noisy samples of $h_\ell$ at $t=0,1,\dots,m-1$. The $\ell$-th snapshot is the vector 
$$
\tilde \bfy_\ell = \begin{bmatrix} h_\ell(j) \end{bmatrix}_{j=0,1,\dots,m-1} + \bfeta_\ell,
$$
where $\bfeta_\ell\in \C^m$ represents noise for the $\ell$-th snapshot. We can collect all of these measurements into matrices,
\begin{align*}
    \tilde\bfY
&:=\begin{bmatrix}
    \tilde \bfy_1 &\cdots &\tilde \bfy_n
\end{bmatrix}, \\
\bfA&:= \begin{bmatrix}
        \bfa_1 &\cdots &\bfa_n
\end{bmatrix}, \\
\bfN &:=
\begin{bmatrix}
    \bfeta_1 &\cdots &\bfeta_n
\end{bmatrix}. 
\end{align*}
A natural choice of $\tilde\bfU$ is
\begin{equation}
    \label{eq:defUtilde2}
    \tilde \bfU:= \text{top $s$-left singular vectors of }\tilde \bfY.
\end{equation}
This choice of $\tilde\bfU$ is equivalent to taking the leading $s$ eigenvectors of the (normalized) covariance matrix
$$
\frac 1 n \tilde \bfY \tilde \bfY^* 
=\frac 1 n \sum_{\ell=1}^n \tilde \bfy_\ell \tilde \bfy_\ell^*
$$

To make this well-defined and a reasonable estimator for the correct subspace $\bfU$, we need several natural and technical assumptions which we collect from \cite[Assumption 3]{dirksen2025subspace}. 

\begin{assumption}
    \label{assump:specestimationproblem}
    For absolute constants $K>0$ and $\sigma>0$, assume the following hold. 
    \quad 
    \begin{enumerate}[(a)]
        \item 
        Assumption on noise. Let $\bfeta_1,\dots,\bfeta_n\overset{\text{i.i.d.}}{\sim} \bfeta$, where $\bfeta\in\C^m$ is a $K \sigma$-subgaussian distribution with uncorrelated entries that are mean zero and have variance $\sigma^2$. 
        \item 
        Assumption on amplitudes. Let $\bfa_1,\dots,\bfa_n\in \C^s$ be deterministic and assume the amplitude covariance matrix
        \begin{equation}
                \label{eq:ampcovariance}
                \bfSigma_{\bfa} :=\frac 1 n \sum_{\ell=1}^n \bfa_\ell \bfa_\ell^*
        \end{equation}
        has rank $s$. 
    \end{enumerate}
\end{assumption}

The rationale for these assumptions is well known. The assumption that the amplitude covariance matrix has full rank is required to ensure that $\E \tilde \bfY$ has rank $s$. The assumption on the noise implies that the population noise covariance matrix is $\sigma^2 \bfI$, which implies the leading $s$-eigenvectors of $\E \tilde \bfY$ is exactly $\bfU$, making $\tilde \bfU$ a consistent estimator as $n\to\infty$. \cref{assump:specestimationproblem} assumes the entries of $\bfeta$ are uncorrelated which is weaker than the standard assumptions used for direction-of-arrival estimation, such as those found in \cite{stoica1989music}. 

To control the subspace error, we appeal to a stochastic analogue of Wedin's sine-theta theorem \cite[Theorem 6]{dirksen2025subspace}, which is a general result that extends \cite[Theorem 3]{cai2018rate}. The former is a high probability version of the latter expectation bound and the former has relaxed statistical assumptions on the noise compared to the latter. The following lemma is shown in \cref{proof:stochasticWedin}.

\begin{lemma} \label{lem:stochasticWedin}
    Let $m\geq 100$ and $\{\theta_k\}_{k=1}^s\subset\T$ such that $\Delta(\{\theta_k\}_{k=1}^s)\geq 8\pi /m$. Suppose \cref{assump:specestimationproblem} holds for some $K>0$. If $s\leq \min\{m,n\}$ and $m\lambda_s(\bfSigma_{\bfa})\geq \sigma^2$, there are constants $c_K,\alpha_K$ that only depend on $K$ such that for all $\alpha \geq \alpha_K$, with probability at least $1-e^{-c_K \alpha m}$, the subspace $\tilde\bfU$ defined in \eqref{eq:defUtilde2} is well defined and
    $$
    \left\|\bfU\bfU^*- \tilde \bfU \tilde \bfU^* \right\|_2^2
    \lesssim \frac{ \alpha \sigma^2}{n \, \lambda_s(\bfSigma_{\bfa})}.  
    $$
\end{lemma}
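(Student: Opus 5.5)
The plan is to deduce \cref{lem:stochasticWedin} from the stochastic Wedin theorem \cite[Theorem 6]{dirksen2025subspace}, applied to the signal-plus-noise decomposition $\tilde\bfY = \bfY + \bfN$ with $\bfY := \bfPhi\bfA$. Since $\bfPhi$ has full column rank $s$ when $m\Delta\ge 8\pi$ and $\bfSigma_\bfa$ has rank $s$, the clean matrix $\bfY$ has rank exactly $s$ and its column space is $\bfU$. So $\bfU$ is precisely the space of top $s$ left singular vectors of $\bfY$, and $\tilde\bfU$ is the corresponding empirical object. The noise columns $\bfeta_\ell$ are i.i.d., mean zero, $K\sigma$-subgaussian, with covariance $\sigma^2\bfI$, and the signal is deterministic. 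These are the hypotheses of \cite[Assumption 3]{dirksen2025subspace}, which we adopted verbatim in \cref{assump:specestimationproblem}.

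Roughly, the theorem in \cite{dirksen2025subspace} states the following. With probability at least $1-e^{-c_K\alpha m}$, $\tilde\bfU$ is well defined (there is a gap between the $s$-th and $(s+1)$-th singular values) and
$$
\left\|\bfU\bfU^*-\tilde\bfU\tilde\bfU^*\right\|_2^2 \lesssim \frac{\alpha\sigma^2 m\,\sigma_s^2(\bfY) + \alpha^2\sigma^4 m n}{\sigma_s^4(\bfY)}
$$
for $\alpha\ge\alpha_K$, in the regime $s\le\min\{m,n\}$. Up to the precise form of the second term, this is the Cai--Zhang type rate $\sigma^2 m(\sigma_s^2+\sigma^2 n)/\sigma_s^4$. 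The key step is to lower bound $\sigma_s(\bfY)$ using the amplitude covariance. We have
$$
\bfY\bfY^* = n\,\bfPhi\bfSigma_\bfa\bfPhi^*,
\qquad\text{so}\qquad
\sigma_s^2(\bfY) \ge n\,\lambda_s(\bfSigma_\bfa)\,\sigma_s^2(\bfPhi).
$$
By \cite{aubel2019vandermonde}, under $m\Delta\ge 8\pi$ we have $\sigma_s^2(\bfPhi)\gtrsim m$. Therefore $\sigma_s^2(\bfY)\gtrsim nm\lambda_s(\bfSigma_\bfa)$.

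Substituting this bound, the first term becomes
$$
\frac{\alpha\sigma^2 m}{\sigma_s^2(\bfY)} \lesssim \frac{\alpha\sigma^2}{n\lambda_s(\bfSigma_\bfa)}.
$$
The second term becomes
$$
\frac{\alpha^2\sigma^4 mn}{\sigma_s^4(\bfY)} \lesssim \frac{\alpha^2\sigma^4}{n\,m\,\lambda_s(\bfSigma_\bfa)^2}.
$$
The hypothesis $m\lambda_s(\bfSigma_\bfa)\ge\sigma^2$ makes the second term at most $\alpha^2\sigma^2/(n\lambda_s(\bfSigma_\bfa))$, which is of the same order as the first term up to the factor of $\alpha$. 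Either the exact statement in \cite{dirksen2025subspace} is linear in $\alpha$ there, or the extra power is absorbed by renaming constants depending on $K$. The same condition, combined with Weyl's inequality and a high-probability bound $\|\bfN\|_2\lesssim_K \sigma(\sqrt m+\sqrt n)$, should also be what guarantees the spectral gap needed for $\tilde\bfU$ to be well defined.

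I expect the main obstacle to be bookkeeping rather than mathematics. The work is to match the exact form and normalization of \cite[Theorem 6]{dirksen2025subspace}: whether it is stated for $\tilde\bfY$ or for $\tilde\bfY/\sqrt n$, whether it is phrased via the covariance, and how its dependence on $\alpha$ and on the dimension enters. The delicate point is verifying that the side condition $m\lambda_s(\bfSigma_\bfa)\ge\sigma^2$ is exactly what is needed to dominate the quadratic-in-noise term and to ensure the gap condition. If the cited theorem's hypotheses differ slightly, I would fall back on Wedin's theorem directly, controlling $\|\bfU_\perp^*\bfN\tilde\bfV\|$ by a subgaussian $\epsilon$-net argument.
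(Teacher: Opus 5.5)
Your proposal is correct and is essentially the paper's proof. The paper applies \cite[Theorem 6]{dirksen2025subspace} to the rescaled transposed model $\sigma^{-1}\tilde\bfY^* = \sigma^{-1}\bfA^*\bfPhi^* + \sigma^{-1}\bfN^*$, so that $\tilde\bfU$ is a right singular space and the noise has unit variance. It lower bounds $\sigma_s^2(\bfA^*\bfPhi^*)\geq n\,\sigma_s^2(\bfPhi)\,\lambda_s(\bfSigma_\bfa)\gtrsim mn\,\lambda_s(\bfSigma_\bfa)$ via \cite{aubel2019vandermonde}, and uses $m\lambda_s(\bfSigma_\bfa)\geq\sigma^2$ to absorb the second term exactly as you do.

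One correction: the cited bound, $\alpha m(\sigma_s^2+n)/\sigma_s^4$ in the normalized model, is linear in $\alpha$ in both terms, and that is what makes the argument close. Your fallback of absorbing an $\alpha^2$ into $K$-dependent constants would not work, since $\alpha\geq\alpha_K$ is a free parameter tied to the failure probability $e^{-c_K\alpha m}$.
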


\cref{lem:stochasticWedin} shows that the subspace error goes to zero as $n\to\infty$. Hence, by making $n$ large enough, we are in the setting where \cref{thm:step2} is applicable, which yields the third row of \cref{table:mainresults}. 

\begin{remark}[Technical comparison to prior work]
    \label{rem:techcomparison} 
    In \cite{rao1989performance}, Rao and Hari considered the multi-snapshot model, where the noise is a complex white Gaussian random vector with zero mean. It was found that \cite[(24), (29b)]{rao1989performance} the perturbation of the correct roots in the form of the mean squared error of $\Delta z_i$, which for us is $\left|e^{i\theta_k} - \tilde w_k\right|$, is 
    $$
    \E \left|e^{i\theta_k} - \tilde w_k\right| ^2 = \frac{(m-s)\sigma^2}{n (\bfphi'(e^{i\theta_k})^* \bfU_{\perp} \bfU_{\perp}^* \bfphi'(e^{i\theta_k}))} \sum_{j=1}^s \frac{\lambda_j + \sigma^2}{\lambda_j^2} \left|\bfphi(e^{i\theta_k})^* \bfu_j\right|, 
    $$ 
    where $\bfu_j$ is the $j$-th column of $\bfU$ and $\lambda_j = \lambda_j(\bfSigma_{\bfy})$ where $\bfSigma_{\bfy}=\bfPhi \bfSigma_{\bfa} \bfPhi^*$ is the covariance matrix of the noiseless data. 

    To simplify this expression, we make a few observations. First, a calculation shows that
    $$
    q''(\theta_k) = 2 \bfphi'(e^{i\theta_k})^* \bfU_{\perp} \bfU_{\perp}^* \bfphi'(e^{i\theta_k}), 
    $$
    where we recall that $q(t)=Q(e^{it})$ is the restriction of $Q$ to the circle. Recall that $q''(\theta_k)\gtrsim m^2$ under the condition $\Delta m \geq 8\pi$ due to \cref{lem:q''}. The reverse inequality $q''(\theta_k)\leq m^2$ holds by Bernstein inequality. Also under the same condition, $\sigma_s(\bfPhi)\asymp \sqrt m$ due to \cite{aubel2019vandermonde}. Then the previously mentioned result, when $m\gtrsim \sigma^2$ and $n\geq \max\{m,s\}$, simplifies to 
    \begin{align*}
        \E \left|e^{i\theta_k} - \tilde w_k\right| ^2 &= \frac{(m-s)\sigma^2}{n (\bfphi(e^{i\theta_k})^* \bfU_{\perp} \bfU_{\perp}^* \bfphi(e^{i\theta_k}))} \sum_{j=1}^s \frac{\lambda_j + \sigma^2}{\lambda_j^2} \left|\bfphi(e^{i\theta_k})^* \bfu_j\right|^2 \\ 
        &\asymp \frac{(m-s)\sigma^2}{n q''(\theta_k)} \frac{m + \sigma^2}{m^2} \sum_{j=1}^s \left|\bfphi(e^{i\theta_k})^* \bfu_j\right|^2 \\
        &= \frac{(m-s)\sigma^2}{n q''(\theta_k)} \frac{m + \sigma^2}{m^2} \\
        &\asymp \frac{\sigma^2}{n m^2}. 
    \end{align*}
    This is consistent with our results as in \cref{table:mainresults} under multi-snapshot frequency error. The main difference is that our result is a high probability statement, fully non-asymptotic in all parameters, and holds for general subgaussian noise. 
    
    Similarly in \cite{krim1992operator}, it confirmed the results of \cite{rao1989performance} with similar noise assumptions in the multi-snapshot model: the variance of $\Delta r_m$ and $\Delta \omega_m$ ($\left| \tilde w_k\right| -1$ and $\left| \theta_k - \arg(\tilde w_k) \right|$ resp.) are asymptotically uncorrelated and both scale like $\sigma/(m\sqrt{n})$ \cite[(58), (59), and (60)]{krim1992operator}. They additionally found that the expectation of the frequency shift was zero. 
\end{remark}

\section{Numerical experiments}
\label{sec:experiments}
To find the zeros of $\tilde P$ closest to the unit circle, we start with at least $Cm (\log(m))^2$ uniformly spaced points on the unit circle and run Newton's algorithm for each one as initialization. This choice of starting points loosely follows the suggestions \cite{hubbard2001find}. If Newton's method for a given initialization converges and the value of $\tilde P$ is smaller than a chosen numerical tolerance parameter, that value is saved and is discarded otherwise. Among the saved outputs, the algorithm selects the $s$ ones (unique up to reflection) closest to the unit circle (in the sense of definition \eqref{eq:defdistance}). The arguments are the output of Root-MUSIC. 

Let us describe the Newton iteration step in more detail. 
To this end, we define the vector
\begin{equation}
    \bfpsi(z)
    = \frac1 {\sqrt m} \begin{bmatrix}
        z^{m-1} &z^{m-2} &\cdots &z & 1
    \end{bmatrix}^\top.
    \label{eq:defpsi}
\end{equation}
This is simply the vector $\bfphi(z)$ with its entries listed in reverse order. We have the following formulas for $\tilde P$ and $\tilde P'$. 

\begin{observation} \label{obs:P}
    For all $z\in \C$, we have 
    \begin{align}
        \tilde P(z)
        &= z^{m-1} - \bfpsi(z)^\top \tilde \bfU \tilde \bfU^* \bfphi(z), \label{eq:Ptilde1} \\
        \tilde P'(z)
        &= (m-1)z^{m-2} - \bfpsi'(z)^\top \tilde \bfU \tilde \bfU^* \bfphi(z) - \bfpsi(z)^\top \tilde \bfU \tilde \bfU^* \bfphi'(z).
        \label{eq:Ptilde2}    
    \end{align} 
\end{observation}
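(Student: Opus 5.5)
The statement to prove is \cref{obs:P}, which is a direct algebraic computation from the definitions. The plan is to start from \eqref{eq:defQtilde1} and multiply by $z^{m-1}$. This gives
$$
\tilde P(z) = z^{m-1}\tilde Q(z) = z^{m-1} - z^{m-1}\,\bfphi(1/\overline z)^* \tilde\bfU\tilde\bfU^*\bfphi(z), \qquad z\neq 0 .
$$

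The key step is to identify $z^{m-1}\bfphi(1/\overline z)^*$ with $\bfpsi(z)^\top$. The entries of $\bfphi(1/\overline z)$ are $m^{-1/2}\,\overline z^{\,-j}$ for $j=0,\dots,m-1$. Taking the conjugate transpose produces the row vector with entries $m^{-1/2} z^{-j}$. Multiplying by $z^{m-1}$ gives entries $m^{-1/2} z^{m-1-j}$, $j=0,\dots,m-1$, which is exactly $\bfpsi(z)^\top$ as defined in \eqref{eq:defpsi}. Substituting this yields \eqref{eq:Ptilde1} for all $z\neq 0$.

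To extend to $z=0$, note that both sides of \eqref{eq:Ptilde1} are polynomials in $z$: the entries of $\bfpsi$ and $\bfphi$ are polynomials. The left side $\tilde P$ is a polynomial by \eqref{eq:defQtilde2}. Two polynomials agreeing on $\C\setminus\{0\}$ agree everywhere by continuity, so \eqref{eq:Ptilde1} holds on all of $\C$. Equivalently, one can take \eqref{eq:Ptilde1} as the definition of $\tilde P$ at $0$.

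For \eqref{eq:Ptilde2}, differentiate \eqref{eq:Ptilde1} in $z$. The matrix $\tilde\bfU\tilde\bfU^*$ is constant, and $z\mapsto \bfpsi(z)^\top \tilde\bfU\tilde\bfU^*\bfphi(z)$ is a bilinear expression in two holomorphic vector-valued functions. Note that $\bfpsi(z)^\top$ uses transpose rather than conjugate transpose, so it is holomorphic. The product rule then gives $\bfpsi'(z)^\top\tilde\bfU\tilde\bfU^*\bfphi(z)+\bfpsi(z)^\top\tilde\bfU\tilde\bfU^*\bfphi'(z)$. The derivative of $z^{m-1}$ is $(m-1)z^{m-2}$.

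There is no real obstacle here. The only point requiring care is the conjugation bookkeeping in identifying $z^{m-1}\bfphi(1/\overline z)^*$ with the holomorphic row vector $\bfpsi(z)^\top$. This identification is what makes $\tilde P$ a genuine polynomial and justifies complex differentiation term by term.
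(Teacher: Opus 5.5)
Your proposal is correct and follows essentially the same route as the paper: multiply the second formula in \eqref{eq:defQtilde1} by $z^{m-1}$, identify $z^{m-1}\bfphi(1/\overline z)^*$ with $\bfpsi(z)^\top$, extend to $z=0$ by continuity of polynomials, and differentiate with the product rule. Your entrywise check of the conjugation is slightly more explicit than the paper's, but the argument is the same.
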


Note that these formulas only require computation of $\tilde \bfU$, which does not require the full SVD of a Hankel/Toeplitz/covariance matrix and has fewer columns than $\tilde \bfU_\perp$ when $m\geq 2s$. 

\begin{figure}[ht]
    \centering 
    \begin{subfigure}{0.44\textwidth}
        \includegraphics[width=1\linewidth]{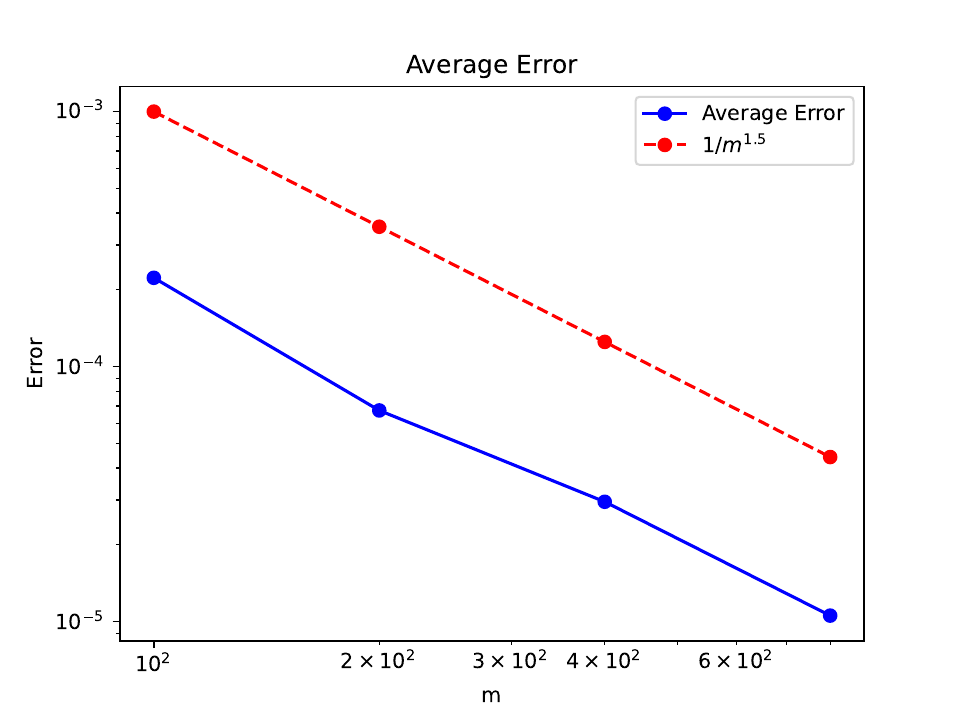}
    \caption{Experiment 1}
    \label{fig:average}  
    \end{subfigure}
    \begin{subfigure}{0.44\textwidth}
        \includegraphics[width=1\linewidth]{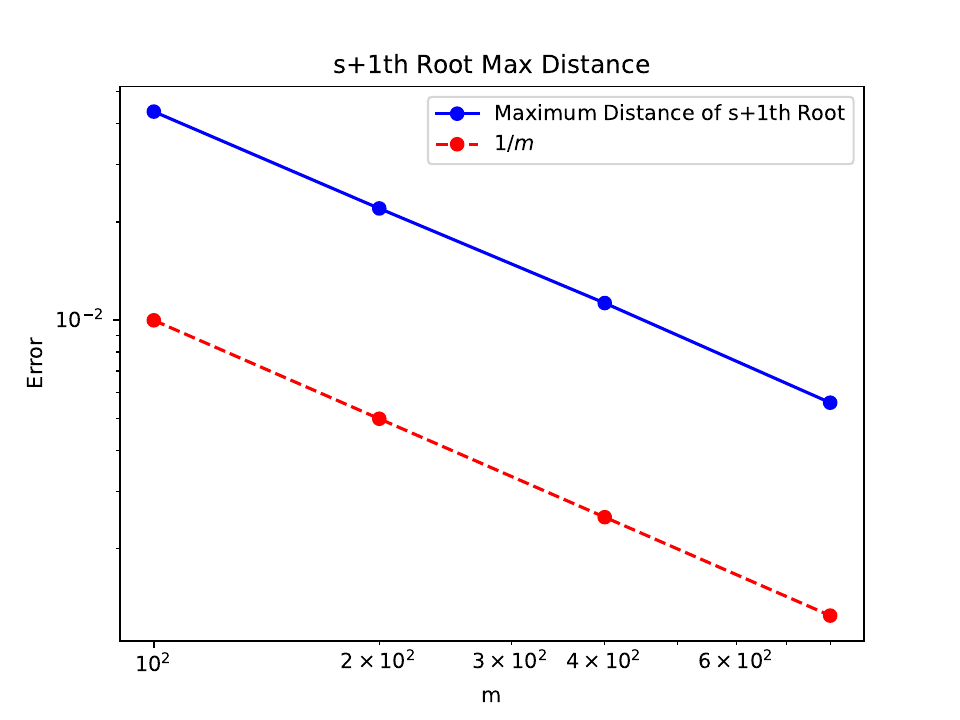}
    \caption{Experiment 2}
    \label{fig:s+1}
    \end{subfigure}
    \caption{Experiment 1: The average of the frequency error of the recovered frequencies $\max_k \left|\tilde w_k - e^{i\theta_k}\right|$. Experiment 2: The maximum distance of the $(s+1)$-th root of $\tilde P$ (where roots are in order of distance from the unit circle).}
    \label{fig:experiment}
\end{figure}


Two experiments were done to test the results of \cref{thm:step2} using the above initialization and selection criteria. In the experiments, the case where the noise $\bfeta$ is independent and identically distributed $\calN(0,\sigma^2)$ entries was tested, and the data was artificially produced with this noise where $\sigma=0.1$ and with a preselected set of three correct frequencies: $\{\theta_k\}_{k=1}^3 = \{2\pi/5, 3\pi/5, \pi\}$, with amplitudes $a_k = 1$ for $k=1,2,3$. The experiments were run with $m\in\{100,200,400,800\}$, and this was repeated over 20 trials per choice of $m$. 


\cref{fig:average} shows experiment 1: the average frequency error in log-log scale (filled line in blue), compared to the expected rate of $1/m^{1.5}$ (dotted line in red). The two lines being nearly parallel suggests that the rates are proportional. This confirms the rate in $m$ of the bound of the frequency error in the single-snapshot model with i.i.d. $\calN(0,\sigma^2)$ noise given in the third row of \cref{table:mainresults}. For experiment 2, \cref{fig:s+1} plots the max distance of the $(s+1)$-th root from the unit circle (mod reflection and ordered by distance from the unit circle) in log-log scale (filled line in blue), compared to the expected rate of $1/m$ (dotted line in red) predicted by \cref{thm:step2} part (b). The lines are fully parallel, and so this shows that the theorem is sharp.



We again emphasize that the main contribution of this paper is a theoretical analysis of the roots of $\tilde P$ generated by the Root-MUSIC algorithm, which has important implications to its performance. A full numerical study and professional implementation are beyond the scope of this paper. One direction of improvement  concerns the numerical stability of Root-MUSIC. Computing $z^m$ (hence $\bfphi(z)$ all other downstream terms) will result in numerical overflow for $|z|>1$ and large $m$. This means that a naive implementation of Root-MUSIC with standard Newton's method is likely to be numerically unstable when $m$ is sufficiently large. Since $|\tilde P(z)|\leq 1$ for $z\in\partial \D$, a calculation together with formula \eqref{eq:qdef2} shows that whenever $1\leq |z|\leq 1+\tau/m$, we have $\|\bfphi(1/z)\|_2\leq 1$ and
    $$
    |P(z)|
    \leq \|\bfphi(z)\|_2 \|\bfphi(1/z)\|_2 
    \leq \frac 1 {\sqrt m} \left( \sum_{j=0}^{m-1} |z|^{2j} \right)^{1/2}
    \lesssim \frac 1 {\sqrt \tau} \left(1+ \frac \tau m\right)^m
    \lesssim \frac 1 {\sqrt \tau} e^\tau.
    $$ 
    Consequently, Root-MUSIC is more stable than naively expected, in an annulus of thickness $\tau/m$ around the circle, which is the region of interest according to our theory. This suggests that numerical instability can be tempered by employing a Newton's method with appropriate exit termination when an iterate leaves such an annulus.
    

\section{Proofs}
\label{sec:proofs} 

\subsection{Lemmas for Step 1a}

As explained in \cref{sec:outline}, we will lower bound $|P|$ by approximating it by a quadratic function, which involves dealing with $|P''(w_k)|$. We recall the following result in \cite[Lemma 9.4]{fannjiang2025optimality}.

\begin{lemma} \label{lem:q''}
    Suppose $m\geq 100$ and $\Delta(\{\theta_k\}_{k=1}^s)\geq 8\pi/m$. There is an absolute constant $c_0\in (0,1/6)$ such that for all $k$, 
    $$
    q''(\theta_k) \geq c_0 m^2. 
    $$
\end{lemma}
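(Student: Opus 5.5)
We work directly from the definition $q(t)=1-\|\bfU^*\bfphi(e^{it})\|_2^2=\|\bfU_\perp^*\bfphi(e^{it})\|_2^2$. Since $q\geq 0$ and $q(\theta_k)=q'(\theta_k)=0$ (\cref{obs:1}), differentiating twice gives
$$
q''(\theta_k)=2\,\|\bfU_\perp^*\bfv_k\|_2^2, \qquad \bfv_k:=\frac{d}{dt}\bfphi(e^{it})\Big|_{t=\theta_k}=\frac{i}{\sqrt m}\big[j e^{ij\theta_k}\big]_{j=0}^{m-1}.
$$
So the claim reduces to a geometric statement: the distance from $\bfv_k$ to $\bfU=\mathrm{range}(\bfPhi)$ is at least $\sqrt{c_0/2}\,m$. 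Note that $\|\bfv_k\|_2^2=\frac1m\sum_j j^2\approx m^2/3$, so we only need to show that a fixed fraction of this norm survives the projection onto $\bfU_\perp$.

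\textbf{Step 1 (centering).} Since $\bfphi(e^{i\theta_k})\in\bfU$, we may replace $\bfv_k$ by its component orthogonal to $\bfphi(e^{i\theta_k})$. This is the vector
$$
\bfu_k=\frac{i}{\sqrt m}\Big[\big(j-\tfrac{m-1}2\big)e^{ij\theta_k}\Big]_{j=0}^{m-1}, \qquad \|\bfu_k\|_2^2=\frac{m^2-1}{12}.
$$
The distance from $\bfv_k$ to $\bfU$ equals the distance from $\bfu_k$ to $\bfU$.

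\textbf{Step 2 (Schur complement).} Write $G=\frac1m\bfPhi^*\bfPhi$ and $b_j=\langle \bfphi(e^{i\theta_j}),\bfu_k\rangle$. Then
$$
\mathrm{dist}(\bfu_k,\bfU)^2=\|\bfu_k\|_2^2-b^*G^{-1}b.
$$
Two inputs control the subtracted term:
\begin{itemize}
\item By \cite{aubel2019vandermonde}, separation $\Delta\geq 8\pi/m$ gives $\lambda_{\min}(G)\geq 1-\kappa$ for an explicit $\kappa<1$, so $b^*G^{-1}b\leq \|b\|_2^2/(1-\kappa)$.
\item We have $b_k=0$. For $j\neq k$, $b_j$ is (up to a unimodular factor) the derivative of the centered Dirichlet kernel $\frac1m\sum_{l}e^{i(l-\frac{m-1}{2})x}$ at $x=\theta_k-\theta_j$. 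Summation by parts gives $|b_j|\lesssim 1/|\theta_j-\theta_k|$ when $|\theta_j-\theta_k|\leq\pi$.
\end{itemize}

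\textbf{Step 3 (summing over the other frequencies).} Separation places at most two of the other $\theta_j$ at each distance scale $\geq n\cdot 8\pi/m$. Hence
$$
\|b\|_2^2\lesssim \sum_{n\geq1}\frac{2m^2}{(8\pi n)^2}=\frac{m^2}{192},
$$
up to the constant in the kernel bound. Combining with Step 2, $\mathrm{dist}(\bfu_k,\bfU)^2\geq m^2\big(\tfrac1{12}-\tfrac{C'}{192(1-\kappa)}\big)-\tfrac1{12}$, which is $\gtrsim m^2$ once $m\geq 100$, provided the bracket is positive.

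\textbf{Main obstacle.} The crux is Step 3: the constants must genuinely close at separation exactly $8\pi/m$. This requires a sharp, not merely order-of-magnitude, bound on the derivative of the Dirichlet kernel, together with a sharp conditioning constant from \cite{aubel2019vandermonde}. If the crude estimates do not close, I would first try tightening the kernel bound to $|b_j|\leq \frac{1}{m\,|\sin((\theta_j-\theta_k)/2)|^{2}}+\frac{1}{2|\sin((\theta_j-\theta_k)/2)|}$ with exact constants. If that still fails, I would replace the Schur complement estimate by an explicit dual certificate: a trigonometric polynomial of degree less than $m$ that vanishes with its value at every $\theta_j$ and has a large derivative pairing at $\theta_k$, built from Fejér-type kernels as in \cite{fannjiang2025optimality}.
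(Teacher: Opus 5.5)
The paper gives no proof of this lemma; it quotes \cite[Lemma 9.4]{fannjiang2025optimality}. Your proof is correct and self-contained. It is the second-derivative form of the splitting the paper recalls from that reference in the proof of \cref{lem:far1}. With $\bfW_k\bfW_k^*=\bfU\bfU^*-\bfphi(e^{i\theta_k})\bfphi(e^{i\theta_k})^*$ one has $q(t)=1-f_m(t-\theta_k)-\|\bfW_k^*\bfphi(e^{it})\|_2^2$, and differentiating twice at $\theta_k$ gives
\[
q''(\theta_k)=-f_m''(0)-2\|\bfW_k^*\bfu_k\|_2^2=\frac{m^2-1}{6}-2\,b^*G^{-1}b .
\]
This is exactly your Steps 1--2, and it also shows $q''(\theta_k)\le (m^2-1)/6$, which explains the range $c_0<1/6$. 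The two routes buy different things. The cited machinery (Lemma 9.3 there, with its $A(\beta)E_0$ condition) controls $q$ on whole neighbourhoods of each $\theta_k$, and the paper reuses that in \cref{lem:far1}. Your route instead gives a short, explicit check directly at $\Delta=8\pi/m$ with a concrete $c_0$.

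You left that check open (``provided the bracket is positive''), but it closes with room to spare, so your fallbacks are not needed. Take $x$ to be the representative of $\theta_k-\theta_j$ in $[-\pi,\pi]$; this is legitimate because $|D'|$ is $2\pi$-periodic. From $D(x)=\sin(mx/2)/(m\sin(x/2))$ one gets exactly
\[
|b_j|=|D'(x)|\le\frac{1}{2|\sin(x/2)|}+\frac{1}{2m\sin^2(x/2)}\le\frac{\pi}{2|x|}+\frac{\pi^2}{2mx^2}\le\frac{9\pi}{16|x|}\qquad\text{for } \frac{8\pi}{m}\le|x|\le\pi .
\]
So the $n$-th nearest frequency on each side contributes $|b_j|^2\le (9m/(128n))^2$, and $\|b\|_2^2\le\frac{27\pi^2}{16384}m^2<0.0163\,m^2$.

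For the conditioning, the bound $\sigma_{\min}^2(\bfPhi)\ge m-1-2\pi/\Delta\ge\frac{3m}{4}-1$ from \cite{aubel2019vandermonde} gives $\lambda_{\min}(G)\ge 0.74$ for $m\ge 100$. Hence $b^*G^{-1}b<0.023\,m^2$, and
\[
q''(\theta_k)\ge\left(\tfrac16-0.046\right)m^2-\tfrac16\ge 0.12\,m^2 ,
\]
so $c_0=0.12$ works. Even your looser kernel bound (constant $5\pi/8$ in place of $9\pi/16$) keeps the bracket positive.
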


Although the $c_0$ that appears in this lemma is explicit and can be computed, we will not need its exact value. The important part of this inequality is that $c_0$ does not depend on either $m$ or the cardinality of $\{\theta_k\}_{k=1}^s$. 

The next lemma truncates the power series expansion of $P$ at $w_k$ and shows that third and higher order terms are negligible whenever $|z-w_k|\leq c_0/(5m)$.

\begin{lemma}\label{lem:Ppowerdisk}
    Suppose $m\geq 100$ and $\Delta(\{\theta_k\}_{k=1}^s)\geq 8\pi/m$. If $z\in D(w_k,c_0/(5m))$, then 
    $$
    |P(z)|
    \geq \frac 12 \left( 1 - \frac{8}{15} e^{2c_0/5}\right) c_0 m^2 |z-w_k|^2. 
    $$
\end{lemma}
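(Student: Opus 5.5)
The plan is to Taylor expand $P$ about the double root $w_k=e^{i\theta_k}$, use \cref{lem:q''} to control the quadratic coefficient, and use Bernstein's inequality to bound all higher-order terms. Since $P$ is a polynomial, its expansion at $w_k$ is exact:
$$
P(z)=\frac{P''(w_k)}{2}(z-w_k)^2+\sum_{j= 3}^{2m-2}\frac{P^{(j)}(w_k)}{j!}(z-w_k)^j .
$$
By \cref{obs:3}, $w_k$ is a double zero, so $P(w_k)=P'(w_k)=0$ and no constant or linear term appears. Writing $r=|z-w_k|$, it then suffices to show two things:
$$
|P''(w_k)|\ge c_0m^2
\qquad\text{and}\qquad
\sum_{j\ge 3}\frac{|P^{(j)}(w_k)|}{j!}r^j\le \frac{8}{15}e^{2c_0/5}\cdot\frac12 c_0m^2r^2 .
$$
The claim then follows from the reverse triangle inequality.

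\textbf{Quadratic coefficient.} I will relate $P''(w_k)$ to $q''(\theta_k)$. We have $q(t)=Q(e^{it})=e^{-i(m-1)t}P(e^{it})$. Differentiating twice, every term in which $P$ or $P'$ is evaluated at $e^{i\theta_k}$ drops out, because these vanish. The only surviving term is
$$
q''(\theta_k)=e^{-i(m-1)\theta_k}\,(iw_k)^2\,P''(w_k).
$$
All prefactors have modulus one, so $|P''(w_k)|=|q''(\theta_k)|\ge c_0m^2$ by \cref{lem:q''}.

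\textbf{Higher-order terms.} On the unit circle we have $|P(e^{it})|=|q(t)|\le 1$. This holds because $q(t)=\bfphi(e^{it})^*\bfU_\perp\bfU_\perp^*\bfphi(e^{it})\in[0,1]$, using $\|\bfphi(e^{it})\|_2=1$. Since $P$ has degree at most $2m-2$, iterating Bernstein's inequality for algebraic polynomials on $\partial\D$ gives
$$
\|P^{(j)}\|_{L^\infty(\partial\D)}\le (2m-2)^j\le (2m)^j .
$$
Because $w_k\in\partial\D$, this bound applies at $w_k$. Setting $x=2mr$, the tail is therefore at most
$$
\sum_{j\ge 3}\frac{x^j}{j!}\le \frac{x^3}{6}e^{x}.
$$
The hypothesis $r<c_0/(5m)$ gives $x\le 2c_0/5$. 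Hence the tail is at most
$$
\frac{8m^3r^3}{6}\,e^{2c_0/5}
=\frac12 c_0m^2r^2\cdot\frac{8mr}{3c_0}\,e^{2c_0/5}
\le \frac12 c_0m^2r^2\cdot\frac{8}{15}\,e^{2c_0/5}.
$$
Combining this with the quadratic term gives exactly the stated lower bound.

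\textbf{Main obstacle.} The only delicate step is the identity $|P''(w_k)|=|q''(\theta_k)|$. One must check that the chain-rule terms involving $P(w_k)$ and $P'(w_k)$ really vanish, which is where the double-root property from \cref{obs:3} is essential. The remaining steps are routine applications of Bernstein's inequality and the exponential series.
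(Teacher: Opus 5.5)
Your proposal is correct and follows essentially the same route as the paper. Both arguments use the exact Taylor expansion at the double root, the identity $|P''(w_k)|=|q''(\theta_k)|\geq c_0m^2$, and iterated Bernstein with $(j+3)!\geq 6\,j!$ to bound the tail by $\frac43 m^3|z-w_k|^3e^{2m|z-w_k|}$, giving exactly the constant $\frac{8}{15}e^{2c_0/5}$. The only cosmetic difference is that you get $|P''(w_k)|=|q''(\theta_k)|$ by differentiating $q(t)=e^{-i(m-1)t}P(e^{it})$ directly, whereas the paper passes through $Q''(w_k)$.
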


\begin{proof}
    We plan to calculate the power series expansion of $P$ at its double root $w_k=e^{i\theta_k}$. 
    We have
	\begin{equation}
		\label{eq:Pseries}
		P(z)=\frac 12 P''(w_k)(z-w_k)^2 + \sum_{j=3}^{2m-2} \frac{P^{(j)}(w_k)}{j!} (z-w_k)^j. 
	\end{equation}
	By Bernstein's inequality for complex polynomials, we see that 
	$$
	\max_{z\in \partial\D} |P^{(j)}(z)|
	\leq (2m-2)^j \max_{z\in \partial\D} |P(z)|
	= (2m-2)^j \|q\|_{L^\infty(\T)}
	\leq (2m-2)^j. 
	$$ 
	Using this, we see that
	\begin{align*}
		\left| \sum_{j=3}^{2m-2} \frac{P^{(j)}(w_k)}{j!}(z-w_k)^j \right|
		&\leq 8(m-1)^3|z-w_k|^3 \, \sum_{j=3}^{2m-2} \frac{2^{j-3}(m-1)^{j-3} |z-w_k|^{j-3}}{j!} \\
		&\leq 8(m-1)^3|z-w_k|^3 \, \sum_{j=0}^\infty \frac{2^j(m-1)^j |z-w_k|^j}{(j+3)!} \\
        &\leq \frac{4}{3} (m-1)^3|z-w_k|^3 \, \sum_{j=0}^\infty \frac{2^j(m-1)^j |z-w_k|^j}{j!} \\
		&= \frac{4}{3} (m-1)^3 \, |z-w_k|^3 \, e^{2 (m-1)|z-w_k|}. 
	\end{align*}

    Additionally, \cref{obs:3} implies that $Q(w_k)=Q'(w_k)=0$ since each $w_k$ is a root with multiplicity two of $Q$. Then a calculation shows that 
    $$
    P''(w_k)=w_k^{m-1}Q''(w_k) \andspace q''(\theta_k)=-w_k^2 \, Q''(w_k).
    $$
    By \cref{lem:q''}, there is a $c_0\in (0,1/6)$ such that $q''(\theta_k)\geq c_0m^2$. Thus,
	$$
	|P''(w_k)|
	=|Q''(w_k)|
	=|q''(\theta_k)|
	\geq c_0 m^2.
	$$
	Thus, inserting the above inequality into \eqref{eq:Pseries}, we have
	\begin{align*}
		|P(z)|
		&\geq \frac 12 |P''(w_k)||z-w_k|^2 - \left| \sum_{j=3}^{2m-2} \frac{P^{(j)}(w_k)}{j!}(z-w_k)^j \right| \\
		&\geq \frac 12 c_0 m^2 |z-w_k|^2- \frac 4 3(m-1)^3|z-w_k|^3 e^{2(m-1)|z-w_k|} \\
		&\geq \frac 12 m^2 |z-w_k|^2 \left( c_0 - \frac 83 m|z-w_k| e^{2 m|z-w_k|}\right).
	\end{align*}
    By assumption, $|z-w_k|\leq c_0/(5m)$, and so 
    $$
    c_0-\frac 8 3m|z-w_k| e^{2 m|z-w_k|}
    \geq c_0-\frac{8}{15} e^{2c_0/5} c_0.
    $$
    Combining the above completes the proof. 
\end{proof}

\subsection{Lemmas for Step 1b}

We start with some bounds for $q$, before extending this result to the complex plane. We have the global and unconditional perturbation bound,
\begin{equation}
	\label{eq:qerror1}
	\max_{t\in \T} |q(t)-\tilde q(t)|
    \leq \|\bfU_\perp \bfU_\perp^* - \tilde \bfU_\perp \tilde \bfU_\perp^*\|_2 
	=\|\bfU \bfU^* - \tilde \bfU \tilde \bfU^*\|_2 
	=\rho.  
\end{equation}
As mentioned in \cref{sec:outline}, we need a stronger inequality to invoke Rouche's theorem. We recall the following (also unconditional) inequality \cite[Lemma 9.8]{fannjiang2025optimality}, 
\begin{equation}
	\label{eq:qerror2}
	\max_k |q(\theta_k)-\tilde q(\theta_k)|\leq \rho^2. 
\end{equation}
To control $|P(z)-\tilde P(z)|$ for $z$ close enough to a $w_k$, we develop the following perturbation inequalities that will be useful throughout the proofs.

\begin{lemma} \label{lem:Pradialdisk}
	For any $k\in \{1,\dots,s\}$, $m>s$, and $z\in \overline \D$, we have 
	\begin{align*}
		|P(z)-\tilde P(z)|
		&\leq 2m\rho \, |z-w_k| + \rho^2, \\ 
        |P(1/\overline z)-\tilde P(1/\overline z)|
		&= |z|^{-2m+2} |P(z)-\tilde P(z)|. 
	\end{align*}    
\end{lemma}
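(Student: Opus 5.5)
The plan is to write the difference $P-\tilde P$ as a single bilinear form in the perturbation matrix, and then use a first-order Taylor estimate centered at $w_k$. The zeroth-order term is controlled by \eqref{eq:qerror2}, and the first-order term by a derivative bound on the unit circle transferred to $\overline\D$ via the maximum modulus principle.

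\emph{Step 1: a representation of $P-\tilde P$.} Set $\bfW:=\tilde\bfU\tilde\bfU^*-\bfU\bfU^*$, so $\|\bfW\|_2=\rho$. Using \eqref{eq:qdef2}, \eqref{eq:defQtilde1} and the identity $z^{m-1}\bfphi(1/\overline z)^*=\bfpsi(z)^\top$ (as in \cref{obs:P}), I will write
$$
F(z):=P(z)-\tilde P(z)=z^{m-1}\big(Q(z)-\tilde Q(z)\big)=\bfpsi(z)^\top \bfW \bfphi(z).
$$
This is a polynomial of degree at most $2m-2$, hence entire. At $z=w_k=e^{i\theta_k}$ we have $|F(w_k)|=|q(\theta_k)-\tilde q(\theta_k)|\le\rho^2$ by \eqref{eq:qerror2}.

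\emph{Step 2: a derivative bound.} Differentiating gives $F'(z)=\bfpsi'(z)^\top\bfW\bfphi(z)+\bfpsi(z)^\top\bfW\bfphi'(z)$. For $\zeta\in\partial\D$ we have $\|\bfphi(\zeta)\|_2=\|\bfpsi(\zeta)\|_2=1$. We also have
$$
\|\bfphi'(\zeta)\|_2=\|\bfpsi'(\zeta)\|_2=m^{-1/2}\Big(\sum_{j=0}^{m-1}j^2\Big)^{1/2}\le m/\sqrt3\le m.
$$
Hence $|F'(\zeta)|\le 2m\rho$ on $\partial\D$. By the maximum modulus principle the same bound holds on all of $\overline\D$.

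\emph{Step 3: integrate along a segment.} For $z\in\overline\D$, the segment $[w_k,z]$ lies in $\overline\D$ by convexity. Therefore
$$
|F(z)|\le |F(w_k)|+\int_{[w_k,z]}|F'|\,|d\zeta|\le \rho^2+2m\rho|z-w_k|,
$$
which is the first claim.

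\emph{Step 4: the reflection identity.} By \cref{obs:3} and \cref{obs:4}, $Q(1/\overline z)=\overline{Q(z)}$ and $\tilde Q(1/\overline z)=\overline{\tilde Q(z)}$ for $z\neq0$. Hence
$$
P(1/\overline z)-\tilde P(1/\overline z)=\overline z^{\,-(m-1)}\,\overline{(Q-\tilde Q)(z)}.
$$
Taking moduli gives $|z|^{-(m-1)}|Q(z)-\tilde Q(z)|=|z|^{-2m+2}|P(z)-\tilde P(z)|$. The case $z=0$ is excluded, or read as an identity of extended values.

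No step is a real obstacle once the bilinear representation in Step 1 is in place. The only points needing care are that identity (in particular the conjugations in $\bfphi(1/\overline z)^*$) and invoking the maximum principle for $F'$ rather than for $F$.
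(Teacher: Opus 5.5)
Your proof is correct and follows the same skeleton as the paper's. You integrate $(P-\tilde P)'$ along the segment $[w_k,z]\subset\overline\D$, bound $|(P-\tilde P)(w_k)|\le\rho^2$ by \eqref{eq:qerror2}, and move the derivative bound from $\overline\D$ to $\partial\D$ by the maximum modulus principle. The reflection identity comes from \cref{obs:3} and \cref{obs:4}, exactly as in the paper.

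The one real difference is how you bound the derivative on the circle. The paper treats $H=P-\tilde P$ as a black-box polynomial of degree at most $2m-2$. It applies Bernstein's inequality together with the global bound $\|H\|_{L^\infty(\partial\D)}\le\rho$ from \eqref{eq:qerror1}, which gives $(2m-2)\rho$. You instead differentiate the bilinear representation $F=\bfpsi^\top\bfW\bfphi$ directly. Using $\|\bfphi'(\zeta)\|_2=\|\bfpsi'(\zeta)\|_2=\sqrt{(m-1)(2m-1)/6}\le m/\sqrt3$, you get $2m\rho/\sqrt3$.

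The paper's route needs only the sup-norm bound on the circle and the degree, so it would apply to any perturbation controlled only on $\partial\D$. Yours avoids Bernstein's inequality, is self-contained, and gives a slightly better constant. Either suffices for the stated bound $2m\rho\,|z-w_k|+\rho^2$.
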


\begin{proof}
    It will be helpful to define $H=P-\tilde P$, which is also a polynomial of degree at most $2m-2$. Fix any $z\in \overline \D$. Let $\gamma$ be the straight line segment contained in $\overline \D$ that connects $w_k$ and $z$. Then 
    $$
    H(z)=H(w_k)+\int_\gamma H'(w)\, dw.  
    $$
    By the max modulus principle and Bernstein's inequality, we see that
    \begin{align*}
        |H(z)|
        &\leq |H(w_k)|+ |z-w_k| \, \max_{w\in \D} |H'(w)| \\
        &\leq |H(w_k)|+ |z-w_k| \, (2m-2) \max_{w\in \partial\D}  |H(w)|.
    \end{align*}
    Observe that the definition of $H$ and inequalities \eqref{eq:qerror1} and \eqref{eq:qerror2} imply
    \begin{align*}
        |H(w_k)|
        &=|q(\theta_k)-\tilde q(\theta_k)|
        \leq \rho^2, \\
        \max_{w\in \partial\D} |H(w)|
        &=\max_{t\in \T} |q(t)-\tilde q(t)|
        \leq \rho.
    \end{align*}
	Combining the previous two displayed inequalities proves the first inequality of this lemma. 

    For the second inequality, recall Observations \ref{obs:3} and \ref{obs:4}. Then
	\begin{align*}
		P(1/\overline z)-\tilde P(1/\overline z)
		&= \left(1/\overline z \right)^{m-1} \left( Q(1/\overline z)-\tilde Q(1/\overline z) \right)\\
		&= \left(1/\overline z \right)^{m-1}\left( \overline{Q(z)}- \overline{\tilde Q(z)} \right)\\
		&= \left(1/\overline z \right)^{2m-2} \left( \overline{P(z)}-\overline{\tilde P(z)} \right). 
	\end{align*}
	Taking the absolute value completes the proof. 
\end{proof}

By using these perturbation bounds, we can control the perturbation of $P$ within a neighborhood of $w_k$. 

\begin{lemma} \label{lem:Perrorcircle}
    For any $k\in \{1,\dots,s\}$, $m>s$, and $z\in D(w_k,r)$ for some $r>0$, we have
    $$
    |P(z)-\tilde P(z)| 
    \leq (1+r)^{2m-2} \left( 2m \rho r + \rho^2 \right). 
    $$
\end{lemma}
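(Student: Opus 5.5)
The plan is to split into the cases $|z|\le 1$ and $|z|>1$, and to reduce the second case to the first using the reflection identity of \cref{lem:Pradialdisk}. Throughout, set $H:=P-\tilde P$. This is a polynomial of degree at most $2m-2$.

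\textbf{Case $|z|\le 1$.} Here $z\in \overline\D$, so \cref{lem:Pradialdisk} applies directly. Since $|z-w_k|<r$, it gives
$$|H(z)|\le 2m\rho|z-w_k|+\rho^2\le 2m\rho r+\rho^2.$$
The factor $(1+r)^{2m-2}\ge 1$ only weakens this, so the claimed bound holds in this case.

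\textbf{Case $|z|>1$.} Let $u:=1/\overline z$, which lies in $\D$. The second identity of \cref{lem:Pradialdisk}, applied at the point $u$, gives
$$|H(z)|=|H(1/\overline u)|=|u|^{-2m+2}|H(u)|=|z|^{2m-2}|H(u)|.$$
Two estimates then remain.

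\emph{The modulus factor.} Because $|z-w_k|<r$ and $|w_k|=1$, the triangle inequality gives $|z|\le 1+r$. Hence $|z|^{2m-2}\le (1+r)^{2m-2}$.

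\emph{The distance from $u$ to $w_k$.} Since $|w_k|=1$, we have $w_k=1/\overline{w_k}$, and so
$$|u-w_k|=\left|\frac{1}{\overline z}-\frac{1}{\overline{w_k}}\right|=\frac{|w_k-z|}{|z||w_k|}=\frac{|z-w_k|}{|z|}<r,$$
using $|z|>1$ in the last step. So the reflected point $u$ is no farther from $w_k$ than $z$ is.

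Applying the first inequality of \cref{lem:Pradialdisk} at $u\in\overline\D$ gives $|H(u)|\le 2m\rho r+\rho^2$. Multiplying by $|z|^{2m-2}\le(1+r)^{2m-2}$ yields the claim.

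\textbf{Main point to check.} The only step that is not immediate is the reflection computation: one must verify that $|u-w_k|\le|z-w_k|$, so that the disk bound from \cref{lem:Pradialdisk} transfers to the reflected point. The factor $(1+r)^{2m-2}$ in the statement is exactly the cost of the modulus factor $|z|^{2m-2}$ in the reflection identity. It is harmless in later applications, where $r\lesssim 1/m$ and therefore $(1+r)^{2m-2}$ is bounded by an absolute constant.
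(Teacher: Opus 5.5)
Your proof is correct and follows the paper's argument. Both split into the cases $|z|\le 1$ and $|z|\ge 1$, reduce the second case to the first via the reflection identity of \cref{lem:Pradialdisk}, and bound $|z|^{2m-2}\le(1+r)^{2m-2}$; your explicit identity $|1/\overline z-w_k|=|z-w_k|/|z|$ is a cleaner justification of $1/\overline z\in D(w_k,r)$ than the paper's geometric remark that reflection preserves the argument and moves the point closer to the circle.
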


\begin{proof}
    We first consider the case where $z\in D(w_k,r)$ and $|z|\leq 1$. Applying the first inequality in \cref{lem:Pradialdisk}, we see that
    $$
    |P(z)-\tilde P(z)| 
    \leq 2m \rho r + \rho^2. 
    $$
    
    For the remaining case where $z\in D(w_k,r)$ and $|z|\geq 1$, we first argue that $1/\overline z \in D(w_k,r)$ as well. This follows from the observation that the map $z\mapsto 1/\overline z$ does not change $\arg(z)$ and the inequality that $1-1/|z|\leq |z|-1$ tell us that $1/\overline z$ is closer to $\partial \D$ than $z$ is to $\partial\D$. Using this observation, we see that $|1/\overline z - w_k|\leq r$, and by both parts of \cref{lem:Pradialdisk},
    \begin{align*}
        |P(z)-\tilde P(z)| 
        &\leq |z|^{2m-2} |P(1/\overline z) - \tilde P(1/\overline z)| \\
        &\leq (1+r)^{2m-2} \left( 2m \rho |1/\overline z - w_k| + \rho^2 \right) \\
        &\leq (1+r)^{2m-2} \left( 2m \rho r + \rho^2 \right).
    \end{align*}
    This completes the proof. 
\end{proof}

By employing these results, we first show that for each $\theta_k$, $\tilde P$ has exactly two roots with reflection symmetry that are in $D(w_k,\alpha/m)$ for an absolute constant $\alpha>0$.
\begin{lemma}\label{lem:coarseroots}
    There are absolute constants $\alpha,\epsilon_2>0$ such that the following hold. Let $m\geq 100$ and $\{\theta_k\}_{k=1}^s\subset\T$ such that $\Delta(\{\theta_k\}_{k=1}^s)\geq 8\pi/m$ and $\tilde\bfU$ such that 
    $$
    \rho:=\|\bfU\bfU^*-\tilde\bfU\tilde\bfU^*\|_2\leq \epsilon_2.
    $$
    The Root-MUSIC polynomial $\tilde P$ has exactly two roots in the disk $D(w_k,\alpha/m)$ for each $k\in \{1,\dots,s\}$.
\end{lemma}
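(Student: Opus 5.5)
We will apply Rouch\'e's theorem (\cref{thm:rouche}) to $f=P$ and $g=\tilde P$ on the circle $\tilde C_k=\partial D(w_k,\alpha/m)$, with $\alpha$ a small absolute constant chosen first and $\epsilon_2$ chosen afterwards depending on $\alpha$. We need $\alpha\le c_0/5$ so that \cref{lem:Ppowerdisk} applies on $\tilde C_k$ (applying it on a slightly smaller closed disk, or by continuity extending to the boundary $|z-w_k|=c_0/(5m)$, since the bound there is continuous in $z$). This gives, for $z\in\tilde C_k$,
$$
|P(z)|\ge \tfrac12\bigl(1-\tfrac{8}{15}e^{2c_0/5}\bigr)c_0 m^2\cdot \frac{\alpha^2}{m^2}=:c_1\alpha^2,
$$
where $c_1>0$ because $c_0<1/6$ gives $\tfrac{8}{15}e^{1/15}<1$.

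For the upper bound on the perturbation, we use \cref{lem:Perrorcircle} with $r=\alpha/m$. Since $(1+\alpha/m)^{2m-2}\le e^{2\alpha}$, we obtain for $z\in\tilde C_k$
$$
|P(z)-\tilde P(z)|\le e^{2\alpha}\left(2\alpha\rho+\rho^2\right).
$$
We fix $\alpha=c_0/5$, which makes $c_1\alpha^2$ an absolute positive constant. We then choose $\epsilon_2$ small enough that $e^{2\alpha}(2\alpha\epsilon_2+\epsilon_2^2)<c_1\alpha^2$. For every $\rho\le\epsilon_2$ this gives $|P-\tilde P|<|P|$ on $\tilde C_k$. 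Since $\tilde C_k$ is a circle, the winding-number hypothesis of \cref{thm:rouche} holds with $U=\C$, and $P,\tilde P$ are entire.

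Rouch\'e then says $\tilde P$ and $P$ have the same number of zeros in $D(w_k,\alpha/m)$. It remains to count the zeros of $P$ there, which is exactly two: the double root at $w_k$. For this we use that \cref{lem:Ppowerdisk} gives $|P(z)|>0$ for $0<|z-w_k|\le \alpha/m$. Hence $P$ has no other zeros in the disk, in particular no other correct root $w_j$ (consistent with $\Delta\ge 8\pi/m\gg \alpha/m$) and no extraneous root. So $\tilde P$ has exactly two roots in $D(w_k,\alpha/m)$.

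The step needing most care is this zero count for $P$ and the interplay of constants: $\alpha$ must be small enough for the quadratic lower bound to hold, yet fixed independently of $\rho$, so that the bound on $|P-\tilde P|$, which is linear in $\alpha\rho$, is dominated by the quadratic $\alpha^2$ once $\rho$ is small. The reflection-symmetry claim (the two roots form a pair $z_k,1/\overline{z_k}$) is not needed here. It follows afterwards from \cref{obs:4} together with the fact that the disk is nearly invariant under reflection, and is handled in \cref{thm:step1}.
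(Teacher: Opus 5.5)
Your proposal is correct and follows essentially the paper's argument. You apply Rouch\'e on $\partial D(w_k,\alpha/m)$, using the quadratic lower bound from \cref{lem:Ppowerdisk} against the bound from \cref{lem:Perrorcircle}, which is linear in $\alpha\rho$. Fixing $\alpha=c_0/5$ and then shrinking $\epsilon_2$ is a cleaner ordering of the paper's parameter choice via \eqref{eq:alphalower}, and your use of \cref{lem:Ppowerdisk} to show $P$ has only the double zero $w_k$ in the disk (excluding extraneous roots as well as other $w_j$) makes explicit a zero count the paper leaves implicit.
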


\begin{proof}
    Fix $w_k=e^{i\theta_k}$ and consider the circle $\partial D(w_k, \alpha/m)$ where $\alpha>0$ will be chosen later. For now, assume that $\alpha\leq c_0/5<1/30$. By \cref{lem:Ppowerdisk}, for all $z\in \partial D(w_k, \alpha/m)$, we have 
	\begin{equation}
		\label{eq:Plower2}
		|P(z)| 
		\geq \frac 12 \left( 1 - \frac{8}{15} e^{2c_0/5}\right) c_0 \alpha^2. 
	\end{equation}	
    On the other hand, by \cref{lem:Perrorcircle}, the inequality $1+t\leq e^t$, and the assumption $\rho\leq \epsilon_2$, for all $z\in \partial D(w_k, \alpha/m)$, 
	\begin{equation}
		\label{eq:Pupper2}
		|P(z)-\tilde P(z)|
		\leq \left(1 + \frac{\alpha} m \right)^{2m-2} (2 \alpha \epsilon_2 +\epsilon_2^2) 
        \leq e^{2\alpha} (2\alpha \epsilon_2+\epsilon_2^2)
        \leq e^{2c_0/5} (2\alpha \epsilon_2 + \epsilon_2^2). 
	\end{equation}
    Now we are in position to pick $\alpha$ such that the right side of \eqref{eq:Plower2} is strictly larger than the right side of \eqref{eq:Pupper2}. This clearly possible for $\alpha$ large enough (since the lower bound is quadratic while the upper bound is linear) and explicitly,
    \begin{equation}
        \label{eq:alphalower}
        \alpha > \frac{2e^{2c_0/5}\epsilon_2 + \epsilon_2 \sqrt{4e^{4c_0/5}+2c_0(1 - 8e^{2c_0/5}/15)e^{2c_0/5}}}{c_0(1 - 8e^{2c_0/5}/15)}.
    \end{equation}
    Notice that \eqref{eq:alphalower} requires $\alpha \geq C\epsilon_2$ for some absolute $C>0$. Thus, make $\epsilon_2$ even smaller if necessary so that there is a $\alpha<c_0/5$ that also satisfies \eqref{eq:alphalower}. Then 
    $$
    |P(z)-\tilde P(z)|
    < |P(z)| \forallspace z\in \partial D(w_k, \alpha/m).
    $$
    The assumptions of Rouche's theorem hold, so we conclude that the two polynomials $P$ and $\tilde P$ have the same number of zeros inside the region enclosed by $\partial D(w_k, \alpha/m)$. 
\end{proof}

The reader may notice that \cref{lem:coarseroots} appears worse than what is stated in \cref{thm:step1}. The usefulness of the lemma will be apparent in the proof of the theorem and again in Step 2. It is also important to note that \cref{lem:coarseroots} does not assert that the two roots of $\tilde P$ in $D(w_k,\alpha/m)$ is a pair with reflection symmetry.  

\subsection{Proof of \cref{thm:step1}}
\label{proof:step1}

\begin{proof}
	Let $\alpha,\epsilon_2>0$ be the quantities in \cref{lem:coarseroots}. We concentrate on the first assertion of this theorem. We will pick $R$ and $\epsilon_1$ appropriately later. For now, we make $R\geq 1$, $\epsilon_1\leq \epsilon_2$, and $R\epsilon_1\leq c_0/5$, where $c_0\in (0,1/6)$ is given in \cref{lem:q''}. By \cref{lem:Ppowerdisk}, for all $z\in \partial D(w_k,R\rho/m)$, we have 
	\begin{equation}
		\label{eq:Plower}
		|P(z)| 
		\geq \frac 12 \left( 1 - \frac{8}{15} e^{2c_0/5}\right) c_0 R^2 \rho^2. 
	\end{equation}	
    Note $c_0<1/6$, so the right hand side is a non-vacuous bound. On the other hand, by \cref{lem:Perrorcircle} and the inequality $1+t\leq e^t$, for all $z\in \partial D(w_k,R\rho/m)$, 
	\begin{equation}
		\label{eq:Pupper}
		|P(z)-\tilde P(z)|
		\leq \left(1 + \frac{R\rho} m \right)^{2m-2} (2 R\rho^2 +\rho^2) 
        \leq e^{2R\rho} (2R+1)\rho^2
        \leq e^{2c_0/5} (2R+1)\rho^2. 
	\end{equation}
	Now we are in position to choose $R$ correctly in order to invoke Rouche's theorem. Notice that the right hand side of \eqref{eq:Plower} is quadratic in $R$ while the right hand side of \eqref{eq:Pupper} is linear in $R$, and they both have the same scaling in $\rho$. Explicitly, pick any $R$ such that 
    \begin{equation}
        \label{eq:Rcondition}
        \frac{R^2}{2R+1}> \frac 2 {c_0} \left(\frac{e^{2c_0/5}}{1-8e^{2c_0/5}/15}\right), 
    \end{equation}
    and let $\epsilon_1$ such that $R\epsilon_1 \leq c_0/5$ still holds. For these choices of parameters, using \eqref{eq:Plower} and \eqref{eq:Pupper}, we get
    $$
    |P(z)-\tilde P(z)|
    < |P(z)| \forallspace z\in \partial D(w_k,R\rho/m).
    $$
    
	The assumptions of Rouche's theorem hold, so we conclude that the two polynomials $P$ and $\tilde P$ have the same number of zeros inside the region enclosed by $\partial D(w_k,R\rho/m)$. Since $R\rho\leq R\epsilon_1\leq c_0/5 < 1/30$ while $\Delta(\{\theta_k\}_{k=1}^s)\geq 8\pi/m$, we see that $P$ has a single double root inside $\partial D(w_k,R\rho/m)$, namely $w_k$. Thus, $\tilde P$ has exactly two roots inside $\partial D(w_k,R\rho/m)$ as well. 
    

    It remains to show that the two roots of $\tilde P$ in the disk $D(w_k, R\rho /m)$ can be guaranteed to be of the form $z_k$ and $1/\overline{z_k}$ whenever $\epsilon_1$ is sufficiently small. To see why this is indeed the case, if $z_k\in D(w_k,R\rho/m)$ is a root of $\tilde P$ and $z_k\not\in \D$, then we are done since $1/\overline {z_k}\in D(w_k,R\rho/m)$ as well. Consider the case that $z_k\in D(w_k,R\rho/m)$ and $z_k\in \D$. Recall \cref{lem:coarseroots} which guarantees that $\tilde P$ has exactly two roots in $D(w_k,\alpha/m)$ for an absolute $\alpha>0$. Then the proof is complete once we show that $1/\overline{z_k}\in D(w_k,\alpha/m)$. We claim that this is indeed the case if $\epsilon_1$ is sufficiently small.
    Noting that $\left|1/\overline{z_k} - e^{i\arg(z_k)} \right| = \left|\left|1/\overline{z_k}\right| e^{i\arg(z_k)}  - e^{i\arg(z_k)} \right| = \left|1/\overline{z_k}\right| -1$, we obtain: 

    \begin{align*}
    \left|\frac1{\overline{z_k}} - w_k\right|
    &\leq \left|\frac1{\overline{z_k}} - e^{i\arg(z_k)} \right| + \left| e^{i\arg(z_k)} - e^{i\theta_k} \right|
    \leq \left|\frac1{\overline{z_k}} \right| -1 + \left|\arg(z_k)-\theta_k\right| \\
    &\leq \left(1-\frac{R\rho}m\right)^{-1}-1 + \frac{R\rho}m
    = \frac {R \rho} {m-R\rho} + \frac {R\rho}m.
    \end{align*}
    Since $m\geq 100$, whenever $R\epsilon_1\leq 50$, we have $m-R\rho \geq m/2$. Now we need to make $\epsilon_1$ even smaller if necessary so that $4R\epsilon_1\leq \alpha$. Then inserting this into the previous inequality shows that 
    $$
    \left|\frac1{\overline{z_k}} - w_k\right|
    \leq \frac{3R\rho}m 
    \leq \frac{3R\epsilon_1}m
    < \frac \alpha m. 
    $$
    This implies both $z_k \in \D$ and its reflection are the only two roots contained in $D(w_k,\alpha/m)$ and consequently in $D(w_k,R\rho/m)$. 

    To summarize the parameter choice argument, note that $\alpha,\epsilon_2$ are fixed absolute constants. We first pick $R\geq 1$ to be a sufficiently large absolute constant so that inequality \eqref{eq:Rcondition} holds. Then we pick $\epsilon_1$ sufficiently small so that $\epsilon_1\leq \epsilon_2$ and  $R\epsilon_1\leq \min\{c_0/5, \alpha/4, 100\}$.  
\end{proof}

\subsection{Lemmas for Step 2a}

In the expository parts, we assume that the assumptions and conclusions of \cref{thm:step1} hold. By the first part of the theorem, $\tilde P$ has two roots in $D(w_k,R\rho/m)\subset D(w_k, R\epsilon_1/m)$. 

\begin{figure}[ht]
        \centering
        \begin{subfigure}{0.44\textwidth}
    \begin{tikzpicture}
        \fill[red] (0,0) ++(45:5) circle [radius=2pt] coordinate (wk);
        \node[red] (w) at (4.5,2) {$w_k$};
        \draw[->, red, thick] (0,0)++(45:5) .. controls +(right:0cm) and +(left:1cm) .. (w);
        \draw[red!50!blue, ultra thick] (wk) circle (2.5);
        \draw[red!70!blue, ultra thick] (wk) circle (0.6);
        \draw[blue, ultra thick] (0,0) ++(35:5.8) coordinate (a) arc (35:55:5.8) coordinate (c);
        \draw[blue, ultra thick] (0,0) ++(35:4.2) coordinate (b) arc (35:55:4.2) coordinate (d);
        \draw[blue, ultra thick] (a) -- (b);
        \draw[blue, ultra thick] (d) -- (c);
        \draw[blue, very thick, densely dotted] (0,0)++(45:4.2) -- (wk);
        \node[blue] (e) at (2.5,2.5) {$\frac {\tau_1}m$};
        \draw[->, blue, thick] (0,0)++(45:4.6) .. controls +(right:0cm) and +(right:1cm) .. (e);
        \draw[blue, very thick, densely dotted] (0,0)++(35:5) -- (wk); 
        \node[blue] (f) at (4.7,4.5) {$2\sin(\frac {\alpha_1}{2m})$};
        \draw[->, blue, thick] (0,0)++(40:5) .. controls +(left:0cm) and +(down:1cm) .. (f);
        \draw[red!50!blue, dashed, very thick] (wk) ++ (0,2.5) -- node[right, near start] {$\frac {\alpha}m$} (wk);
        \draw[red!70!blue, dashed, very thick] (wk) ++ (140:0.6) -- (wk);
        \node[red!70!blue] (g) at (2,4.5) {$\frac {R\epsilon_3}m$};
        \draw[->, red!70!blue, thick] (wk) ++ (140:0.3) .. controls +(right:0cm) and +(down:1cm) .. (g);
    \end{tikzpicture}
    \caption{$D(w_k,R\epsilon_3/m) 
    \subset W(w_k,1-\tau_1/m,\alpha_1/m)
    \subset D(w_k,\alpha/m)$, using (b) where $\theta = \frac{\alpha_1}m$.} 
        \label{fig:figure2}
        \end{subfigure}
    \qquad
        \begin{subfigure}{0.44\textwidth}
            \begin{tikzpicture}
        \fill[black!70] (0,0) circle [radius=2pt] node[right] {$(0,0)$};
        \fill[black!70] (0,0) ++(60:5) circle [radius=2pt] node[above right = -2pt] {$w_k$} coordinate (wk);
        \fill[black!70] (0,0) ++(120:5) circle [radius=2pt] node[above left = -3.5pt] {$w_k e^{i\theta}$} coordinate (w);
        \draw[black, thick] (wk) -- node[below,sloped] {$r=1$} (0,0);
        \draw[black, thick] (w) -- (0,0);
        \draw[black, dotted, thick] (wk) -- coordinate[midway](m) (w);
        \draw[black, dotted, thick] ++(90:5) -- (0,0);
        \draw[black,thick] (m)+(-0.5,0) -- ++(-0.5,-0.5) -- ++(0.5,0);
        \draw[red, thick] ++(60:1.5) arc (60:120:1.5) node[midway, above] {$\theta$ rad};
        \draw[blue, very thick] ++(60:5) arc (60:120:5) node[midway, above] {arc length $\theta$};
        \draw[red!50!blue, very thick] (w) -- node[midway,below] {$\sin(\frac{\theta}{2})$} (m);
    \end{tikzpicture}
    \caption{Given a difference in angle $\theta$ radians, the distance between $w_k$ and another point on the unit circle $w_ke^{i\theta}$ is $2\sin(\frac{\theta}2)$.}
        \label{fig:figure1} 
        \end{subfigure}        
    \end{figure}
    
\begin{lemma}
    \label{obs:5}
    Suppose the assumptions and conclusions of \cref{thm:step1} hold. There exist $\tau_1,\alpha_1,m_0>0$ depending only on $\alpha$ and $\epsilon_3>0$ depending only on $\alpha, R$, such that for $m \geq m_0$,
    $$
    D(w_k,R\epsilon_3/m) 
    \subset W(w_k,1-\tau_1/m,\alpha_1/m)
    \subset D(w_k,\alpha/m). 
    $$
\end{lemma}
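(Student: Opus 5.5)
This is an elementary geometric containment; the plan is to pick $\tau_1,\alpha_1$ as small fixed fractions of $\alpha$, then $\epsilon_3$ smaller still, and verify the two inclusions separately.

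\textbf{Outer inclusion.} Take any $z\in W(w_k,1-\tau_1/m,\alpha_1/m)$ and write $z=re^{i(\theta_k+\phi)}$ with $|\phi|\le \alpha_1/m$ and $1-\tau_1/m\le r\le (1-\tau_1/m)^{-1}$. By the triangle inequality,
$$
|z-w_k|\le |r-1| + |e^{i\phi}-1| = |r-1| + 2|\sin(\phi/2)| \le |r-1| + \frac{\alpha_1}{m}.
$$
The worst radial case is $r=(1-\tau_1/m)^{-1}$, where
$$
r-1=\frac{\tau_1}{m-\tau_1}\le \frac{2\tau_1}{m}
$$
as soon as $m\ge 2\tau_1$; this is one source of the condition $m\ge m_0$. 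Hence $|z-w_k|\le (2\tau_1+\alpha_1)/m$. Choosing, say, $\tau_1=\alpha/4$ and $\alpha_1=\alpha/4$ gives $|z-w_k|\le 3\alpha/(4m)<\alpha/m$, which is the desired inclusion. A strict inclusion is what we want, since the wedge is closed and $D(w_k,\alpha/m)$ is open.

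\textbf{Inner inclusion.} Take $z\in D(w_k,R\epsilon_3/m)$, and put $\delta=R\epsilon_3/m$.
\begin{itemize}
\item Modulus: $\big||z|-1\big|\le |z-w_k|<\delta$. So $|z|\ge 1-\tau_1/m$ provided $R\epsilon_3\le \tau_1$. Also $|z|\le 1+\delta\le (1-\tau_1/m)^{-1}$ provided $R\epsilon_3\le \tau_1$, since $(1-x)^{-1}\ge 1+x$.
\item Argument: the point $z$ lies in a disk of radius $\delta<1$ about $w_k$, so the angle it subtends at the origin satisfies $|\arg z-\theta_k|\le \arcsin\delta$. We need $\arcsin(R\epsilon_3/m)\le \alpha_1/m$. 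Using $\arcsin x\le \pi x/2$ for $x\in[0,1]$, it suffices to take $R\epsilon_3\le 2\alpha_1/\pi$.
\end{itemize}
Therefore it suffices to choose $\epsilon_3=\min\{\tau_1,\,2\alpha_1/\pi\}/R$, which depends only on $\alpha$ and $R$, as the statement requires.

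\textbf{Remaining bookkeeping.} The only real care needed is in two places. First, the angle bound must be handled correctly; a first-order estimate like $|\arg z-\theta_k|\lesssim |z-w_k|$ is fine, but it should be made rigorous through $\arcsin$. Second, the threshold $m_0$ must be tracked so that $R\epsilon_3/m<1$ and $\tau_1/m\le 1/2$. Both hold for $m_0\ge\max\{100,2\tau_1\}$, since $\tau_1<\alpha<1/30$. I do not expect a genuine obstacle here; the lemma is elementary planar geometry, illustrated by the figures.
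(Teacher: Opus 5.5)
Your proof is correct and follows essentially the paper's argument: the outer inclusion uses the same triangle-inequality estimate $|z-w_k|\le \tau_1/(m-\tau_1)+\alpha_1/m<(2\tau_1+\alpha_1)/m$, and the inner inclusion leads to the same constraint $R\epsilon_3\le\min\{\tau_1,2\alpha_1/\pi\}$. The only difference is that you check the inner inclusion pointwise (the modulus, plus an $\arcsin$ bound on the argument) rather than bounding $|z-w_k|$ from below over $\partial W_k$; your version is slightly cleaner, since the nearest point of a radial edge is actually at distance $\sin(\alpha_1/m)$ from $w_k$, not the paper's $2\sin(\alpha_1/(2m))$ (this is harmless, as $\sin(\alpha_1/m)\ge 2\alpha_1/(\pi m)$ still holds).
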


\begin{proof}
    Let $\tau_1, \alpha_1 >0$ be any pair such that $\alpha > 2\tau_1 + \alpha_1$, and let $m_0 > 2\tau_1$. Note that by this construction, since $\tau_1$ only depends on $\alpha$, so does $m_0$. 
    We first inscribe $W_k := W(w_k,1-\tau_1/m,\alpha_1/m)$ into $D(w_k,\alpha/m)$. Note that $\left( 1-\frac{\tau_1}m \right)^{-1} -1= \frac{\tau_1}{m-\tau_1} > \frac{\tau_1}m$ for any $0<\tau_1<m$. Then $m>2\tau_1$ and $m-\tau_1>\frac m2$ for all $m\geq m_0$. By a geometric argument as seen in \cref{fig:figure2}, the criteria for $z$ on the wedge implies that for all $z\in \partial W_k$,
    \begin{align*}
        |z-w_k| &\leq |z - e^{i\arg(z)}| + |e^{i\arg(z)} - w_k| \leq \frac {\tau_1}{m-\tau_1}  + \frac {\alpha_1}m \\ 
        &< \frac{\tau_1}{m/2} + \frac {\alpha_1}m = \frac{2\tau_1}m + \frac {\alpha_1}m < \frac \alpha m. 
    \end{align*}
    This shows that the disk of radius $\alpha/m$ centered at $w_k$ contains the wedge $W_k$. 
    
    Next, we inscribe $D(w_k,R\epsilon_3/m)$ into $W_k$. It is safe to assume that $\epsilon_3$ is sufficiently small such that $R\epsilon_3 < \min \left\{ \tau_1, \frac{2\alpha_1}\pi \right\}$, thus depending only on $\alpha$ and $R$. By another geometric argument as in \cref{fig:figure1} and using that $\sin(t)\geq 2t/\pi$ for all $t\in [0,\pi/2]$, we obtain for all $z\in \partial W_k$,
    \begin{align*}
        |z-w_k| &\geq \min\Big\{2\sin\left(\frac{\alpha_1}{2m}\right), \, \frac {\tau_1} m, \, \frac 1{1-\tau_1/m} -1 \Big\} \\
        &\geq \min\Big\{\frac{2\alpha_1}{\pi m}, \frac {\tau_1}m \Big\} 
        > \frac {R\epsilon_3}m. 
    \end{align*}
    Therefore the wedge $W_k$ contains $D(w_k,R\epsilon_3/m)$. 
\end{proof}

Combining this observation with \cref{thm:step1}, we deduce $\tilde P$ has two roots with reflection symmetry in the wedge $W_k = W(w_k,1-\tau_1/m,\alpha_1/m)$. It remains to show that $\tilde P$ has no other roots in the rest of an annulus that does not intersect $\bigcup_k W_k$. 

Reference \cite[Lemma 9.6]{fannjiang2025optimality} provided a lower bound for $q(t)$ under the condition that $d(t,\theta_k)\geq \pi/m$ for all $\theta_k\in\{\theta_k\}_{k=1}^s$ and $\Delta(\{\theta_k\}_{k=1}^s)\geq 2\pi \beta/m$ for fixed $\beta>1$. This was then used to deduce that $q(t)\geq c$ for some absolute constant $c>0$. Unfortunately we cannot directly quote this result because we need to lower bound $q(t)$ whenever $d(t,\theta_k)\geq \alpha_1/m$ where $\alpha_1$ is potentially smaller than $\pi$. This is just a technicality and following the same arguments, we can get the same conclusion with a possibly different constant.  

\begin{lemma}
    \label{lem:far1}
    Fix any integer $m_0\geq 1$ and any $\alpha_2>0$. There exist $c_1>0$ and $\beta\geq 1$ depending only on $m_0,\alpha_2$ such that for any $\{\theta_k\}_{k=1}^s\subset\T$ such that $m\geq m_0$ and $\Delta(\{\theta_k\}_{k=1}^s)\geq 2\pi\beta/m$, the following hold. For all $t\in \T$ with $\min_k d(t,\theta_k)\geq \alpha_2/m$, we have $q(t)\geq c_1$.
\end{lemma}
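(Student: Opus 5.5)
The plan is to write $q$ through the Fourier matrix and reduce the lemma to two estimates on the normalized Dirichlet kernel. Since $\bfU\bfU^*=\bfPhi(\bfPhi^*\bfPhi)^{-1}\bfPhi^*$, formula \eqref{eq:qdef2} gives
$$
1-q(t)=\bfv(t)^*\bfG^{-1}\bfv(t),\qquad \bfG:=\tfrac1m\bfPhi^*\bfPhi,\qquad \bfv(t):=\tfrac1{\sqrt m}\bfPhi^*\bfphi(e^{it}).
$$
The entries are $v_k(t)=K_m(t-\theta_k)$ and $\bfG_{jk}=K_m(\theta_k-\theta_j)$ up to unimodular phases, where
$$
K_m(x):=\frac1m\sum_{j=0}^{m-1}e^{ijx},\qquad |K_m(x)|=\frac{|\sin(mx/2)|}{m|\sin(x/2)|}.
$$
Hence $1-q(t)\le \|\bfG^{-1}\|_2\sum_k|K_m(t-\theta_k)|^2$. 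It therefore suffices to show two things: $\|\bfG^{-1}\|_2\le 1+C/\beta$, and $\sum_k |K_m(t-\theta_k)|^2\le 1-c$ for some $c>0$ depending only on $\alpha_2$ and $m_0$, once $\beta$ is large. Then choosing $\beta$ large makes $(1+C/\beta)(1-c)<1$, and $c_1$ is the resulting gap.

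\emph{Step 1: conditioning of $\bfG$.} I will quote the sharp Vandermonde singular value bound from \cite{aubel2019vandermonde} (Moitra's large-sieve form): $\sigma_s(\bfPhi)^2\ge m-2\pi/\Delta$. Under $\Delta\ge 2\pi\beta/m$ this gives $\lambda_{\min}(\bfG)\ge 1-1/\beta$. I will avoid Gershgorin here, because the off-diagonal Dirichlet sums only yield a $\log s$ loss, which is not uniform in $s$.

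\emph{Step 2: the nearest frequency.} Let $\theta_{k_0}$ be the frequency closest to $t$, so $d(t,\theta_{k_0})\ge\alpha_2/m$. I need a uniform bound $|K_m(x)|\le 1-c'(\alpha_2,m_0)$ for all $\alpha_2/m\le|x|\le\pi$ and $m\ge m_0$. For $|x|\ge \pi/m$, the estimate $|K_m(x)|\le 1/(m|\sin(x/2)|)\le \pi/(m|x|)\le 1$ needs a little care near $|x|\approx \pi/m$. Writing $u=mx/2$, one has $|K_m(x)|\le |\sin u|/u\cdot\frac{x/2}{\sin(x/2)}$, where $|\sin u|/u\le 1-c''(\alpha_2)$ for $u\ge\alpha_2/2$ and $\frac{x/2}{\sin(x/2)}$ is close to $1$ when $|x|$ is small. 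Large $|x|$ is handled by the $\pi/(m|x|)$ bound, and the dependence on $m_0$ absorbs the finitely many small cases.

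\emph{Step 3: the remaining frequencies.} By separation, the other frequencies satisfy $d(t,\theta_k)\ge (j-\tfrac12)\Delta$ for the $j$-th nearest on each side. Then $|K_m(t-\theta_k)|^2\le \pi^2/(m\,d(t,\theta_k))^2\lesssim 1/(\beta^2 j^2)$, so their total contribution is $O(1/\beta^2)$, uniformly in $s$.

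Combining the three steps gives $1-q(t)\le (1-1/\beta)^{-1}\big((1-c')^2+C/\beta^2\big)$, which is at most $1-c_1$ for $\beta$ large enough depending on $\alpha_2$ and $m_0$. I expect the main obstacle to be Step 2, namely making the bound $|K_m(x)|\le 1-c'$ genuinely uniform in $m\ge m_0$ over the whole range $|x|\ge\alpha_2/m$ with $\alpha_2$ possibly smaller than $\pi$. I will handle it by splitting into the ranges $|x|\le 1$ (the sinc comparison) and $|x|\ge1$ (the crude $\pi/(m|x|)$ bound, enlarging $m_0$ if needed).
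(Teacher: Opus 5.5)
Your argument is correct, and it takes a genuinely different route from the paper.

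\textbf{The paper's route.} It splits into two regimes. When $\min_k d(t,\theta_k)\geq \pi/m$, it simply quotes \cite[Lemma 9.6]{fannjiang2025optimality}. When $\alpha_2/m\le d(t,\theta_k)<\pi/m$, it does three things:
\begin{enumerate}
    \item it invokes the exact local identity $q(t)=1-f_m(t-\theta_k)-\|\bfW_k^*\bfphi(t)\|_2^2$ from \cite[Lemma 9.3]{fannjiang2025optimality}, available once $A(\beta)E_0(m_0,\beta,\beta)<1/4$;
    \item it asserts, by rerunning the argument of Lemma 9.6 there, that the $\bfW_k$ term tends to zero as $\beta\to\infty$, uniformly in $m\geq m_0$;
    \item it bounds $1-f_m$ from below on the main lobe.
\end{enumerate}

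\textbf{Your route.} You replace all of this by the single inequality $1-q(t)\le \lambda_{\min}(\boldsymbol G)^{-1}\sum_k f_m(t-\theta_k)$. Here your $|K_m|^2$ is exactly the normalized Fej\'er kernel $f_m$. This handles both regimes at once and is self-contained. It also makes the dependence transparent: $\beta$ is of order $1/c'(\alpha_2)$, and the tail has size $O(1/\beta^2)$ uniformly in $s$.

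The price is that you need the sharp large-sieve form $\sigma_s(\bfPhi)^2\geq m(1-O(1/\beta))$ from Moitra or \cite{aubel2019vandermonde}. The cruder $\sigma_s(\bfPhi)\asymp\sqrt m$ that the paper uses elsewhere is not enough, since you need $\|\boldsymbol G^{-1}\|_2\to 1$. Your observation that Gershgorin would lose a factor $\log s$ is exactly why. If you use Moitra's version $m-1-2\pi/\Delta$, the extra $1/m$ is harmless, because $m\geq 2\beta$ whenever $s\geq 2$. The paper's identity is exact rather than an inequality, but that extra precision is not needed for this lemma.

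\textbf{On Step 2.} The lemma fixes $m_0$, so you cannot enlarge it. Compactness over the finitely many $m$ below your threshold does rescue the argument, but there is a cleaner uniform bound. For $0\le u=m|x|/2\le \pi$ and $m\geq 2$,
\[
|K_m(x)|=\frac1m\sum_{j=0}^{m-1}\cos\Big(\frac{(m-1-2j)u}{m}\Big)\le 1-\frac{2u^2}{\pi^2}\cdot\frac{m^2-1}{3m^2}\le 1-\frac{u^2}{2\pi^2}.
\]
Hence $|K_m(x)|\le 1-\alpha_2^2/(8\pi^2)$ for $\alpha_2/m\le |x|\le 2\pi/m$. For $2\pi/m\le|x|\le\pi$ you have $|K_m(x)|\le \pi/(m|x|)\le 1/2$. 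Together these give a $c'$ depending only on $\alpha_2$, valid for every $m\geq 2$.
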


\begin{proof}
    Fix any $t$ satisfying the assumption and let $\theta_k$ be an element in $\{\theta_k\}_{k=1}^s$ that is closest to $t$. We can assume without loss of generality that $d(t,\theta_k)<\pi/m$, otherwise we use \cite[Lemma 9.6]{fannjiang2025optimality} and we are done. 
    
    We will use \cite[Lemma 9.3]{fannjiang2025optimality}, which provides a local approximation of $q(t)$ in terms of a normalized Fej\'er kernel,
    $$
    f_m(t) := \frac 1 {m^2} \left( \frac{\sin(mt/2)}{\sin(t/2)}\right)^2.
    $$
    The assumptions of the referenced lemma requires $A(\beta) E_0(m_0,\beta,\beta) < 1/4$, where $A(\beta)$ and $E_0(m_0,\beta,\beta)$ are explicit quantities that do not depend on $m$ and are defined in \cite[Section 9]{fannjiang2025optimality}. We do not need their explicit formulas, but we only need to note that $A(\beta)\to 1$ and $E_0(m_0,\beta,\beta) \to 0$ as $\beta\to\infty$. Hence, for sufficiently large absolute $\beta$, we have $A(\beta) E_0(m_0,\beta,\beta) < 1/4$. Then
    $$
    q(t)=1-f_m(t-\theta_k)-\bfphi(t)^* \bfW_k \bfW_k^* \bfphi(t),
    $$
    where $\bfW_k\in \C^{m\times (s-1)}$ was defined in the referenced lemma. Again, we do not need to know what $\bfW_k$ is, except that repeating the same argument in the proof of \cite[Lemma 9.6]{fannjiang2025optimality}, when $\beta\to\infty$, 
    $$
    \bfphi(t)^* \bfW_k \bfW_k^* \bfphi(t)
    = \|\bfW_k^* \bfphi(t)\|_2^2
    \to 0, 
    $$
    uniformly over $m\geq m_0$ and $t\in \T$ such that $d(t,\theta_k)\geq \alpha_2/m$ for all $k$. Next, we use local approximation of the Fej\'er kernel such as \cite[Lemma 9.10]{fannjiang2025optimality} to see that since $\alpha_2/m\leq d(t,\theta_k)\leq \pi/m$, there is a $c_1>0$ that depends only on $\alpha_2$ and not on $m$ such that 
    $$
    1-f_m(t-\theta_k)\geq 2c_1. 
    $$
    Hence, we pick $\beta$ big enough depending only on $\alpha_2$ and $c_1$ so that $q(t)\geq c_1$, which completes the proof. 
\end{proof}

Having controlled $q(t)$ for $t$ that are sufficiently far away from $\{\theta_k\}_{k=1}^s$, we need to control $P$ on the annulus in \eqref{eq:naturalannulus}.

\begin{lemma}
    \label{lem:far2}
    Fix any integer $m_0\geq 1$ and let $\alpha_2>0$. There exist $\epsilon_4, \tau_2>0$ and $\beta\geq 1$ depending only on $m_0,\alpha_2$ such that the following holds. For any $\rho \leq \epsilon_4$, $m\geq m_0$, and $\{\theta_k\}_{k=1}^s\subset\T$ such that $\Delta(\{\theta_k\}_{k=1}^s)\geq 2\pi\beta/m$, $|\tilde P(z)|>0$ for all $z\in \C$ such that $1-\tau_2/m \leq |z|\leq (1-\tau_2/m)^{-1}$ and $\min_k d(\arg(z),\theta_k)\geq \alpha_2/m$. 
\end{lemma}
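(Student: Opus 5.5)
By the reflection identity of \cref{obs:4}, $|\tilde P(1/\overline z)| = |z|^{-2m+2}|\tilde P(z)|$. So it suffices to treat $z\in\overline\D$ with $1-\tau_2/m\le |z|\le 1$; the outer half of the annulus follows by reflection. Write $z=re^{it}$ with $d(t,\theta_k)\ge\alpha_2/m$ for all $k$. The plan is to compare $\tilde P(z)$ radially with $\tilde P(e^{it})$, and then compare the latter with $P(e^{it})$, where \cref{lem:far1} gives a lower bound.

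\textbf{Step 1: lower bound on the circle.} Take $c_1$ and $\beta$ from \cref{lem:far1} with the given $m_0,\alpha_2$. Since $|P(e^{it})|=|Q(e^{it})|=q(t)\ge c_1$, the uniform bound \eqref{eq:qerror1} gives
$$
|\tilde P(e^{it})|\ge |P(e^{it})|-|P(e^{it})-\tilde P(e^{it})| \ge c_1-\rho\ge c_1-\epsilon_4 .
$$
Choosing $\epsilon_4\le c_1/4$ gives $|\tilde P(e^{it})|\ge 3c_1/4$.

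\textbf{Step 2: radial perturbation.} I will bound $|\tilde P(z)-\tilde P(e^{it})|$ by $\int_r^1|\tilde P'(ue^{it})|\,du$. By the maximum modulus principle and Bernstein's inequality for the degree-$(2m-2)$ polynomial $\tilde P$,
$$
\sup_{\overline\D}|\tilde P'|\le (2m-2)\max_{\partial\D}|\tilde P| .
$$
On $\partial\D$ we have $|\tilde P(e^{iu})|=|\tilde q(u)|$, and $|\tilde q(u)|\le 1$ because $\tilde q(u)=\|\tilde\bfU_\perp^*\bfphi(e^{iu})\|_2^2\in[0,1]$. Hence
$$
|\tilde P(z)-\tilde P(e^{it})|\le (2m-2)(1-r)\le 2\tau_2 .
$$

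\textbf{Step 3: choice of $\tau_2$.} Take $\tau_2\le c_1/4$, so that the perturbation in Step 2 is at most $c_1/2$. Combining with Step 1,
$$
|\tilde P(z)|\ge \tfrac34 c_1-\tfrac12 c_1=\tfrac14 c_1>0 .
$$
All constants depend only on $m_0$ and $\alpha_2$, as the statement requires.

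\textbf{Main difficulty.} Everything above is routine once \cref{lem:far1} is available. The only point to watch is that the inner part of the annulus has to sit inside $\overline\D$ for the max modulus and Bernstein step to apply. That is why the reflection is used to handle $|z|>1$, rather than applying Bernstein's inequality outside the disk, where $\tilde P$ can grow like $|z|^{2m-2}$. If sharper constants were wanted, one could use the $e^{\tau}$ growth bound instead, but that is not needed here.
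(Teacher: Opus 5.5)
Your proposal is correct and follows essentially the same argument as the paper. It transfers the lower bound $q(t)\geq c_1$ from \cref{lem:far1} to $\tilde q$ via \eqref{eq:qerror1}, runs the radial step with the maximum modulus principle and Bernstein's inequality using $|\tilde P|\leq 1$ on $\partial\D$, and handles the outer half of the annulus by the reflection identity $|\tilde P(1/\overline z)|=|z|^{-2m+2}|\tilde P(z)|$. Your explicit justification that $\tilde q(u)=\|\tilde\bfU_\perp^*\bfphi(e^{iu})\|_2^2\in[0,1]$, and your citation of \cref{obs:4} rather than \cref{obs:3} for the symmetry of $\tilde Q$, are slightly cleaner than the written proof.
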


\begin{proof}
    For $t=arg(z)$, when $\min_k d(\arg(z),\theta_k)\geq \alpha_2/m$, the assumptions of \cref{lem:far1} hold so there is a $c_1>0$ depending only on $m_0$ and $\alpha_2$ such that $|P(e^{it})|=|Q(e^{it})|=q(t)\geq c_1$. 
    Note that for all such $t$, using \eqref{eq:qerror1}, 
    $$
    \tilde q(t) \geq q(t)-|\tilde q(t)-q(t)| 
    \geq c_1 -\rho.
    $$
    Additionally, let $\gamma$ be the line segment from $w$ to $rw$ where $r<1$ and $w \in \mathbb{C}$ such that $|w|=1$. Then by the maximum modulus principle, 
    \[
    |\tilde P(rw) - \tilde P(w)|
    =\left|\int_{\gamma} \tilde P'(z)\, dz \right|
    \leq (1-r) \max_{z\in \overline \D} |\tilde P'(z)| \leq (1-r) \max_{|z|=1} |\tilde P'(z)|.
    \]
    Thus, by Bernstein's inequality, 
    \[
    \frac{|\tilde P(rw) - \tilde P(w)|}{1-r} \leq \max_{|z|=1} |\tilde P'(z)| \leq (2m-2) \max_{|z|=1} |\tilde P(z)| \leq 2m-2.
    \]
    Therefore if $z=rw$ with $r<1$ and $w\in \partial\D$, 
    \begin{align*}
        |\tilde P(rw)| &\geq |\tilde P(w)| - |\tilde P(rw) - \tilde P(w)| 
        = |\tilde q(\arg(w))| - |\tilde P(rw) - \tilde P(w)| \\
        &\geq (c_1-\rho) - (1-r) (2m-2) 
        \geq (c_1-\rho) - \frac \tau m (2m-2) 
        \geq c_1 - \epsilon_4 - 2\tau_2 
    \end{align*}
    Note $c_1$ only depends on $m_0$ and $\alpha_2$. Then $\epsilon_4$ and $\tau_2$ can be chosen appropriately depending only on $c_1$, hence depending only on $m_0$ and $\alpha_2$, so that this is strictly greater than zero. 
    
    For $z=r^{-1}w$, since $\tilde Q(rw) = \overline{\tilde Q(r^{-1}w)}$ due to \cref{obs:3}, we get 
    \[
    |\tilde P(r^{-1}w)|=(r^{-1})^{m-1}|\tilde Q(r^{-1}w)|=\frac 1 {r^{m-1}}|\tilde Q(rw)|=\frac 1 {r^{2m-2}} |\tilde P(rw)|>0,
    \]
    where the final inequality follows from the first case of $z=rw$ for $r< 1$. 
\end{proof}

\subsection{Proof of \cref{thm:step2}}
\label{proof:step2}

\begin{proof}
    Let $m_0\geq 100$ and $\beta\geq 4$ for now and their values will be made large enough later. Throughout, we assume that $m\geq m_0$, $\{\theta_k\}_{k=1}^s\subset\T$, and $\tilde\bfU$ are arbitrary such that $\Delta \geq 2\pi \beta/m$ and $\rho:=\|\bfU\bfU^*-\tilde\bfU\tilde\bfU^*\|_2\leq \epsilon$ where $\epsilon$ will be made sufficiently small later. 

        Let $R,\epsilon_1$ be the absolute constants in \cref{thm:step1} and $\alpha,\epsilon_2$ be the absolute constants in \cref{lem:coarseroots}. For $\epsilon\leq \min\{\epsilon_1,\epsilon_2\}$, the Root-MUSIC polynomial $\tilde P$ only has two roots with reflection symmetry in both $D(w_k,R\rho/m)$ and $D(w_k,\alpha/m)$. 
    
        Make $m_0$ even larger if necessary and let $\tau_1,\alpha_1,\epsilon_3$ be the quantities in \cref{obs:5}, which are absolute constants since they only depend on the fixed absolute constants $\alpha, R$. We use the shorthand notation $W_k:=W(w_k,1-\tau_1/m, \alpha_1/m)$. Making $\epsilon$ even smaller if necessary so that $\epsilon\leq \epsilon_3$, by the lemma, 
        $$
        D(w_k,R\rho/m)
        \subset D(w_k,R\epsilon/m)
        \subset W_k
        \subset D(w_k,\alpha/m).
        $$
        Hence, $\tilde P$ has exactly two roots with reflection symmetry in the wedge $W_k$ for all $k=1,\dots,s$. 
    
        For the same $m_0$, set $\alpha_2:=\alpha_1$, make $\beta$ even larger if necessary, and let $\epsilon_4,\tau_2$ be quantities in \cref{lem:far2}. Note $\epsilon_4,\tau_2$ are absolute constants since they only depend on the fixed absolute constants $m_0$ and $\alpha$. By making $\epsilon$ smaller if necessary so that $\epsilon\leq \epsilon_4$, then the lemma ensures that $|\tilde P(z)| >0$ for all $z\in \C$ such that $1-\tau_2/m \leq |z|\leq (1-\tau_2/m)^{-1}$ and $\min_k d(\arg(z),\theta_k)\geq \alpha_2/m$.  
            
    We are ready to complete the proof. Set $\tau := \min\{\tau_1,\tau_2\}$, which is an absolute constant. Next, make $\epsilon$ even smaller if necessary depending on the absolute constant $\tau$ such that 
    $$
    D(w_k,R\rho/m)
    \subset D(w_k,R\epsilon/m)
    \subset \big\{z\in \C \colon 1-\tau_2/m \leq |z|\leq (1-\tau_2/m)^{-1}\big\}. 
    $$
    This completes the proof. 
\end{proof}

\subsection{Proof of \cref{lem:stochasticWedin}}
\label{proof:stochasticWedin}

\begin{proof}
    We first transform the model into the setting considered in \cite[Theorem 6]{dirksen2025subspace}, which applies to the right singular space and assumes the noise has entries with unit variance. Since $\tilde \bfY=\bfPhi \bfA+\bfN$, this is equivalent to 
    $$
    \frac 1 {\sigma} \tilde \bfY^*  = \frac 1 {\sigma} \bfA^* \bfPhi^* + \frac 1 {\sigma} \bfN^* \in \C^{n\times m}.
    $$
    Also note that the leading $s$ right singular vectors of $\tilde\bfY^*$ is precisely $\tilde \bfU$ defined by \eqref{eq:defUtilde2}. Additionally, the random vector $\sigma^{-1} \bfeta_\ell$ has entries which are of unit variance and is $K$-subgaussian. In the noiseless case, the leading $s$ right singular vectors is $\bfU$. Next, we recall that $\sigma_s(\bfPhi)\gtrsim \sqrt m$ when $m\Delta \geq 8\pi$ from \cite{aubel2019vandermonde}. Then
    $$
    \sigma_s^2\left( \bfA^* \bfPhi^* \right) 
    \geq \sigma_s^2(\bfPhi) \sigma_s^2\left( \bfA \right) 
    = n \sigma_s^2(\bfPhi) \lambda_s(\bfSigma_{\bfa})
    \geq C m n \lambda_s(\bfSigma_{\bfa}).
    $$
    
    We are in position to apply \cite[Theorem 6]{dirksen2025subspace}, which tells us there are constants $c_K,\alpha_K$ that only depend on $K$ such that for all $\alpha \geq \alpha_K$, with probability at least $1-e^{-c_\tau\alpha m}$,
    \begin{align*}
    \left\|\bfU\bfU^*- \tilde \bfU \tilde \bfU^* \right\|_2^2
    &\lesssim \frac{ \alpha m (\sigma_s^2\left( \sigma^{-1} \bfA^* \bfPhi^* \right) + n)} {\sigma_s^4\left( \sigma^{-1} \bfA^* \bfPhi^* \right)}
    = \frac{\alpha m \sigma^2}{\sigma_s^2\left( \bfA^* \bfPhi^* \right)} + \frac{\alpha m n \sigma^4}{\sigma_s^4\left( \bfA^* \bfPhi^* \right)}.
    \end{align*}
    From here, we use the previous inequalities and the assumption $m\lambda_s(\bfSigma_{\bfa})\geq \sigma^2$ to obtain
    \begin{align*}
    \left\|\bfU\bfU^*- \tilde \bfU \tilde \bfU^*\right\|_2^2
    &\lesssim \frac {\alpha\sigma^2 } {n\lambda_s(\bfSigma_{\bfa})} + \frac {\alpha\sigma^4} {m n \lambda_s^2(\bfSigma_{\bfa})}
    \lesssim \frac {\alpha\sigma^2 } {n\lambda_s(\bfSigma_{\bfa})}.
    \end{align*}
    This completes the proof. 
\end{proof}

\subsection{Proofs of observations}
\label{proof:observations}

\begin{proof}[Proof of \cref{obs:3}]
	By \cref{obs:1}, the only zeros of $Q$ on $\partial \D$ are $\{e^{i\theta_k}\}_{k=1}^s$. Since $Q$ is a nonnegative trigonometric polynomial on the unit circle of degree at most $m-1$, by the Fej\'er-Riesz theorem, there is an algebraic polynomial $B$ of degree at most $m-1$ such that factorization \eqref{eq:FR} holds. While $B$ is not unique, we can choose $B$ with a particular property. Note that if $w$ is a root of $B$, then $1/\overline w$ is a zero of $z\mapsto \overline{B(1/\overline z)}$. Hence, we can select $B$ so that its zeros all lie in $\overline \D$, while all zeros of $\overline{B(1/\overline z)}$ lie in the complement of $\D$. 
\end{proof}

\begin{proof}[Proof of \cref{obs:4}]
	Since $\tilde Q$ is nonnegative on the unit circle, by the Reisz-F\'ejer theorem, there is an algebraic polynomial $\tilde B$ of degree at most $m-1$ such that 
	\begin{equation}
		\label{eq:FR2}
		\tilde Q(z)=\tilde B(z)\overline{\tilde B(1/\overline z)}. 
	\end{equation}
	Again, we select $\tilde B$ so that its roots are in $\overline \D$. 
\end{proof}

\begin{proof}[Proof of \cref{obs:P}]
    Recall the second formula for $\tilde Q$ in equation \eqref{eq:defQtilde1}. This yields 
    $$
    \tilde P(z)
    = z^{m-1} \tilde Q(z)
    = z^{m-1} - z^{m-1} \bfphi(1/\overline z)^* \tilde \bfU \tilde \bfU^* \bfphi(z).
    $$
    This formula is still awkward to work with because $1/\overline z$ is not complex differentiable on $\C \setminus \{0\}$, so one cannot directly compute $\tilde P'(z)$ through the chain rule. Recalling definition \eqref{eq:defpsi}, we see that
    $$
    z^{m-1} \overline{\bfphi(1/\overline z)}=z^{m-1}\bfphi(1/z)=\bfpsi(z) \forallspace z\in \C\setminus \{0\}.
    $$
    Combining the two previous formulas shows that formula \eqref{eq:Ptilde1} holds for all $z\in \C\setminus \{0\}$ (but since $P$ is a polynomial, hence continuous at $z=0$, this formula must also hold for $z=0$).
    
    When $\tilde P$ is written in form \eqref{eq:Ptilde1}, it is immediately clear that $\tilde P$ is complex differentiable on $\C$ due to $\bfpsi$ and $\bfphi$ being complex differentiable everywhere. We immediately get formula \eqref{eq:Ptilde2} from chain and product rules. 
\end{proof}

\section*{Acknowledgments} 
HH is partially supported by NSF-DMS Award \#2309602. WL is partially supported by NSF-DMS Award \#2309602 and a Cycle 56 PSC-CUNY award. 

\bibliographystyle{plain}
\bibliography{rootMUSICbib}

\end{document}